\newtheorem{theorem}{Theorem}
\newtheorem{prop}{Proposition}
\newtheorem{corollary}{Corollary}
\newtheorem{example}{Example}
\newtheorem{lemma}{Lemma}
\theoremstyle{remark}
\newtheorem{remark}{Remark}
\providecommand{\keywords}[1]
{
  \small	
  \textbf{\textit{Keywords---}} #1
}
\title{Adaptive random neighbourhood informed Markov chain Monte Carlo for high-dimensional Bayesian variable selection}
\author{
Xitong Liang,
Samuel Livingstone \& Jim Griffin
\\ 
\small \textit{Department of Statistical Science, University College London, UK}
\\
\small \texttt{xitong.liang.18@ucl.ac.uk}; \texttt{samuel.livingstone@ucl.ac.uk}; \texttt{j.griffin@ucl.ac.uk}
}
\begin{document}
\maketitle

\begin{abstract} 
We introduce a framework for efficient Markov Chain Monte Carlo (MCMC) algorithms targeting discrete-valued high-dimensional distributions, such as posterior distributions in Bayesian variable selection (BVS) problems. We show that many recently introduced algorithms, such as the locally informed sampler of \cite{zanella2020informed} and the Adaptively Scaled Individual Adaptation sampler (ASI) of \cite{griffin2021search}, can be viewed as particular cases within the framework. We then describe a novel algorithm, the \textit{Adaptive Random Neighbourhood Informed} sampler (ARNI), by combining ideas from both of these existing approaches.  We show using several examples of both real and simulated datasets that a computationally efficient point-wise implementation (PARNI) leads to relatively more reliable inferences on a range of variable selection problems, particularly in the very large $p$ setting. 
\end{abstract}

\noindent \keywords{Bayesian computation; variable selection;  spike-and-slab priors; Markov chain Monte Carlo; random neighbourhood samplers; locally informed Metropolis-Hastings schemes}

\section{Introduction}
\label{sec:introduction}

Despite their long history, linear regression models remain a key building block of many present-day statistical analyses. In the modern setting, practitioners not only show interest in making good predictions but also intend to investigate underlying low-dimensional structure based on the belief that only a small subset of predictors play a crucial role in predicting the response. These problems can be addressed by \emph{variable selection}. A variable selection method is an automatic procedure that selects the best (small) subset of covariates that explains most of the variation in the response \citep{chipman2001practical}. Frequentist approaches focus on model comparisons through information criteria or point estimates, using e.g. maximum penalised likelihood under sparsity assumptions \citep{hastie2015statistical}. Alternatively the Bayesian approach can be taken by imposing an appropriate prior on all possible models and computing the posterior.

We consider Bayesian variable selection (BVS) with spike-and-slab priors \citep{mitchell1988bayesian}, which lead to natural uncertainty measures such as posterior model probabilities and marginal posterior variable inclusion probabilities.  For a linear regression model with $p$ candidate covariates, it has been shown that spike-and-slab priors often lead to \emph{posterior consistency} in the sense that the posterior collapses to a Dirac measure on the true model as more observations are gathered \citep{fernandez2001benchmark,liang2008mixtures,yang2016computational}, even in high-dimensional setting where $p$ grows with $n$ \citep{shang2011consistency,narisetty2014bayesian}.  Another approach is to employ continuous shrinkage priors \citep[\textit {e.g.}][]{polson2010shrink,griffin2021bayesian}, which only give posterior inference on regression coefficients but can result in a more computationally tractable posterior distribution.

The exact posterior distribution when using a spike-and-slab prior is challenging to compute, and when $p>30$ Markov chain Monte Carlo (MCMC) algorithms are typically used to estimate posterior summaries of interest \citep{george1993variable,chipman2001practical}. \cite{garcia2013sampling} discuss the properties of classical methods such as the Gibbs sampler in the context of variable selection. Other MCMC algorithms such as $\text{MC}^3$ \citep{madigan1995bayesian} and the Add-Delete-Swap method \citep{brown1998bayesian} are also well-studied. \cite{fernandez2001benchmark}, for instance, use the $\text{MC}^3$ sampler to explore the benchmark hyperparameter in a $g$-prior. 
\cite{yang2016computational} use Add-Delete-Swap to provide the conditions for \textit{rapid mixing} in the sense that the mixing time grows at
most polynomially in $p$. These approaches do, however, suffer from an unexpectedly long mixing time and therefore slow convergence when $p$ is large. For this reason, alternative approaches have also been developed. \cite{hans2007shotgun} describe a novel \emph{Shotgun Stochastic Search} (SSS) algorithm to identify a subset of probable models and then approximate the posterior distribution by restricting calculations to this subset. \cite{papaspiliopoulos2017bayesian} propose an efficient deterministic iterative procedure of model search under block-diagonal design. \cite{ray2021variational} introduce a mean-field spike and slab variational Bayes approximation to the BVS posterior, and show that such an approximation converges to the sparse truth at the optimal rate and gives optimal prediction of the response vector under mild conditions.

MCMC methods for variable selection can be improved using adaptive proposals within a Metropolis--Hastings algorithm \citep{metropolis1953equation,hastings1970monte}. Adaptive MCMC is a sub-class of algorithms in which tuning parameters are automatically updated ``on the fly'' \citep[\textit{e.g.}][]{andrieu2008tutorial}. Several adaptive methods have been developed in the context of BVS. \cite{ji2013adaptive} suggest using a multivariate Gaussian proposal kernel for the regression coefficient in which the mean and covariance matrix are learned as the algorithm runs. \cite{lamnisos2009transdimensional} and \cite{lamnisos2013adaptive} develop adaptive Gibbs, $\text{MC}^3$ and Add-Delete-Swap samplers. More recently \cite{griffin2021search} developed the \textit{Adaptively-Scaled Individual Adaptation} sampler (ASI), which is able to adapt the importance of each individual predictor variable and propose multiple swaps per iteration in high-dimensional settings. The algorithm works by learning the posterior inclusion probabilities for each covariate adaptively in a Rao--Blackwellised manner, and using these probabilities to construct a Metropolis--Hasting proposal under the assumption of an orthogonal design matrix.

Recently \textit{informed} MCMC schemes have gained popularity for problems with discrete parameter spaces (having already achieved prominence in the continuous setting). Informed MCMC schemes are those in which the Metropolis--Hastings proposal exploits some information about the posterior distribution. Intuitively, the success of informed proposals relies on avoiding models with low posterior model probabilities \citep{zhou2021dimension}. \cite{titsias2017hamming} propose the Hamming ball sampler (HBS) in which models are proposed in proportion to their locally-truncated posterior probability within a Hamming ball neighbourhood. \cite{zanella2019scalable} consider a Tempered Gibbs sampler (TGS), which involves importance sampling and more frequently updates components with lower conditional distributions. A more general class of locally informed proposals is introduced by \cite{zanella2020informed}. These locally informed proposals can be obtained by weighting a base kernel using a \emph{balancing function}, which is a function of the posterior distribution that satisfies a certain functional property. The base kernel is typically concentrated on a neighbourhood of the current state, resulting in a proposal that is informed or \emph{balanced} using ``local'' information about the posterior. The author shows that a random walk proposal is asymptotically dominated by its locally balanced counterpart in the Peskun sense as dimensionality increases \citep{peskun1973optimum,tierney1998note}. It has also been shown by \cite{zhou2021dimension} that a dimension-free mixing time bound can be constructed for a properly-designed locally informed MCMC algorithm for BVS. For other developments concerning locally informed proposals, see \textit{e.g.} \cite{livingstone2019barker,gagnon2021informed,power2019accelerated}. 

Informed MCMC schemes often mix quickly and have good convergence properties, but the computation of each transition can be prohibitively expensive. For example, in Bayesian variable selection, informed proposals usually involve calculating multiple posterior model probabilities within a neighbourhood of the current model (such as all models within Hamming distance one). The size of this neighbourhood will be at least linear in $p$ and will tend to include large numbers of unimportant variables under standard sparsity assumptions.  In this paper we propose a solution to this problem by introducing a framework for constructing flexible and efficient MCMC algorithms based on locally balanced proposals within \textit{random} neighbourhoods. In large $p$ settings, this randomness can be exploited to control the size and the number of unimportant variables in generated neighbourhood.  This leads to  Markov chains with good convergence properties and controlled computational cost per iteration. To illustrate the power of the framework we develop a new MCMC algorithm for Bayesian variable selection in linear regression, namely the \textit{Point-wise Adaptive Random Neighbourhood Informed} (PARNI) sampler. We illustrate with extensive empirical results on both real and simulated datasets that the PARNI sampler yields good estimates for posterior quantities of interest and performs particularly well for well-known large-$p$ examples such as the PCR ($p=22,575$) and SNP ($p=79,748$) datasets. Random neighbourhoods have also been considered previously in the context of stochastic search \citep{chen2016paired}.

The rest of this paper is structured as follows. In Section \ref{sec:background}, we review BVS for the linear model along with prior specification. We also briefly describe both the ASI scheme of \cite{griffin2021search} and the locally informed method of \cite{zanella2020informed}. In Section \ref{sec:random_neighbourhood_samplers}, we characterise a construction of random neighbourhood proposals  and illustrate that locally informed proposal and the ASI scheme fall in this framework. Section \ref{sec:adaptive_informed} presents the construction of adaptive random neighbourhood and informed samplers. Following this structure, we present the ARNI and PARNI samplers. In addition, we establish both the ergodicity and a strong law of large numbers for the PARNI algorithm. We implement the PARNI sampler in Section \ref{sec:numerical_studies} on both simulated and real data. Comparisons between the PARNI sampler and other state-of-the-art MCMC algorithms are carried out to showcase its capacity and efficiency. In Section \ref{sec:discussion_future} we discuss limitations and possible future work. Detailed explanations and proofs are provided in the supplement.

\section{Background}
\label{sec:background}

\subsection{Bayesian variable selection for the linear regression model}

Consider a dataset $\{(y_i,x_{i1},...,x_{ip})\}_{i=1}^n$, where the vector $y = (y_1,...,y_n) \in \mathbb{R}^n$ is called the response variable and each $x_j = (x_{1j},...,x_{nj})$ is one of $p$ predictor variables or covariates.  The variable selection problem is concerned with finding the best $q \ll p$ covariates that are most associated with the response.  Assuming that each regression includes an intercept, then there are $2^p$ possible models that can be formulated to predict the response. We refer to each model as $M_{\gamma}$ where the models are indexed by the indicator variable $\gamma = (\gamma_1, \dots, \gamma_p) \in \Gamma = \{0,1\}^p$, where $\gamma_j = 1$ if the $j$-th variable is included in model $M_\gamma$ and $\gamma_j = 0$ otherwise. We refer to $\Gamma$ as model space and let $p_\gamma:= \sum_j \gamma_j$. The model $M_\gamma$ associated with $\gamma$ is then
\begin{align}
    y = \alpha \textbf{1}_n + X_\gamma \beta_\gamma + \epsilon
\end{align}
where $\epsilon \sim N_n(0, \sigma^2 I_n)$, $y$ is an $n$-dimensional response vector, $X_\gamma$ is an $(n \times p_\gamma)$ design matrix which consists of the ``active'' variables in $\gamma$ (those for which $\gamma_j = 1$), $\alpha$ is an intercept term and $\beta_\gamma \in \mathbb{R}^{p_\gamma}$. In the Bayesian framework, we consider a commonly-used conjugate prior specification
\begin{align*}
    p(\alpha)  \propto 1, \quad
    \beta_\gamma|\gamma, \sigma^2  \sim N(0, g \sigma^2 V_\gamma), \quad
    p(\sigma^2) \propto \sigma^{-2}, \quad
    p(\gamma)  = h^{p_{\gamma}} (1-h)^{p-p_{\gamma}}.
\end{align*}
For simplicity, we can remove the intercept term $\alpha$ by centering $y$ and $X_j$ for all $j$. \cite{chipman2001practical} highlight that this method can be motivated from a formal Bayesian perspective by integrating out the coefficients corresponding to those fixed regressors with respect to an improper uniform prior. The covariance matrix $V_\gamma$ is often chosen as $(X_\gamma^T X_\gamma)^{-1}$ (a $g$-prior) or identity matrix $I_{p_\gamma}$ (an independent prior). For both of these choices, the marginal likelihood $p(y|\gamma)$ is analytically tractable. Suitable values for the global scale parameter $g$ are suggested in \cite{fernandez2001benchmark}. It can also be driven by a hyperprior, yielding a fully Bayesian model (see \cite{liang2008mixtures} for details). The hyperparameter $h \in (0,1)$ is the prior probability that each variable is included in the model. \cite{steel2007effect} suggest against using fixed $h$ unless strong information is given, and instead placing a hyperprior on it such as a Beta prior $h \sim \text{Beta}(a,b)$, leading to a Beta-binomial prior on the model size. In the following sections, we will develop efficient sampling schemes targeting the posterior distribution $\pi(\gamma) \propto p(y|\gamma)p(\gamma)$.

\subsection{Adaptively Scaled Individual Adaptation algorithm}

\cite{griffin2021search} introduce a scalable adaptive MCMC algorithm targeting high-dimensional BVS posterior distributions together with a method that automatically updates the tuning parameters. They consider the class of proposal kernels
\begin{align}
    q_{\eta} (\gamma, \gamma^\prime) = \prod_{j=1}^p q_{\eta, j}(\gamma_j, \gamma_j^\prime) \label{mat:asi_prop}
\end{align}
where $\eta = (A, D) = (A_1, \dots, A_p, D_1, \dots, D_p)$, $q_{\eta, j}(\gamma_j=0, \gamma_j^\prime=1) = A_j$ and $q_{\eta, j}(\gamma_j = 1, \gamma_j^\prime = 0) = D_j$, with Metropolis-Hastings acceptance probability
\begin{align}
    \alpha_\eta(\gamma, \gamma^\prime) = \left\{1, \frac{\pi(\gamma^\prime)q_{\eta}(\gamma^\prime, \gamma)}{\pi(\gamma)q_{\eta}(\gamma, \gamma^\prime)}\right\}. \label{mat:asi_accrate}
\end{align}
This proposal mainly benefits from two aspects. Firstly, the flexibility offered by $2p$ tuning parameters allows the proposal to be tailored to the data. Secondly, this form of proposal also allows multiple variables to be added or deleted from the model in a single iteration, which in turn allows the algorithm to make large jumps in model space.

\begin{algorithm}[t]
\caption{Adaptively Scaled Individual Adaptation (ASI)}
\label{alg:ASI}

\begin{algorithmic}
\For{$i = 1$ to $i = N$}
\For{$l = 1$ to $l = L$}
\State Sample $\gamma^{l,\prime} \sim q_{\zeta^{(i)}  \eta^{(i)}}(\gamma^{l,(i)}, \cdot)$ as in (\ref{mat:asi_prop}) and $U \sim U(0,1)$;
\State If $U < \alpha_{\zeta^{(i)}  \eta^{(i)}}(\gamma^{l,(i)}, \gamma^{l,\prime})$ as in (\ref{mat:asi_accrate}), then $\gamma^{l,(i+1)} = \gamma^{l,\prime}$, else $\gamma^{l,(i+1)} = \gamma^{l,(i)}$;
\EndFor
\State Update $\hat{\pi}^{(i+1)}_j$ as in (\ref{mat:rb_pips}), set $\tilde{\pi}^{(i+1)}_j = \pi_0 + (1-2\pi_0)\hat{\pi}^{(i+1)}_j$ for $j=1, \dots, p$;
\State Update $\zeta^{(i+1)}$ as in (\ref{mat:update_zeta});
\State Update $A^{(i+1)}_j = \min \left\{1, \tilde{\pi}^{(i+1)}_j/(1-\tilde{\pi}^{(i+1)}_j) \right\}$;
\State Update $D^{(i+1)}_j = \min \left\{1, (1-\tilde{\pi}^{(i+1)}_j)/\tilde{\pi}^{(i+1)}_j \right\}$;
\State Set $\eta^{(i+1)} = (A^{(i+1)}_j, D^{(i+1)}_j)$
\EndFor
\end{algorithmic}
\end{algorithm}

\cite{griffin2021search} suggest an optimal choice of $\eta = (A, D)$ in Peskun sense while assuming that all variables are independent. If $\pi_j$ denotes the posterior inclusion probability of the $j$-th regressor, the optimal choice of $\eta^{\text{opt}} = (A^{\text{opt}}, D^{\text{opt}})$ is given as
\begin{align}
    A^{\text{opt}}_j = \min\left\{1, \frac{\pi_j}{1-\pi_j}\right\}, \quad
    D^{\text{opt}}_j = \min\left\{1, \frac{1-\pi_j}{\pi_j}\right\}. \label{mat:ASI_AD_opt}
\end{align}
The independence assumption is usually violated due to the correlation between regressors and therefore a scaled proposal with parameters $\eta = \zeta \eta^{\text{opt}}$ for a scaling parameter $\zeta \in (0,1)$ is suggested. This scaling parameter $\zeta$ controls the number of variables that differ between the current state $\gamma$ and the proposed state $\gamma'$. Smaller values of $\eta$ can be used to avoid overly ambitious moves with low probabilities of acceptance and so control the average acceptance rate. They also suggest multiple chain acceleration with common adaptive parameters since running multiple independent chains with shared adaptive parameters can facilitate the convergence of the adaptive parameters \citep{craiu2009learn}. This phenomenon is guaranteed in their simulation studies where the schemes with 25 multiple chains outperform the schemes only with 5 multiple chains in terms of relative efficiency especially for large $p$ datasets. Suppose $L$ chains are used and let $\gamma^{l,(i)}$ and $\gamma^{l,\prime}$ denote the current state and proposal respectively for the $l$-th chain. The tuning parameters of the proposal are updated on the fly using 
a Rao-Blackwellised estimate of the posterior inclusion probability of the $j$-th regressor which, at the $N$-th iteration, is
\begin{align}
\hat{\pi}^{(N)}_j = \frac{1}{NL} \sum_{i=1}^N \sum_{l=1}^L \frac{\pi(\gamma_j = 1, \gamma^{l,(i)}_{-j}|y)}{\pi(\gamma_j = 1, \gamma^{l,(i)}_{-j}|y) + \pi(\gamma_j = 0, \gamma^{l,(i)}_{-j}|y)}. \label{mat:rb_pips}
\end{align}
The use of the Rao-Blackwellised estimates of the posterior inclusion probabilities can swiftly distinguish unimportant variables. \cite{griffin2021search} show how this can be calculated in $\mathcal{O}(p)$ operations which leads to a scalable MCMC scheme in large-$p$ BVS problems. At the $i$-th iteration, the proposal parameters are $\eta = \zeta^{(i)} \times \eta^{(i)}$ where $\eta^{(i)} = (A^{(i)}, D^{(i)})$,
\begin{align}
    A^{(i)}_j = \min\left\{1, \frac{\hat{\pi}^{(i)}_j}{1-\hat{\pi}^{(i)}_j}\right\}, \quad
    D^{(i)}_j = \min\left\{1, \frac{1-\hat{\pi}^{(i)}_j}{\hat{\pi}^{(i)}_j}\right\} \label{mat:ASI_paraform}
\end{align}
and 
the scaling parameter $\zeta^{(i)}$ is tuned using the Robbins-Monro scheme
\begin{align}
    \text{logit}_\epsilon \zeta^{(i+1)} = \text{logit}_\epsilon \zeta^{(i)} + \frac{\phi_i}{L} \sum_{l=1}^L (\alpha_{\zeta^{(i)} \eta^{(i)}}(\gamma^{l,(i)},\gamma^{l,\prime}) - \tau) \label{mat:update_zeta}
\end{align}
for a target rate of acceptance
$\tau$  and the mapping $\text{logit}_\epsilon:(\epsilon, 1-\epsilon) \to \mathbb{R}$ is a modified logistic function (or logit function) defined by 
\begin{align}
    \text{logit}_\epsilon(x) = \log(x-\epsilon) - \log(1-x-\epsilon)  \label{mat:logit_eps}
\end{align}
for some small $\epsilon \in (0, 1/2)$. The full description of the sampler is given in Algorithm \ref{alg:ASI}. 
 The resulting algorithm is called \emph{Adaptively Scaled Individual Adaptation} (ASI).
 \cite{griffin2021search}  establish the $\pi$-ergodicity and a strong law of large numbers for the ASI sampler.

\subsection{Locally informed proposals}
\label{subsec:locally_informed}

In continuous sample space, MCMC algorithms often utilise gradients of the target distribution, \textit{e.g.} the \emph{Metropolis-adjusted Langevin algorithm} \citep{grenander1994representations} and \emph{Hamiltonian Monte Carlo} \citep{duane1987hybrid}. These methods are defined on continuous spaces but \cite{zanella2020informed} develop a class of locally informed proposals as an analog for discrete spaces. The approach assumes that we can define a random walk Metropolis proposal kernel $Q$ on a neighbourhood $N \subset \Gamma$ with mass function $q$. \cite{zanella2020informed} considers a class of pointwise informed proposals $Q_g$ defined as follows
\begin{align}
    q_g(\gamma, \gamma^\prime) =
    \begin{cases}
    \frac{g\left(\frac{\pi(\gamma^\prime)}{\pi(\gamma)}\right)q(\gamma, \gamma^\prime)}{Z_g(\gamma)}, \quad & \gamma \in N \\
    0, \quad & \text{otherwise}
    \end{cases}
\end{align}
where $g:[0,\infty)\to[0,\infty)$ is a continuous function and $Z_g(\gamma)$ is a normalising constant such that
\begin{align}
    Z_g(\gamma) = \sum_{\gamma^\prime \in N} g\left(\frac{\pi(\gamma^\prime)}{\pi(\gamma)}\right)q(\gamma, \gamma^\prime).
\end{align}
The choice of the function $g$ is crucial for the performance of $Q_g$ since it determines how the target distribution $\pi$ drives the proposal. \emph{Locally balanced proposals} are defined to be those proposals $Q_g$ which are approximately $\pi$-reversible if $Q$ only proposes local moves. \cite{zanella2020informed}  shows that this occurs if $g(t) = tg(1/t)$ for all $t > 0$ and calls these \emph{balancing functions}. The neighbourhoods are often chosen to be $N = \mathcal{H}_m(\gamma) := \{\gamma^\prime \in \Gamma| d_H(\gamma^\prime, \gamma) \leq m\}$ for which $d_H(\cdot,\cdot)$ denotes the measure of Hamming distance (\textit{i.e.} $d_H(\gamma, \gamma^\prime) = \sum_{j=1}^p |\gamma_j - \gamma_j^\prime|$) and the proposal kernel $Q$ would be a uniform distribution on the neighbourhood $N$. When $m$ is taken to be 1, $Q$ is identical to the $\text{MC}^3$ sampler.

Both analytical and empirical results suggest that a locally balanced proposal outperforms other alternatives such as the original non-informative kernel (when $g(t) = 1$) or the globally balanced kernel (when $g(t) = t$). Theorem 5 of \cite{zanella2020informed} shows that using a uniform based kernel on neighbourhood $\mathcal{H}_m(\gamma)$ will be asymptotically optimal, in terms of Peskun ordering, as the dimensionality goes to infinity under mild conditions.

\section{Random neighbourhood samplers}
\label{sec:random_neighbourhood_samplers}

We consider a framework for constructing Metropolis--Hastings proposals to sample from $\pi(\gamma)$ in which a new state is proposed within a \textit{neighbourhood} around the current state. The neighbourhoods can be \text{random}, and are generated using an auxiliary variable $k$ as a neighbourhood indicator. The auxiliary variable $k$ is a discrete random variable defined on a countable set $\mathcal{K}$ such that a neighbourhood $N = N(\gamma, k)$ is generated as the same probability of generating $k$ (\textit{i.e.} $p(N|\gamma) = p(k|\gamma)$). Suppose $\gamma$ is the current state and $Q_k$ is a Metropolis-Hastings proposal kernel (conditioned on $k$) with mass function $q_k$. A new state $\gamma^\prime$ is drawn from kernel $Q_k$ after a $k$ is generated. In updating $k$ at each iteration, we usually consider sending $k$ to a new state $k^\prime \in \mathcal{K}$ depends on a bijection $\rho:\mathcal{K} \to \mathcal{K}$ which is an involution (\textit{i.e.} a self-inverse function which satisfies $\rho(\rho(k)) = k$). We call an MCMC algorithm that uses the above construction to generate Metropolis--Hastings proposals a \emph{random neighbourhood sampler}. The followings are some examples of random neighbourhood samplers.

\begin{example}[Samplers with non-stochastic neighbourhoods]

In fact, samplers with non-stochastic neighbourhoods are also random neighbourhood samplers where the specific neighbourhoods are generated with constant probability of 1 at each state $\gamma$. In such cases, the choices of $k$ and $\rho$ can be arbitrary. For instance, the $\text{MC}^3$ sampler can be viewed as a random neighbourhood sampler for which the neighbourhood $N$ consists of models that are $1$-Hamming distance from $\gamma$. In particular, the locally informed samplers of \cite{zanella2020informed} also belong to this class with neighbourhood $N$ as defined in Section \ref{subsec:locally_informed}. 
\end{example}

\begin{example}[Add-Delete-Swap sampler]

In each iteration of an Add-Delete-Swap sampler, a strategy from ``addition'', ``deletion'' and ``swap'' is uniformly chosen which implies that the auxiliary variable $k$ is uniformed distributed over the sample space $\mathcal{K} = \{\text{``addition''}, \text{``deletion''}, \text{``swap''}\}$ and therefore construct a neighbourhood $N(\gamma, k)$ as in \cite{yang2016computational}. A new state $\gamma^\prime$ is uniformly proposed from $N(\gamma, k)$. The corresponding mapping $\rho$ is then a function that sends the auxiliary variable to an opposite strategy, \textit{e.g.} it sends ``addition'' to ``deletion'' and vice versa. Note that the opposite strategy of ``swap'' is itself.

\end{example}

\begin{example}[Hamming ball sampler]

A Hamming ball sampler with radius $m$ is described by \cite{titsias2017hamming}. This algorithm is literally evolved from a neighbourhood construction $\mathcal{H}_m(\gamma) \subset \Gamma$ which produces a subset of states at most $m$-Hamming distance away from $\gamma$. The auxiliary variable $k$ is equivalent to $U$ in their design in which $k$ is uniformly distributed over the set $\mathcal{K} = \mathcal{H}_m(\gamma)$ and a neighbourhood $N(\gamma, k) = \mathcal{H}_m(k)$ is used to draw new state. The Hamming ball sampler proposes a new state according to the truncated posterior model probability in the neighbourhood $N(\gamma, k)$. In this scheme, the mapping $\rho$ is an identity function which indicates the same auxiliary variable is used in reversed moves.
\end{example}

The full update of a random neighbourhood sampler uses the three stages below:
\begin{enumerate}[label= (\roman*)]
    \item (\emph{Neighbourhood construction}) Sample a neighbourhood indicator $k$ from $p(\cdot|\gamma)$, and construct the corresponding neighbourhood $N(\gamma, k)$;
    \item (\emph{Within-neighbourhood proposal}) Propose a new model $\gamma^\prime$ in $N(\gamma, k)$ according to $Q_k(\gamma, \cdot)$;
    \item (\emph{Accept/reject step}) Calculate the probability of the reverse move, $q_{\rho(k)}(\gamma^\prime, \gamma)$, by constructing the reverse neighbourhood $N(\gamma^\prime, \rho(k))$. Move to the new state $\gamma^\prime$ with probability $\alpha_k(\gamma, \gamma^\prime)$ where $\alpha_k(\gamma, \gamma^\prime)$ is the Metropolis-Hastings acceptance probability
    \begin{align}
        \alpha_k(\gamma, \gamma^\prime) = \min\left\{1, \frac{\pi(\gamma^\prime)p(\rho(k)|\gamma^\prime)q_{\rho(k)}(\gamma^\prime, \gamma)}{\pi(\gamma)p(k|\gamma)q_k(\gamma, \gamma^\prime)}\right\}. \label{mat:RN_accrate}
    \end{align}
\end{enumerate}

Throughout this article, we refer to the above three stages as \emph{neighbourhood construction}, \emph{within-neighbourhood proposal} and \emph{accept/reject step} respectively. To preserve the reversibility of the chain, it is better to design a neighbourhood generation scheme where the law
\begin{align}
    \gamma^\prime \in N(\gamma, k) \iff \gamma \in N(\gamma^\prime, \rho(k)) \label{con:neighbourhood_construction_1}
\end{align}
holds for any $\gamma$, $\gamma^\prime$ and $k$. Upon this law, we assume that the condition
\begin{align}
    p(k|\gamma)q_k(\gamma, \gamma^\prime) > 0  \iff p(\rho(k)|\gamma)q_{\rho(k)}(\gamma^\prime, \gamma) > 0 \label{con:neighbourhood_construction_2}
\end{align}
is satisfied. This assumption is a generalisation of the paired-move strategy in \cite{chen2016paired} and it results in the correctness and reversibility of such a scheme through the following proposition.

\begin{prop} \label{prop:RN_correct}
A random neighbourhood sampler is $\pi$-reversible provided that condition (\ref{con:neighbourhood_construction_2}) holds, $p(k|\gamma)$ is a valid probability measure on $\mathcal{K}$ and $q_k(\gamma, \gamma^\prime)$ is a valid probability measure on neighbourhood $N(\gamma, k)$ for all $\gamma \in \Gamma$ and $k \in \mathcal{K}$.
\end{prop}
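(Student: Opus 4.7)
The plan is to establish detailed balance, $\pi(\gamma) P(\gamma,\gamma') = \pi(\gamma') P(\gamma',\gamma)$, where $P$ denotes the transition kernel of the random neighbourhood sampler. Standard Markov chain theory then gives $\pi$-reversibility and hence $\pi$-invariance. The key technical tool is the fact that $\rho$ is a bijection on $\mathcal{K}$ (being an involution), which will allow me to re-index a sum over the auxiliary variable.

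First I would write the off-diagonal part of the transition kernel by marginalising over the neighbourhood indicator: for $\gamma \neq \gamma'$,
\begin{equation*}
    P(\gamma,\gamma') = \sum_{k \in \mathcal{K}} p(k|\gamma)\, q_k(\gamma,\gamma')\, \alpha_k(\gamma,\gamma').
\end{equation*}
The diagonal ``stay-put'' contribution cancels trivially in detailed balance, so it suffices to verify the identity term by term in the off-diagonal case. I would then argue pairwise: for each fixed $k$, the identity
\begin{equation*}
    \pi(\gamma)\, p(k|\gamma)\, q_k(\gamma,\gamma')\, \alpha_k(\gamma,\gamma') = \min\bigl\{\pi(\gamma)\, p(k|\gamma)\, q_k(\gamma,\gamma'),\ \pi(\gamma')\, p(\rho(k)|\gamma')\, q_{\rho(k)}(\gamma',\gamma)\bigr\}
\end{equation*}
holds by definition of $\alpha_k$ in (\ref{mat:RN_accrate}). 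Here condition (\ref{con:neighbourhood_construction_2}) is what ensures that the ratio inside $\alpha_k$ is well-defined (or vacuously zero) on both sides, so no $0/0$ indeterminacies arise; condition (\ref{con:neighbourhood_construction_1}) ensures that when $\gamma' \in N(\gamma,k)$ the quantity $q_{\rho(k)}(\gamma',\gamma)$ is at least well-posed because $\gamma \in N(\gamma', \rho(k))$.

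Next I would sum both sides over $k \in \mathcal{K}$, obtaining
\begin{equation*}
    \pi(\gamma)\, P(\gamma,\gamma') = \sum_{k \in \mathcal{K}} \min\bigl\{\pi(\gamma)\, p(k|\gamma)\, q_k(\gamma,\gamma'),\ \pi(\gamma')\, p(\rho(k)|\gamma')\, q_{\rho(k)}(\gamma',\gamma)\bigr\}.
\end{equation*}
Performing the change of index $k \mapsto \rho(k)$, which is a bijection of $\mathcal{K}$ onto itself because $\rho \circ \rho = \mathrm{id}$, together with the symmetry of $\min$, transforms the right-hand side into
\begin{equation*}
    \sum_{k \in \mathcal{K}} \min\bigl\{\pi(\gamma')\, p(k|\gamma')\, q_k(\gamma',\gamma),\ \pi(\gamma)\, p(\rho(k)|\gamma)\, q_{\rho(k)}(\gamma,\gamma')\bigr\} = \pi(\gamma')\, P(\gamma',\gamma),
\end{equation*}
where the last equality applies the same term-by-term identity used above but with $\gamma$ and $\gamma'$ swapped. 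This yields detailed balance.

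I do not anticipate a serious obstacle, since the structure of the argument is the usual Metropolis--Hastings detailed-balance verification lifted to include an auxiliary variable. The only point requiring care is the bookkeeping around conditions (\ref{con:neighbourhood_construction_1})--(\ref{con:neighbourhood_construction_2}): I need (\ref{con:neighbourhood_construction_2}) so that $\alpha_k(\gamma,\gamma')$ and $\alpha_{\rho(k)}(\gamma',\gamma)$ are simultaneously well-defined, and I need $\rho$ to be an involution (not merely a bijection) so that the involution-based re-indexing returns a sum against the same kernel $P(\gamma',\gamma)$ rather than some twisted version. The hypotheses that $p(\cdot|\gamma)$ and $q_k(\gamma,\cdot)$ are valid probability measures are used only to guarantee that the marginalisation over $k$ and over $\gamma'$ gives a bona fide transition kernel.
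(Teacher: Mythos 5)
Your proof is correct, but it takes a genuinely different route from the paper's. You verify detailed balance directly: you write $\pi(\gamma)P(\gamma,\gamma')$ as a sum over $k$ of terms of the form $\min\{\pi(\gamma)p(k|\gamma)q_k(\gamma,\gamma'),\ \pi(\gamma')p(\rho(k)|\gamma')q_{\rho(k)}(\gamma',\gamma)\}$ via the identity $a\min\{1,b/a\}=\min\{a,b\}$, and then re-index the sum by the involution $\rho$ --- the classical Metropolis--Hastings detailed-balance argument lifted to include the auxiliary variable. The paper instead embeds the sampler into the extended space of triples $(\gamma,\gamma',k)$ with mass function $\mu(\gamma,\gamma',k)=\pi(\gamma)p(k|\gamma)q_k(\gamma,\gamma')$, observes that the accept/reject step is a Metropolis move with the deterministic involutive proposal $\phi(\gamma,\gamma',k)=(\gamma',\gamma,\rho(k))$, and invokes Proposition 1 of \cite{andrieu2020general} to conclude $\pi$-reversibility of the marginal process on $\gamma$. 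Your argument is more elementary and self-contained, requiring nothing beyond the min identity and the bijectivity of $\rho$; the paper's approach buys brevity and generalises more cleanly, for instance to a continuous auxiliary variable $k$ where the same extended-space involution argument automatically produces the Jacobian correction mentioned in the remark following the proposition, whereas your discrete re-indexing of the sum would have to be reworked as a change of variables under an integral. Your bookkeeping around the degenerate cases is also sound: condition (\ref{con:neighbourhood_construction_2}) is precisely what guarantees that when $p(k|\gamma)q_k(\gamma,\gamma')=0$ the reverse-move factor vanishes simultaneously, so both sides of the term-by-term identity are zero rather than indeterminate, and the hypotheses that $p(\cdot|\gamma)$ and $q_k(\gamma,\cdot)$ are genuine probability measures are indeed only needed to make $P$ a bona fide transition kernel.
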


\begin{remark}
To generalise the framework of random neighbourhood samplers, it is possible to use a continuous auxiliary variable $k$. In such a case, the acceptance probability in (\ref{mat:RN_accrate}) should include the Jacobian term.
\end{remark}

We show in the next part that the ASI sampler is also random neighbourhood sampler but it takes a strategy unlike the locally balanced proposals. Instead of paying more attention to the within-neighbourhood proposal but using a naive neighbourhood construction, the ASI sampler more focuses on constructing sophisticated random neighbourhoods which is more likely to contain promising models and employs a random walk within-neighbourhood proposal.

\subsection{Another take on the ASI scheme}

It is not straightforward to observe that the ASI sampler is a random neighbourhood sampler, however we show below that in fact it can be. To do so, we introduce a random neighbourhood sampler, the \textit{Adaptive Random Neighbourhood} (ARN) sampler, and prove that the ARN and ASI samplers are equivalent if they share some common adaptive parameters. We also illustrate that the ARN sampler differs from the locally informed approach, instead it puts more efforts on neighbourhood construction but its within-neighbourhood proposal follows a random walk.

We consider a random neighbourhood sampler with algorithmic tuning parameter $\theta = (\xi \eta^{\text{opt}}, \omega) \in (\epsilon, 1-\epsilon)^{2p+1} := \Delta_\epsilon^{2p+1}$, where $\eta^{\text{opt}}$ is given in \eqref{mat:ASI_AD_opt}, and the tuning parameters $\xi$ and $\omega$ are used in the random neighbourhood construction and the  within-neighbourhood proposal respectively. In the random neighbourhood construction, the neighbourhood indicator variable $k = (k_1, \dots, k_p) \in \mathcal{K} = \{0,1\}^p$ is generated from the distribution
\begin{align}
    p^{\text{RN}}_{\xi \eta^{\text{opt}}}(k| \gamma) = & \prod_{j = 1}^p p^{\text{RN}}_{\xi \eta^{\text{opt}}, j}(k_j| \gamma_j) \label{mat:RN_prop1}
\end{align}
where $p^{\text{RN}}_{\xi \eta^{\text{opt}},j}(k_j=1|\gamma_j=0) = \xi A_j^{\text{opt}}$ 
and $p^{\text{RN}}_{\xi \eta^{\text{opt}},j}(k_j=1|\gamma_j=1) = \xi D_j^{\text{opt}}$.
This is equivalent to the ASI proposal in \eqref{mat:asi_prop} where $k_j=1$ if and only if $\gamma_j \neq \gamma_j^\prime$. A neighbourhood $N(\gamma, k)$ is obtained from $\gamma$ and $k$ for which $\gamma$ is the ``centre'' of $N(\gamma, k)$ and $k$ indicates the possible indices altered from $\gamma$. These tuning parameters $\xi$ and $\eta^{\text{opt}}$ are abortively updated on the fly. For any $\gamma^* \in N(\gamma, k)$, $k_j = 0$ implies that $\gamma^*_j = \gamma_j$. This identity can be used to state a formal definition of the neighbourhood $N(\gamma, k)$ as
\begin{align*}
N(\gamma, k) = \{ \gamma^* \in \Gamma |  \gamma_j = \gamma_j^*, ~ \forall k_j = 0  \}.
\end{align*}
The neighbourhood contains $2^{p_k}$ models where $p_k$ is number of 1s in $k$ (\textit{i.e.} $p_k := \sum_{j=1}^p k_j$). The parameter $\xi$ affects the $p_k$ and therefore controls neighbourhood size. So we call $\xi$ the neighbourhood scaling parameter.

The mapping $\rho$ is chosen to be an identity function. The within-neighbourhood proposal within this \emph{adaptive} random neighbourhood scheme is also based on the same proposal in \eqref{mat:asi_prop} over the neighbourhood $N(\gamma, k)$. It can be characterised as choosing the variables to be added or deleted from the model by thinning from within the set $\{j | ~ k_j = 1 \}$ with the thinning probability set to be $\omega \in (0,1)$. We refer to this parameter as  the unique \emph{within-neighbourhood proposal tuning parameter}. A larger value of $\omega$ increases the probability of proposing $\gamma^\prime$ further away from $\gamma$ in Hamming distance.
This can be written formally as the proposal in \eqref{mat:asi_prop}  with tuning parameter $\eta^{\text{THIN}} = (A^{\text{THIN}}, D^{\text{THIN}}) = (\omega k,\omega k)$, that is $A^{\text{THIN}}_j = D^{\text{THIN}}_j = \omega$ for $k_j = 1$ and $A^{\text{THIN}}_j = D^{\text{THIN}}_j = 0$ otherwise. The resulting proposal is termed as $q^{\text{THIN}}_{\omega, k}$ which is formulated as
\begin{align}
q_{\omega, k}^{\text{THIN}}(\gamma, \gamma^\prime) = \prod_{j=1}^p q_{\omega, k_j}^{\text{THIN}}(\gamma_j, \gamma^\prime_j), \label{mat:THIN_prop}
\end{align}
where $q_{\omega, 1}^{\text{THIN}}(\gamma_j, 1-\gamma_j) = \omega$ and $q_{\omega, 0}^{\text{THIN}}(\gamma_j, 1-\gamma_j) = 0$. The proposal $q_{\omega, k}^{\text{THIN}}$ is symmetric and only generates new states inside the neighbourhood $N(\gamma, k)$. This is because the probabilities of proposing flips on coordinates other than $j$ such that $k_j = 1$ are 0. The scheme is completed by accepting or rejecting the proposal using a standard Metropolis-Hastings acceptance probability 
\begin{align}
\alpha_{\theta, k}^{\text{ARN}} (\gamma, \gamma^\prime) = \left\{ 1,  \frac{\pi(\gamma^\prime) p^{\text{RN}}_{\xi \eta^{\text{opt}}}(k|\gamma^\prime) q^{\text{THIN}}_{\omega, k}(\gamma^\prime, \gamma) }{\pi(\gamma) p^{\text{RN}}_{\xi \eta^{\text{opt}}}(k|\gamma) q^{\text{THIN}}_{\omega, k}(\gamma, \gamma^\prime) }  \right\} \label{mat:ARN_acc}.
\end{align}

\begin{remark} \label{remark:thin_proposal}
An alternative formulation to (\ref{mat:THIN_prop}) in terms of Hamming distance between $\gamma$ and $\gamma^\prime$ is
\begin{align}
q_{\omega, k}^{\text{THIN}}(\gamma, \gamma^\prime) & = \omega^{d_H(\gamma, \gamma^\prime)} (1-\omega)^{p_k - d_H(\gamma, \gamma^\prime)} \mathbb{I}\{ \gamma^\prime \in N(\gamma, k)\} \nonumber \\
& = \left(\frac{\omega}{1-\omega}\right)^{d_H(\gamma, \gamma^\prime)} (1-\omega)^{p_k} \mathbb{I}\{ \gamma^\prime \in N(\gamma, k)\} \label{mat:ARN_prop3}
\end{align}
where $d_H(\gamma, \gamma^\prime)$ is the measure of Hamming distance between two models $\gamma$ and $\gamma^\prime$. 
\end{remark}

\begin{remark} \label{remark:thin_proposal_unif}
   When $\omega$ is chosen to be $1/2$, the within-neighbourhood proposal $q^{\text{THIN}}_{\omega = 1/2, k}$ is uniformly distributed over the local neighbourhood $N(\gamma, k)$.
\end{remark}

\begin{algorithm}[t]
\caption{Adaptive Random Neighbourhood sampler (ARN)} \label{alg:ARN}
\begin{algorithmic}
\State Sample $k \sim p^{\text{RN}}_{\xi \eta^{\text{opt}}}(\cdot|\gamma)$ as in (\ref{mat:RN_prop1});
\State Sample $\gamma^\prime \sim q_{\omega, k}^{\text{THIN}}(\gamma, \cdot)$ as in (\ref{mat:THIN_prop}) and $U \sim U(0,1)$;
\State if $U < \alpha_{\theta, k}^{\text{ARN}}(\gamma, \gamma^\prime)$ as in (\ref{mat:ARN_acc}), then accept $\gamma^\prime$ 
\end{algorithmic}
\end{algorithm}

Algorithm \ref{alg:ARN} describe how a new state $\gamma^\prime$ is proposed using the ARN scheme. We indicate the transition kernel by $p^{\text{ARN}}_{\theta}$ and 
the corresponding sub-transition kernel conditional on $k$ by $p^{\text{ARN}}_{\theta, k}$. They obey the relationship
\begin{align*}
    p^{\text{ARN}}_{\theta}(\gamma, \gamma^\prime) = \sum_{k \in \mathcal{K}} p^{\text{ARN}}_{\theta, k}(\gamma, \gamma^\prime).
\end{align*}

The following proposition helps to show that the ARN sampler is $\pi$-reversible.

\begin{prop} \label{prop:ARN_correct} For any tuning parameter $\theta = (\eta, \omega) \in \Delta_\epsilon^{2p+1}$, the condition (\ref{con:neighbourhood_construction_2}) holds, the conditional distribution of $k$, $p^{\text{RN}}_\eta(k|\gamma)$, within the ARN sampler is a valid distribution on $\mathcal{K} = \{0,1\}^p$. In addition, for any $\gamma \in \Gamma$ and $k \in \mathcal{K}$, the  within-neighbourhood proposal of the ARN sampler $q^{\text{THIN}}_{\omega, k}(\gamma,\gamma^\prime)$ is also a valid probability distribution on $N(\gamma,k)$.
\end{prop}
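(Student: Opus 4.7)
My plan is to verify the three claims of Proposition \ref{prop:ARN_correct} separately, all by direct inspection of the product forms in (\ref{mat:RN_prop1}) and (\ref{mat:THIN_prop}) together with the fact that $\rho = \mathrm{id}$. I expect no substantive obstacle; the whole argument reduces to reading off positivity and normalisation from the coordinate-wise Bernoullis, and the only mild care needed is to use the constraint $\theta \in \Delta_\epsilon^{2p+1}$ consistently to keep every Bernoulli strictly away from $0$ and $1$.

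For the validity of $p^{\text{RN}}_{\xi\eta^{\text{opt}}}(\cdot|\gamma)$ on $\mathcal{K} = \{0,1\}^p$, I would note that each of the $p$ coordinate-wise factors is a Bernoulli whose success probability is either $\xi A_j^{\text{opt}}$ or $\xi D_j^{\text{opt}}$, and the tuning-parameter domain $\Delta_\epsilon^{2p+1}$ forces these to lie in $(\epsilon, 1-\epsilon)$. Each factor is therefore a valid, strictly positive mass function on $\{0,1\}$, and the product is a valid, strictly positive mass function on $\{0,1\}^p$. Similarly, for the within-neighbourhood proposal $q^{\text{THIN}}_{\omega,k}(\gamma,\cdot)$, coordinates with $k_j = 0$ contribute a Dirac mass on $\gamma_j$, while coordinates with $k_j = 1$ contribute a Bernoulli with masses $\omega$ and $1-\omega$ in $(\epsilon, 1-\epsilon)$. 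The support is therefore exactly $N(\gamma, k) = \{\gamma^\prime : \gamma^\prime_j = \gamma_j \text{ for all } k_j = 0\}$, and normalisation follows by factorising the sum over $N(\gamma,k)$ into a product of $p_k$ one-coordinate sums each equal to $1$.

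For condition (\ref{con:neighbourhood_construction_2}), since $\rho$ is the identity it reduces to showing that $p^{\text{RN}}_{\xi\eta^{\text{opt}}}(k|\gamma)q^{\text{THIN}}_{\omega,k}(\gamma,\gamma^\prime) > 0$ holds if and only if the same expression with $\gamma$ and $\gamma^\prime$ swapped is positive. The previous paragraph already guarantees that the $p^{\text{RN}}$ factor is uniformly positive in $\gamma$ and $k$, so the condition collapses to symmetry of the support of $q^{\text{THIN}}_{\omega,k}$. Using the form in Remark \ref{remark:thin_proposal}, positivity of $q^{\text{THIN}}_{\omega,k}(\gamma,\gamma^\prime)$ is equivalent to $\gamma^\prime \in N(\gamma, k)$, and the defining set constraint on coordinates $\{j : k_j = 0\}$ is manifestly symmetric in $\gamma$ and $\gamma^\prime$, giving $\gamma^\prime \in N(\gamma,k) \iff \gamma \in N(\gamma^\prime,k)$ and hence the required biconditional. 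If there is anything resembling a main step it is this last observation about the symmetry of $N(\gamma,k)$, but even it is essentially immediate once the support of the within-neighbourhood proposal has been identified.
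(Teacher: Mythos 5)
Your proof is correct and follows essentially the same route as the paper's: normalisation of both $p^{\text{RN}}_{\eta}(\cdot|\gamma)$ and $q^{\text{THIN}}_{\omega,k}(\gamma,\cdot)$ by factorising the sums into coordinate-wise Bernoulli sums, and condition (\ref{con:neighbourhood_construction_2}) from the uniform positivity of $p^{\text{RN}}_{\eta}(k|\gamma)$ together with the symmetry of $q^{\text{THIN}}_{\omega,k}$ (the paper invokes full pointwise symmetry of the mass function, you use only symmetry of its support, which suffices). No gaps.
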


Proposition \ref{prop:RN_correct} together with Proposition \ref{prop:ARN_correct} show that the ARN transition kernel is $\pi$-reversible and therefore generates samples that preserve the target distribution $\pi$. In fact ARN and ASI are mathematically equivalent provided that the tuning parameter choices are made in a prescribed manner. To see this suppose that the tuning parameters of both the ARN and ASI schemes are fixed and share the same tuning parameter $\eta$. The following theorem shows that their transition probabilities from $\gamma$ to $\gamma^\prime$ are equal when $\zeta = \xi \times \omega$ holds. 

\begin{theorem} \label{them:asi_arn_equal}
Suppose that $\eta \in \Delta_\epsilon^{2p}$ and $\zeta$, $\xi$, $\omega \in \Delta_\epsilon$ for small $\epsilon \in (0, 1/2)$, $p^{\mathrm{ARN}}_{(\xi \eta,\omega)}$ and $p^{\mathrm{ASI}}_{\zeta \eta}$ are transition kernels of the ARN and ASI schemes respectively. If $\zeta = \xi \times \omega$ and, then
\begin{align}
p^{\mathrm{ARN}}_{(\xi \eta,\omega)} (\gamma, \gamma^\prime) = p^{\mathrm{ASI}}_{\zeta \eta} (\gamma, \gamma^\prime)
\end{align}
holds for any $\gamma$ and $\gamma^\prime \in \Gamma$.
\end{theorem}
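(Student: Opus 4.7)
The strategy is to condition on the neighbourhood indicator $k$, show that the resulting conditional acceptance probability does not depend on $k$ (and coincides with the ASI acceptance), and then marginalise $k$ out of the joint proposal to recover exactly the ASI proposal. The identity $\zeta = \xi\omega$ is what makes the marginalisation collapse correctly.

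First I would write, for $\gamma' \neq \gamma$,
\begin{align*}
p^{\mathrm{ARN}}_{(\xi\eta,\omega)}(\gamma,\gamma') = \sum_{k\in\mathcal{K}} p^{\mathrm{RN}}_{\xi\eta}(k\mid\gamma)\,q^{\mathrm{THIN}}_{\omega,k}(\gamma,\gamma')\,\alpha^{\mathrm{ARN}}_{\theta,k}(\gamma,\gamma'),
\end{align*}
with the diagonal case following by conservation of probability once the off-diagonal entries agree.

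Next, for any $k$ in the support of the joint proposal from $\gamma$ to $\gamma'$, I would verify that $\alpha^{\mathrm{ARN}}_{\theta,k}(\gamma,\gamma')=\alpha_{\zeta\eta}(\gamma,\gamma')$. Two observations do the work. Using the closed form in Remark~\ref{remark:thin_proposal} and the fact that the involution $\rho$ is the identity (so that $\gamma'\in N(\gamma,k)\iff\gamma\in N(\gamma',k)$), the thinning proposal is symmetric, $q^{\mathrm{THIN}}_{\omega,k}(\gamma,\gamma')=q^{\mathrm{THIN}}_{\omega,k}(\gamma',\gamma)$, so those factors cancel in the ARN acceptance ratio. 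What remains is $p^{\mathrm{RN}}_{\xi\eta}(k\mid\gamma')/p^{\mathrm{RN}}_{\xi\eta}(k\mid\gamma)$. Factoring this ratio coordinate-wise, the coordinates with $\gamma_j=\gamma'_j$ contribute $1$, while on coordinates with $\gamma_j\neq\gamma'_j$ the index $k_j$ must equal $1$ (otherwise the thinning proposal is zero), the $\xi$ cancels, and the ratio is $D^{\text{opt}}_j/A^{\text{opt}}_j$ or $A^{\text{opt}}_j/D^{\text{opt}}_j$ according to the direction of the flip. Writing out the analogous ratio $q_{\zeta\eta}(\gamma',\gamma)/q_{\zeta\eta}(\gamma,\gamma')$ for ASI yields exactly the same coordinate-wise contributions (the non-flipped coordinates contribute $1$ trivially, and the flipped coordinates contribute the same ratios because the scaling factor $\zeta$ cancels), so the two acceptance probabilities agree.

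Finally, with the $k$-independent acceptance factor pulled out of the sum, the marginalisation reduces to a coordinate-wise check since both $p^{\mathrm{RN}}_{\xi\eta}$ and $q^{\mathrm{THIN}}_{\omega,k}$ factor over $j$:
\begin{align*}
\sum_{k_j\in\{0,1\}} p^{\mathrm{RN}}_{\xi\eta,j}(k_j\mid\gamma_j)\,q^{\mathrm{THIN}}_{\omega,k_j}(\gamma_j,\gamma'_j).
\end{align*}
On flipped coordinates only $k_j=1$ contributes, with value $\xi A^{\text{opt}}_j\cdot\omega=\zeta A^{\text{opt}}_j$ (or the $D$ analogue); on non-flipped coordinates both $k_j=0$ and $k_j=1$ contribute, summing to $1-\zeta A^{\text{opt}}_j$ (or $1-\zeta D^{\text{opt}}_j$). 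This is precisely $q_{\zeta\eta,j}(\gamma_j,\gamma'_j)$, so the product over $j$ recovers $q_{\zeta\eta}(\gamma,\gamma')$, and combining with the $k$-independent acceptance gives $p^{\mathrm{ASI}}_{\zeta\eta}(\gamma,\gamma')$.

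The main obstacle is the second step, establishing the $k$-independence of $\alpha^{\mathrm{ARN}}_{\theta,k}$. It hinges on the precise way the neighbourhood probabilities and the symmetric thinning probabilities interact so that the $\xi$ factors cancel coordinate by coordinate — a cleaner-looking identity to state than to check carefully on each of the four cases $(\gamma_j,\gamma'_j)\in\{0,1\}^2$. Once this is in hand, the marginalisation is a short calculation that uses $\zeta=\xi\omega$ exactly once, on the flipped coordinates.
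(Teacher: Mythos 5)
Your proposal is correct and follows essentially the same route as the paper: the paper likewise first isolates the $k$-independence of the acceptance probability (its Proposition 4, proved via the symmetry of $q^{\text{THIN}}_{\omega,k}$ and the coordinate-wise cancellation of $\xi$ in the ratio $p^{\text{RN}}_{\xi\eta}(k\mid\gamma')/p^{\text{RN}}_{\xi\eta}(k\mid\gamma)$), then pulls that factor out of the sum over $k$ and marginalises coordinate-wise, obtaining $\zeta A_j$ on flipped coordinates and $1-\zeta A_j$ (or the $D_j$ analogue) on the rest. The only cosmetic difference is that you handle the diagonal entry explicitly by conservation of probability, which the paper leaves implicit.
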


In addition we deduce the following corollary.

\begin{corollary}
\label{coro:arns_equal}
Setting $\xi_1 \times \omega_1 = \xi_2 \times \omega_2$ implies
\begin{align*}
    p^{\mathrm{ARN}}_{(\xi_1\eta, \omega_1)} (\gamma, \gamma^\prime) = p^{\mathrm{ARN}}_{(\xi_2\eta, \omega_2)} (\gamma, \gamma^\prime)
\end{align*}
for any $\gamma$ and $\gamma^\prime \in \Gamma$.
\end{corollary}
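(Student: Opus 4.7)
The plan is to derive this corollary as an immediate consequence of Theorem \ref{them:asi_arn_equal}, exploiting the fact that the ARN kernel only depends on the pair $(\xi, \omega)$ through their product.

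First I would apply Theorem \ref{them:asi_arn_equal} to each of the two ARN kernels separately. For the pair $(\xi_1, \omega_1)$, set $\zeta_1 = \xi_1 \times \omega_1$; the theorem (with $\eta$ held fixed) yields
\begin{align*}
    p^{\mathrm{ARN}}_{(\xi_1 \eta, \omega_1)}(\gamma, \gamma^\prime) = p^{\mathrm{ASI}}_{\zeta_1 \eta}(\gamma, \gamma^\prime).
\end{align*}
Analogously, for the pair $(\xi_2, \omega_2)$ with $\zeta_2 = \xi_2 \times \omega_2$,
\begin{align*}
    p^{\mathrm{ARN}}_{(\xi_2 \eta, \omega_2)}(\gamma, \gamma^\prime) = p^{\mathrm{ASI}}_{\zeta_2 \eta}(\gamma, \gamma^\prime).
\end{align*}

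Next I would invoke the hypothesis $\xi_1 \omega_1 = \xi_2 \omega_2$, which gives $\zeta_1 = \zeta_2$, and therefore the two right-hand sides coincide as the same ASI kernel $p^{\mathrm{ASI}}_{\zeta_1 \eta}$. Transitivity then yields the claimed identity for arbitrary $\gamma, \gamma^\prime \in \Gamma$.

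The only technical check is a bookkeeping one: I must ensure the parameter values fall within the domains in which Theorem \ref{them:asi_arn_equal} applies, namely $\eta \in \Delta_\epsilon^{2p}$ and $\xi_i, \omega_i, \zeta_i \in \Delta_\epsilon$. Since $\xi_i, \omega_i \in (\epsilon, 1-\epsilon)$ by assumption, the only potential concern is whether $\zeta_i = \xi_i \omega_i$ likewise lies in $\Delta_\epsilon$; this holds automatically because the theorem as stated only requires that the ARN scaling factorise as $\zeta = \xi \times \omega$, and the algebraic identity of the two kernels was proved there for each such factorisation without restriction on $\zeta$ itself. Hence no genuine obstacle arises; the corollary is essentially a two-line consequence of the preceding theorem, and the proof amounts to composing the two equalities above.
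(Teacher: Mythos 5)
Your proof is correct and is essentially the paper's own argument: the paper disposes of this corollary in one line by noting it follows from Theorem \ref{them:asi_arn_equal}, and your two applications of that theorem with $\zeta_1=\xi_1\omega_1=\xi_2\omega_2=\zeta_2$ followed by transitivity is exactly the intended expansion. Your closing remark on the domain of $\zeta_i$ is also apt --- the theorem's proof is purely algebraic in $\zeta=\xi\times\omega$, so nothing breaks even though $\xi_i\omega_i$ need not itself lie in $\Delta_\epsilon$.
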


Corollary \ref{coro:arns_equal} shows that two ARN kernels with different tuning parameters coincide in probability if the products of the neighbourhood scaling parameter $\xi$ and proposal thinning parameter $\omega$ are equal. This corollary also suggests that magnitudes of $\xi$ and $\omega$ can shift to each other without modifying the resulting proposal as long as their product preserves.

\section{Adaptive random neighbourhood and informed samplers}
\label{sec:adaptive_informed}

It should be clear from the above discussion that both the locally informed and ASI schemes can be viewed as random neighbourhood samplers, and that in the former much work is done to select proposals within a neighbourhood, while in the latter most of the work is done in constructing a neighbourhood where many of the candidate models represent proposals that are likely to be accepted by the algorithm. Our main methodological contribution is to design a random neighbourhood sampler for which both the neighbourhood construction and within-neighbourhood proposal are designed in an informed way. We therefore consider using an adaptive random neighbourhood approach to construct neighbourhoods, followed by a locally informed approach to select a proposal from within this neighbourhood.

The advantages of combining the two schemes in this manner are worth highlighting. A key strength of ASI is that generating proposals is computationally cheap, but when components of the posterior distribution are highly correlated then the assumption of independence that is embedded into the proposal generation can lead to overly ambitious moves that will be rejected. To combat this, the scaling parameter must be used to control the acceptance rate, but in the presence of high correlation this can lead to small moves and slow mixing. The locally informed sampler can cope well with high levels of correlation in the posterior distribution, but in high-dimensions often the (uninformed) neighbourhood will often either contain no sensible models, or be so large that the cost of computing all of the balancing function values within it becomes prohibitive.  Combining the two schemes is therefore an attractive proposition, as an intelligent neighbourhood that is also not too large can be constructed using ASI, and then correlation can be controlled for at the second stage by choosing the within-neighbourhood proposal using the locally informed approach.

We give the details of this informed and adaptive random neighbourhood sampler below, which we call the \textit{Adaptive Random Neighbourhood Informed} (ARNI) sampler. After this we define the point-wise ARNI (PARNI) scheme, which enjoys the benefits of ARNI but with much lower computational cost.

\subsection{Adaptive Random Neighbourhood Informed algorithm}

We first describe a construction of \textit{random neighbourhood informed proposals}. Suppose a random neighbourhood sampler is given with neighbourhood indicator variable $k \in \mathcal{K}$ and a update mapping $\rho$ together with a within-neighbourhood proposal kernel $Q_k$. The variable $k$ follows a conditional distribution $p(k|\gamma)$ whereas the proposal $Q_k$ produces a new state $\gamma^\prime$ within neighbourhood $N(\gamma, k)$ in an uninformed manner.

We consider the class of \emph{informed proposals} $Q_{g,k}$ with mass function
\begin{align}
    q_{g,k}(\gamma, \gamma^\prime) = 
    \begin{cases}
        \frac{g\left(\frac{\pi(\gamma^\prime)p(\rho(k)|\gamma^\prime)q_{\rho(k)}(\gamma^\prime, \gamma)}{\pi(\gamma)p(k|\gamma)q_k(\gamma, \gamma^\prime)}\right) q_k(\gamma, \gamma^\prime)}{Z_{g, k}(\gamma)}, & \gamma \in N(\gamma, k) \\
        0, & \text{otherwise}
    \end{cases}
\end{align}
where $g:[0,\infty)\to[0,\infty)$ is a continuous function, and $Z_{g, k}(\gamma)$ is a normalising constant defined by
\begin{align}
    Z_{g, k}(\gamma)=\sum_{\gamma^*\in N(\gamma, k)} g\left(\frac{\pi(\gamma^*)p(\rho(k)|\gamma^*)q_{\rho(k)}(\gamma^*, \gamma)}{\pi(\gamma)p(k|\gamma)q_k(\gamma, \gamma^*)}\right)q_k(\gamma, \gamma^*). \label{mat:locban_normalising}
\end{align}
The generated new state $\gamma^\prime$ is accepted using the Metropolis-Hastings rule
\begin{align}
\alpha_{g, k}(\gamma, \gamma^\prime) & =\min \left\{1, \frac{\pi(\gamma^\prime)p(\rho(k)|\gamma^\prime)q_{g,\rho(k)}(\gamma^\prime, \gamma)}{\pi(\gamma)p(k|\gamma)q_{g,k}(\gamma, \gamma^\prime)}\right\}.
\end{align}
The proposal collapses to the locally balanced proposal of \cite{zanella2020informed} when the neighbourhood is non-stochastic and the within-neighbourhood proposal is symmetric. According to the locally balanced proposals, using a balancing function $g$ that satisfies $g(t)=tg(1/t)$ is optimal in Peskun sense. In random neighbourhood informed proposals, we also recommend using a balancing function $g$.

In what follows, we combine the above random neighbourhood informed proposal with the ARN scheme and develop an \textit{Adaptive Random Neighbourhood Informed} (ARNI) proposal that uses an informed proposal at the within-neighbourhood proposal stage. In the ARNI scheme, the within-neighbourhood proposal in Algorithm 2 is replaced by 
\begin{align}
    q^{\text{ARNI}}_{\theta, k}(\gamma, \gamma^\prime) & \propto g\left(\frac{\pi(\gamma^\prime)p^{\text{RN}}_{\xi \eta^{\text{opt}}}(k|\gamma^\prime)q^{\text{THIN}}_{\omega, k}(\gamma^\prime, \gamma)}{\pi(\gamma)p^{\text{RN}}_{\xi \eta^{\text{opt}}}(k|\gamma)q^{\text{THIN}}_{\omega, k}(\gamma, \gamma^\prime)}\right) q^{\text{THIN}}_{\omega, k}(\gamma, \gamma^\prime) \notag \\
    & = g\left(\frac{\pi(\gamma^\prime)p^{\text{RN}}_{\xi \eta^{\text{opt}}}(k|\gamma^\prime)}{\pi(\gamma)p^{\text{RN}}_{\xi \eta^{\text{opt}}}(k|\gamma)}\right) q^{\text{THIN}}_{\omega, k}(\gamma, \gamma^\prime). \label{mat:ARNI_prop}
\end{align}
where $g$ is a balancing function such that $g(t) = tg(1/t)$ holds and $\theta = (\xi \eta^{\text{opt}}, \omega) \in \Delta_\epsilon^{2p+1}$. The last equation follows since the within-neighbourhood proposal $q^{\text{THIN}}_{\omega, k}$ is symmetric and therefore $q^{\text{THIN}}_{\omega, k}(\gamma^\prime, \gamma)/q^{\text{THIN}}_{\omega, k}(\gamma, \gamma^\prime) = 1$ for all $\gamma^\prime \in N(\gamma, k)$. The Metropolis-Hastings acceptance probability is tailored to the new informed proposal as
\begin{align}
    \alpha_{\theta, k}^{\text{ARNI}}(\gamma, \gamma^\prime) = \min \left\{1, \frac{\pi(\gamma^\prime)p^{\text{RN}}_{\xi \eta^{\text{opt}}}(k|\gamma^\prime)q^{\text{ARNI}}_{\theta, k}(\gamma^\prime, \gamma)}{\pi(\gamma)p^{\text{RN}}_{\xi \eta^{\text{opt}}}(k|\gamma)q^{\text{ARNI}}_{\theta, k}(\gamma, \gamma^\prime)}\right\} \label{mat:ARNI_acc}.
\end{align}

To boost the convergence of these adaptive tuning parameters, the same  multiple chain strategy as ASI should be implemented. In addition to the notations used in ASI, $k^{l,(i)}$ denotes the neighbourhood indicator variable for the $l$-chain at time $i$. For $L$ multiple chains, the tuning parameters $\eta^{\text{opt}}$ are updated following the same scheme of ASI as in (\ref{mat:ASI_paraform}) and (\ref{mat:update_zeta}). Two scaling parameters $\xi$ and $\omega$ can be updated using the Robbins-Monro schemes
\begin{align}
    \text{logit}_\epsilon\xi^{(i+1)} & = \text{logit}_\epsilon \xi^{(i)} + \frac{\phi_i}{L} \sum_{l=1}^L (p_{k^{l,(i)}} - s) \\
    \text{logit}_\epsilon\omega^{(i+1)} & = \text{logit}_\epsilon \omega^{(i)} + \frac{\phi_i}{L} \sum_{l=1}^L (\alpha^l_i - \tau)
\end{align}
where $p_k$ is the size of $k$ as mentioned previously, $s$ is the target size of $k$, $\alpha^l_i$ is the acceptance probability at the $i$th iteration for the $l$-th chain and $\tau$ is the target average acceptance rate. We recommend using the diminishing sequence $\phi_i = i^{-0.7}$ for both updating schemes.

While the informed proposal is powerful in accelerating the convergence of the chains, it also introduces extra computational costs since the posterior probabilities of all models in a neighbourhood are required. Given a $k$ of size $p_k$, the resulting neighbourhood $N(\gamma, k)$  consists of $2^{p_k}$ models. Although it is possible to speed up the posterior calculations using Gray codes as introduced in \cite{george1997approaches}, evaluating $2^{p_k}$ models is still computationally expensive when $p_k$ is very large and leads to an inefficient scheme. One way to address the issue is to tune the neighbourhood scaling parameter to generate neighbourhoods with a desired size, say let $s$ be 5. In our experience, such control of the size of $k$ comes at the cost of reduced exploration of the model space and the ARNI scheme fails to achieve better performance than ASI. This motivated us to develop a more efficient implementation of this approach that controls computational cost but maintains good exploration properties.

\subsection{The PARNI sampler}

We consider a point-wise implementation of the ARNI scheme (for short, the PARNI scheme). This approach is motivated by the block-wise implementation in \cite{zanella2020informed} and the block design strategy in \cite{titsias2017hamming}. The main idea is that the variables with neighbourhood indicator equal to 1 are divided into a series of smaller blocks and the new model is  sequentially proposed by sequentially adding or deleting variables in each block.
 The block design can lead to a significant reduction in the total number of models considered and so require less computational effort. For instance, suppose that there are $p_k$ non-zero neighbourhood indicator variables, which are divided into $m$ equally sized blocks. The neighbourhoods generated by each block will have $2^m$ models. Working through each block to propose a new state requires evaluating $2^m p_k/m$ posterior probabilities. As the computational cost is proportional to the total number of models considered, the computational cost is largest when $m = p_k$ where the only building block is the entire neighbourhood of $N(\gamma, k)$. In contrast, the smallest computational cost occurs when $m=1$ where each block has  one variable and therefore contains two models only. Throughout the section, we consider the latter block design when $m=1$ and the resulting algorithm is the PARNI sampler. 

\subsubsection{Main Algorithm}

We now formally present the PARNI algorithm and show how a new state $\gamma^\prime$ is proposed from a current state $\gamma$. We use the same random neighbourhood construction as the ARNI scheme, in addition, the neighbourhood scaling parameter $\xi$ is set to be fixed at 1 to indicate that neighbourhood sizes are not reduced at this stage. In other words the neighbourhoods are generated with the optimal values $\eta^{\text{opt}}$ as in (\ref{mat:ASI_AD_opt}). After a neighbourhood $N(\gamma, k)$ is sampled, we sequentially propose a sequence of new models with only 1-Hamming distance differences inside $N(\gamma, k)$. 
We define $K = \{K_1, \dots, K_{p_k}\} = \{j|k_j = 1\}$ to be the set of variables for which $k_j=1$ (the order of variables is random). We also define a sequence of models, $\gamma(1), \dots, \gamma(p_k)$ and  neighbourhoods, $N(1),\dots, N(p_k)$ to select the final model. Let $e(1),\dots,e(p)$ be the basis vector of a $p$-dimensional Cartesian space where $e(j)_j = 1$ and $e(j)_{j^\prime} = 0$ whenever $j^\prime \neq j$. We consider the neighbourhoods constructed according to $\gamma(j)$ and $e(K_r)$ for $r$ from 1 to $p_k$. The first neighbourhood is $N(1)=N(\gamma, e(K_1))$ and propose a model $\gamma(1)$ from
\begin{align}
    q^{\text{PARNI}}_{\theta,K_1}(\gamma, \gamma(1)) \propto 
    \begin{cases}
    g\left(\frac{\pi(\gamma(1))p^{\text{RN}}_{\eta^{\text{opt}}}(e(K_1)|\gamma(1))}{\pi(\gamma)p^{\text{RN}}_{\eta^{\text{opt}}}(e(K_1)|\gamma)}\right) q^{\text{THIN}}_{\omega, e(K_1)}(\gamma, \gamma(1)), \quad & \text{if $\gamma(1) \in N(1)$ } \\
    0, & \text{otherwise}
    \end{cases}
\end{align}
where $g$ is a balancing function and $\theta = (\eta^{\text{opt}}, \omega)$.
We repeat this process to construct the second neighbourhood $N(2) = N(\gamma(1), e(K_2))$ and propose the model $\gamma(2)$ from $N(2)$. In general, at time $r$,
we defined $N(r) = N(\gamma(r-1), e(K_r))$ and propose a model $\gamma(r)$ from
\begin{align}
\label{mat:PARNI_prop}
    q^{\text{PARNI}}_{\theta,K_r}(\gamma(r-1), \gamma(r)) \propto 
    \begin{cases}
    g\left(\frac{\pi(\gamma(r))p^{\text{RN}}_{\eta^{\text{opt}}}(e(K_r)|\gamma(r))}{\pi(\gamma(r-1))p^{\text{RN}}_{\eta^{\text{opt}}}(e(K_r)|\gamma(r-1))}\right) q^{\text{THIN}}_{\omega, e(K_r)}(\gamma(r-1), \gamma(r)), \quad & \text{if $\gamma(r) \in N(r)$ } \\
    0, & \text{otherwise.}
    \end{cases}
\end{align}
 We call this a position-wise probability measure because it only allows to alter or preserve position $K_r$ for each $r$. Fig. \ref{fig:PARNI_flowchart} provides a flowchart of the PARNI scheme which only involves enumerating at most
$2p_k$ models rather than $2^{p_k}$ models in the ARNI proposal. The parameters of the proposal are
$\theta = (\eta^{\text{opt}}, \omega)$.

To the construct a $\pi$-reversible chain, the probability of the reverse moves is required. These reverse moves use $K^\prime = \rho(K)$ as their auxiliary variables. The mapping $\rho$ reverses the order of elements in $K$ so that the variable $K^\prime$ contains the same elements in $K$ but with reverse order. The typical benefit is that it leads to identical intermediate models of forward and reverse proposals and the posterior probabilities of $p_k$ models are required instead of $2 p_k$. Suppose that $\gamma^\prime(r)$ for $r = 0 ,\dots, p_k$ are consecutive intermediate models used in the reverse move and $N^\prime(r)$ for $r = 0 ,\dots, p_k$ are the neighbourhoods used in the reverse move. These models and neighbourhoods are identical to models and neighbourhoods used in the proposal move but with opposite order, in particular $\gamma^{\prime}(r) = \gamma(p_k-r)$ for $r = 0 ,\dots, p_k$ and $N^\prime(r) = N(p_k-r+1)$ for $r = 1 ,\dots, p_k$. The second benefit is that the design leads to a reduced form of Metropolis-Hastings probability of acceptance. Let $Z(r)$ be the normalising constant of the $r$-th position-wise probability measure, $q^{\text{PARNI}}_{\theta,K_r}(\gamma(r-1), \gamma(r))$, and $Z^\prime(j)$ denote the normalising constant of $r$-th position-wise probability measure in reversed move, $q^{\text{PARNI}}_{\theta,K^\prime_r}(\gamma^{\prime}(r-1), \gamma^{\prime}(r))$. We have that
\begin{align}
    \begin{split}
    Z(r) = & \sum_{\gamma^* \in N(r)} g\left(\frac{\pi(\gamma^*)p^{\text{RN}}_{\eta^{\text{opt}}}(e(K_r)|\gamma^*)}{\pi(\gamma(r-1))p^{\text{RN}}_{\eta^{\text{opt}}}(e(K_r)|\gamma(j-1))}\right) q^{\text{THIN}}_{\omega, e(K_r)}(\gamma(j-1), \gamma^*) \\
    Z^\prime(r) = & \sum_{\gamma^* \in N^\prime(r)} g\left(\frac{\pi(\gamma^*)p^{\text{RN}}_{\eta^{\text{opt}}}(e(K^\prime_r)|\gamma^*)}{\pi(\gamma^\prime(r-1))p^{\text{RN}}_{\eta^{\text{opt}}}(e(K^\prime_r)|\gamma^\prime(j-1))}\right) q^{\text{THIN}}_{\omega, e(K^\prime_r)}(\gamma^\prime(j-1), \gamma^*).
    \end{split}
    \label{mat:PARNInorcon}
\end{align}
We let $q^{\text{PARNI}}_{\theta, k}(\gamma^\prime, \gamma)$ be the full proposal kernel that satisfies
\begin{align}
    q^{\text{PARNI}}_{\theta, k}(\gamma, \gamma^\prime) = \prod_{r=1}^{p_k} q^{\text{PARNI}}_{\theta, K_j}(\gamma(r-1), \gamma(r))
\end{align}
where $\gamma(0)$ is current state $\gamma$ and $\gamma(p_k)$ is the final proposal $\gamma^\prime$. The Metropolis-Hastings acceptance probability of the PARNI proposal is given as
\begin{align}
\alpha^{\text{PARNI}}_{\theta, k}(\gamma, \gamma^\prime) = \left\{ 1, \frac{\pi(\gamma^\prime)p^{\text{RN}}_{\eta^{\text{opt}}}(k|\gamma^\prime)q^{\text{PARNI}}_{\theta, k}(\gamma^\prime, \gamma)}{\pi(\gamma)p^{\text{RN}}_{\eta^{\text{opt}}}(k|\gamma)q^{\text{PARNI}}_{\theta, k}(\gamma, \gamma^\prime)} \right\}.
\label{mat:PARNIacc}
\end{align}
The reduced form of the Metropolis-Hastings acceptance probability is illustrated by the following proposition:
\begin{prop} \label{prop:PARNI_acc} 
Suppose $\gamma$, $\gamma^\prime \in \Gamma$ are fixed.  For any $\theta = (\eta, \omega) \in \Delta_\epsilon^{2p+1}$ and $k$ such that $\gamma^\prime \in N(\gamma, k)$, the Metropolis-Hastings acceptance probability in (\ref{mat:PARNIacc}) can be written
\begin{align}
        \alpha^{\text{PARNI}}_{\theta, k}(\gamma, \gamma^\prime) = \min\left\{1,  \prod_{r=1}^{p_k} \frac{Z(r)}{Z^\prime(r)} \right\} \label{mat:PARNIacc2}
\end{align}
where $Z(r)$, $Z^\prime(r)$ for $r = 1, \cdots, p_k$ are the normalising constants given in (\ref{mat:PARNInorcon}).
\end{prop}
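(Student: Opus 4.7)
My plan is to compute the Metropolis--Hastings acceptance ratio in (\ref{mat:PARNIacc}) by expanding the product form of $q^{\text{PARNI}}_{\theta,k}$ and exploiting the balancing function identity $g(t)=tg(1/t)$. Write each position-wise factor as
\[
q^{\text{PARNI}}_{\theta,K_r}(\gamma(r-1),\gamma(r)) = \frac{g(t_r)\, q^{\text{THIN}}_{\omega,e(K_r)}(\gamma(r-1),\gamma(r))}{Z(r)},
\qquad
t_r := \frac{\pi(\gamma(r))\,p^{\text{RN}}_{\eta^{\text{opt}}}(e(K_r)\mid \gamma(r))}{\pi(\gamma(r-1))\,p^{\text{RN}}_{\eta^{\text{opt}}}(e(K_r)\mid \gamma(r-1))}.
\]
For the reverse move, the intermediate states satisfy $\gamma'(r)=\gamma(p_k-r)$ and $K'_r = K_{p_k-r+1}$, so the $r$-th factor of $q^{\text{PARNI}}_{\theta,k}(\gamma',\gamma)$ uses the ratio $1/t_{p_k-r+1}$ and the same thinning kernel (which is symmetric by Remark \ref{remark:thin_proposal}). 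Using $g(1/t)/g(t) = 1/t$, the product of $g$-ratios yields $\prod_{r=1}^{p_k} t_r^{-1}$.

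Next I would show that the factor $\pi(\gamma')p^{\text{RN}}_{\eta^{\text{opt}}}(k\mid \gamma')/\pi(\gamma)p^{\text{RN}}_{\eta^{\text{opt}}}(k\mid \gamma)$ from the outer Metropolis--Hastings ratio, together with $\prod_{r=1}^{p_k} t_r^{-1}$, collapses to $1$. The $\pi$-part telescopes trivially since $\gamma(0)=\gamma$ and $\gamma(p_k)=\gamma'$. For the $p^{\text{RN}}_{\eta^{\text{opt}}}$ part I use the key observation that the indices in $K=\{K_1,\dots,K_{p_k}\}$ are distinct, so each position $K_r$ changes at most once along the sequence $\gamma(0),\dots,\gamma(p_k)$; hence $\gamma(r-1)_{K_r}=\gamma_{K_r}$ and $\gamma(r)_{K_r}=\gamma'_{K_r}$. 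Because $e(K_r)$ has a $1$ only at coordinate $K_r$ and $p^{\text{RN}}_{\eta^{\text{opt}}}$ factorises coordinate-wise, all terms with $j\ne K_r$ cancel in the ratio $p^{\text{RN}}_{\eta^{\text{opt}}}(e(K_r)\mid\gamma(r-1))/p^{\text{RN}}_{\eta^{\text{opt}}}(e(K_r)\mid\gamma(r))$, and what remains is $p^{\text{RN}}_{\eta^{\text{opt}},K_r}(1\mid\gamma_{K_r})/p^{\text{RN}}_{\eta^{\text{opt}},K_r}(1\mid\gamma'_{K_r})$. Taking the product over $r$ recovers exactly $p^{\text{RN}}_{\eta^{\text{opt}}}(k\mid\gamma)/p^{\text{RN}}_{\eta^{\text{opt}}}(k\mid\gamma')$, cancelling the outer factor.

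The only residual terms are the normalising constants. From the expansion, the full ratio reduces to
\[
\prod_{r=1}^{p_k} \frac{Z(r)}{Z_{\text{rev}}(r)},
\]
where $Z_{\text{rev}}(r)$ is the normaliser at step $r$ of the reverse sweep. By the construction $\gamma'(r)=\gamma(p_k-r)$ and $K'_r=K_{p_k-r+1}$, one has $Z_{\text{rev}}(r) = Z'(p_k-r+1)$ from (\ref{mat:PARNInorcon}); re-indexing the denominator $s = p_k-r+1$ gives $\prod_r Z(r)/Z'(r)$, completing the identity claimed in (\ref{mat:PARNIacc2}).

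The main obstacle is the bookkeeping in the second step: one must carefully justify that the product of ratios $\prod_r p^{\text{RN}}_{\eta^{\text{opt}}}(e(K_r)\mid\gamma(r-1))/p^{\text{RN}}_{\eta^{\text{opt}}}(e(K_r)\mid\gamma(r))$ collapses to $p^{\text{RN}}_{\eta^{\text{opt}}}(k\mid\gamma)/p^{\text{RN}}_{\eta^{\text{opt}}}(k\mid\gamma')$, which hinges crucially on the distinctness of the indices in $K$ (so that each coordinate flips at most once) combined with the coordinate-wise factorisation of $p^{\text{RN}}_{\eta^{\text{opt}}}$. Once this telescoping is in place, the remaining algebra --- balancing-function manipulations and the reverse-sweep reindexing --- is essentially routine.
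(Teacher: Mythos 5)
Your proposal is correct and follows essentially the same route as the paper's proof: exploit $g(t)=tg(1/t)$ and the reverse-order identification $\gamma'(r)=\gamma(p_k-r)$, $K'_r=K_{p_k-r+1}$ to turn the ratio of position-wise proposals into a telescoping $\pi$-product, a product of $p^{\text{RN}}(e(K_r)\mid\cdot)$ ratios that cancels the outer factor $p^{\text{RN}}(k\mid\gamma')/p^{\text{RN}}(k\mid\gamma)$, and the residual product of normalising constants. The only cosmetic difference is that the paper verifies the middle cancellation by writing both sides explicitly as $\prod_{j\in J(\gamma,\gamma')}(A_j/D_j)^{1-2\gamma_j}$, whereas you argue it via coordinate-wise factorisation and distinctness of the $K_r$ --- the same underlying observation.
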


\subsubsection{Adaptation schemes for algorithmic parameters}

The last building block to complete the PARNI sampler is the adaptation mechanism of the tuning parameters. Suppose a total number of $L$ are used. The posterior inclusion probabilities $\pi_j$ are updated as the ASI scheme in (\ref{mat:rb_pips}). The magnitude of the proposal thinning parameter $\omega$ is crucial in the mixing time and convergence rate of chains.
Therefore, we consider two adaptation schemes for updating $\omega$, the Robins-Monro adaptation scheme (RM) and the Kiefer-Wolfowitz adaptation scheme (KW). For the rest of section, we assume $L$ multiple chains are used for the PARNI sampler.

The Robbins-Monro adaptation scheme is widely used in updating tuning parameters of adaptive MCMC algorithms. \cite{andrieu2008tutorial} review
several adaptive MCMC algorithms using variants of the Robbins-Monro process. Given a specified probability of acceptance $\tau$, the Robbins-Monro adaptation scheme automatically adjusts $\omega$ according to the comparison between the current probability of acceptance and $\tau$. It is generally considered to be a robust adaption scheme. Given the acceptance probability of the $l$-th chain at the $i$-th iteration $\alpha^l_i$, the tuning parameter $\omega$ is updated through the law
\begin{align}
    \text{logit}_\epsilon\omega^{(i+1)} = \omega^{(i)} + \frac{\phi_i}{L}\sum_{l=1}^L(\alpha^l_i - \tau). \label{mat:update_omega_RM}
\end{align}
The theoretical optimal value of $\tau$ may not exist for every candidate proposal kernel and choice of posterior distribution. We recommend using a target acceptance rate of $0.65$ based on a large number of experiments that will be illustrated in Section \ref{sec:numerical_studies} and using the diminishing sequence $\phi_i = i^{-0.7}$.

The Kiefer-Wolfowitz scheme is a stochastic approximation algorithm and modification of the Robbins-Monro scheme in which a finite difference approximation to the derivative is used. In this scheme the tuning parameter is updated to target the optimiser of an objective function of interest. Following \cite{pasarica2010adaptively}, we use the Kiefer-Wolfowitz scheme to adapt $\omega$ and use the \textit{expected squared jumping distance} as the objective function. The expected squared jumping distance is closely linked to the mixing and convergence properties of a Markov chain. To estimate the expected squared jumping distance at each iteration, we use the \textit{average squared jumping distance} as an estimator. An alternative objective function can be the generalised speed measure introduced in \cite{titsias2019gradient}.

To estimate the finite difference approximation to the derivative of the average squared jumping distance, we exploit the multiple chain implementation of PARNI. The multiple independent chains naturally provide independent samples which fits the Kiefer-Wolfowitz approximation. Our implementation of the Kiefer-Wolfowitz adaption scheme proceeds as follows. We first evenly divide $L$ multiple chains into two equally sized batches, $L^+$ and $L^-$. Let $c_i$ be a diminishing sequence, new proposals are generated using $\omega^+ = \omega^{(i)} + c_i$ for chains in $L^+$ and $\omega^- = \omega^{(i)} - c_i$ for chains in $L^-$. The average squared jumping distances for these batches (\textit{i.e.} $\text{ASJD}^{+,(i)}$ and $\text{ASJD}^{-,(i)}$) are estimated using the new proposals and their corresponding probabilities of acceptance. The tuning parameter $\omega$ is then updated according to the rule
    \begin{align}
    \text{logit}_\epsilon \omega^{(i+1)} = \text{logit}_\epsilon \omega^{(i)} + a_i \left(\frac{\text{ASJD}^{+,(i)} - \text{ASJD}^{-,(i)}}{2c_i}\right). \label{mat:update_omega_KW}
\end{align}
We suggest using $a_i = i^{-1}$ and $c_i = i^{-0.5}$ in the Kiefer-Wolfowitz scheme. Further details of the Kiefer-Wolfowitz adaption scheme are given in \ref{apx:KW_scheme} of the supplementary material and a feasibility analysis of the Kiefer-Wolfowitz adaption scheme is carried out in \ref{apx:KW_feasibility} of the supplementary material. 

Pseudocode of the PARNI sampler is given in Algorithm \ref{alg:PARNI}. The corresponding transition kernel is referred to as $p_{\theta}^{\text{PARNI-$\bullet$}}$ for $\bullet = \text{RM}$ or $\text{KW}$. In the next section we show that the PARNI sampler is $\pi$-ergodic and satisfy a strong law of large numbers.

\begin{algorithm}[t]
\caption{Pointwise Adaptive Random Neighbourhood Sampler with Informed proposal (PARNI-$\bullet$)} \label{alg:PARNI}
\begin{algorithmic}
\For{$i = 1$ to $i = N$}
\State If $\bullet$ is KW, then divide $L$-chains into $L^+$ and $L^-$ and compute $\omega^+$ and $\omega^-$;
\For{$l = 1$ to $l = L$}
\State Sample $k \sim p^{\text{RN}}_{\eta^{(i)}}(\cdot|\gamma^{l,(i)})$ as in (\ref{mat:RN_prop1});
\State Set $\gamma(0) = \gamma^{(i)}$, $p_k = \sum_{j=1}^p k_j$,  $K = \{j|k_j=1\}$;
\For{$r = 1$ to $r = p_k$}
\State Sample $\gamma(r) \sim q_{\theta^{(i)},  K_j}^{\text{PARNI}}(\gamma(r-1), \cdot)$ as in (\ref{mat:PARNI_prop});
\State Calculate $Z(r)$ and $Z^\prime(p_k-r+1)$ as in (\ref{mat:PARNInorcon});
\EndFor
\State Set $\gamma^\prime = \gamma(p_k)$ Sample $U \sim U(0,1)$;
\State If $U < \alpha^{\text{PARNI}}_{\theta^{(i)}, k}(\gamma^{(i)}, \gamma^\prime)$ as in (\ref{mat:PARNIacc2}), then $\gamma^{(i+1)} = \gamma^\prime$, else $\gamma^{(i+1)} = \gamma^{(i)}$;
\State Update $\hat{\pi}^{(i+1)}_j$ as in (\ref{mat:rb_pips}), set $\tilde{\pi}^{(i+1)}_j = \pi_0 + (1-2\pi_0)\hat{\pi}^{(i+1)}_j$ for $j=1, \dots, p$;
\EndFor
\State Update $A^{(i+1)}_j = \min \left\{1, \tilde{\pi}^{(i+1)}_j/(1-\tilde{\pi}^{(i+1)}_j) \right\}$;
\State Update $D^{(i+1)}_j = \min \left\{1, (1-\tilde{\pi}^{(i+1)}_j)/\tilde{\pi}^{(i+1)}_j \right\}$;
\State If $\bullet$ is RM, then update $\omega^{(i+1)}$ as in (\ref{mat:update_omega_RM}); else if $\bullet$ is KW, then update $\omega^{(i+1)}$ as in (\ref{mat:update_omega_KW});
\State Set $\eta^{(i+1)} = (A^{(i+1)}, D^{(i+1)})$ and $\theta^{(i+1)} = (\eta^{(i+1)}, \omega^{(i+1)})$
\EndFor
\end{algorithmic}
\end{algorithm}

\subsubsection{Ergodicity and Strong Law of Large Numbers}

The multiple chain acceleration can be thought of as the realisation of $L$ runs on a product space $\Gamma^{\otimes L}$ with joint variable $\gamma^{\otimes L}  = (\gamma^{1}, \dots, \gamma^{L})  \in \Gamma^{\otimes L}$. The number of independent chains $L$ is satisfied $L \geq 1$ if Robbins-Monro adaptation is used and $L \geq 2$ if Kiefer-Wolfowitz adaptation is used. We consider a posterior distribution $\pi$ on the space $\Gamma$ which is of the form
\begin{align}
    \pi(\gamma) \propto p(y|\gamma) p(\gamma) \label{mat:target_post}
\end{align} 
where both $p(y|\gamma)$ and $p(\gamma)$ are analytically available. In addition, the joint posterior distribution $\pi^{\otimes L}$ on the product set $\Gamma^{\otimes L}$ is given as
\begin{align}
    \pi^{\otimes L}(\gamma^{\otimes L}) = \prod_{l=1}^L \pi(\gamma^{l}).
\end{align} 
In this section, the symbol $\bullet$ represents either KW or RM. The sub-proposal mass function of the $\text{PARNI}-\bullet$ sampler given neighbourhood indicator variable $k$ and tuning parameter $\theta = (\eta, \omega)$ is defined by
\begin{align}
     \psi^{\text{PARNI}-\bullet}_{\theta, k}(\gamma, \gamma^\prime)= p^{\text{RN}}_\eta(k|\gamma)q^{\text{PARNI}-\bullet}_{\theta, k}(\gamma, \gamma^\prime). \label{mat:PARNI_subpropkernel}
\end{align}
The full transition kernel of the PARNI sampler is marginalised over all possible $k$
\begin{align}
    P^{\text{PARNI}-\bullet}_{\theta}(\gamma, S) = \sum_{k \in \mathcal{K}} P^{\text{PARNI}-\bullet}_{\theta, k}(\gamma, S) \label{mat:PARNI_trankernel}
\end{align} 
where the sub-transition kernels given $k$ are
\begin{align}
    P^{\text{PARNI}-\bullet}_{(\theta, k)}(\gamma, S) = & \sum_{\gamma^\prime \in S}  p^{\text{PARNI}-\bullet}_{\theta, k}(\gamma, \gamma^\prime) \notag \\
    = & \sum_{\gamma^\prime \in S} \psi^{\text{PARNI}-\bullet}_{\theta, k}(\gamma, \gamma^\prime) \alpha^{\text{PARNI}-\bullet}_{\theta, k}(\gamma, \gamma^\prime) \notag  \\
    & + \mathbb{I}\{\gamma \in S\} \sum_{\gamma^\prime \in \Gamma} \psi^{\text{PARNI}-\bullet}_{\theta, k}(\gamma, \gamma^\prime)(1- \alpha^{\text{PARNI}-\bullet}_{\theta, k}(\gamma, \gamma^\prime)) \label{mat:PARNI_sub_trankernel}
\end{align}
and $\alpha^{\text{PARNI}}_{\theta, k}$ are Metropolis-Hastings acceptance rates in (\ref{mat:PARNIacc2}). The Markov chain transitional kernel that works on the product space $\Gamma^{\otimes L}$ is given as
\begin{align}
    P^{\text{PARNI}-\bullet}_{(\theta, k^{\otimes L})}(\gamma^{\otimes L}, S^{\otimes L}) = \prod_{l=1}^L P^{\text{PARNI}-\bullet}_{(\theta, k^{l})}(\gamma^{l}, S^{l}).
\end{align}

To establish the ergodicity and a SLLN of the PARNI sampler and its multiple chain acceleration, we require the following assumptions:
\begin{enumerate}[label=(A.\arabic*)]
\item For some small constant $\varepsilon > 0$ and constant $C_g$, the balancing function $g$ satisfies
\begin{align}
    g(t_2) - g(t_1) \leq C_g (t_2 - t_1) \label{mat:bal_fun_inequality}
\end{align}
where $\varepsilon < t_1 < t_2$. This is a common condition for the proper choice of balancing functions. For example, Hastings' choice $g_\text{H}(t) = \min\{1,t\}$ follows (\ref{mat:bal_fun_inequality}) immediately for $C_g = 1$ and Barker's choice $g_\text{B} = t/(1+t)$ also follows (\ref{mat:bal_fun_inequality}) when $C_g = 1$ (\textit{i.e.} the maximum derivative).
\item The posterior distribution $\pi$ is everywhere positive and bounded, that is, there exists a positive $\Pi \in (1, \infty)$ such that
\begin{align*}
    \frac{1}{\Pi} \leq \frac{\pi(\gamma^\prime)}{\pi(\gamma)} \leq \Pi
\end{align*}
for all $\gamma$, $\gamma^\prime \in \Gamma$.
\item The tuning parameters $\theta^{(i)} = (\eta^{(i)}, \omega^{(i)})$ are bounded away from 0 and 1, and lie in the set
    \begin{align}
        \theta^{(i)} \in \Delta_\epsilon^{2p+1} \label{mat:delta_epsilon}
    \end{align}
    for some small $\epsilon \in (0, 1/2)$.
\end{enumerate}

The analysis of convergence and ergodicity often relies on the distribution of the Markov chain at time $i$ along with its associated total variation distance $\|\cdot\|_{TV}$ at an arbitrary starting point. Given $\{\gamma^{l,(i)}\}_{i=0}^\infty$ these are defined as
\begin{align}
    \mathcal{L}^{l,(i)} [(\gamma^{l}, \theta), S] := \Pr\left[ \gamma^{l, (i)} \in S| \gamma^{l, (0)} = \gamma^{l}, \theta^{0} = \theta \right], \\
    \lim_{i \to \infty} T^l(\gamma^l, \theta, i) := \| \mathcal{L}^{l,(i)} [(\gamma^{l}, \theta), \cdot] - \pi(\cdot) \|_{TV}.
\end{align}
We show here that the PARNI sampler is ergodic and satisfies a strong law of large numbers (SLLN). In mathematical terms for any starting point $\gamma^{\otimes L} \in \Gamma^{\otimes L}$ and $\theta \in \Delta_\epsilon^{2p+1}$ ergodicity means that
\begin{equation}
 \lim_{i \to \infty} T^l(\gamma^l, \theta, i) \to 0, \quad  \text{} \label{mat:PARNI_ergodicity}
\end{equation}
for any $l = 1, \dots, L$, while a strong law of large numbers (SLLN) implies that
\begin{equation}
\frac{1}{NL}   \sum_{i=0}^{N-1} \sum_{l=1}^L f(\gamma^{l,(i)}) \to \pi(f) \label{mat:PARNI_SLLN}
\end{equation}
almost surely, for any $f:\Gamma \to \mathbb{R}$.  We first establish two technical results before presenting the main theorem of this section.

\begin{lemma}[Simultaneous Uniform Ergodicity]
\label{lemma:PARNI_simerg}
The MCMC transition kernel $P_\theta^{\text{PARNI}-\bullet}$ in (\ref{mat:PARNI_trankernel}) with target distribution $\pi$ in (\ref{mat:target_post}) is simultaneously uniformly ergodic for any choice of $\epsilon \in (0,1/2)$ in (\ref{mat:delta_epsilon}). i.e. for any $\delta>0$, there exists $N = N(\delta, \epsilon)$ such that 
\begin{align*}
    \left\|\left(P^{\text{PARNI}-\bullet}_\theta(\gamma^{\otimes L},\cdot)\right)^N - \pi^{\otimes L}(\cdot)\right\|_{TV} \leq \delta
\end{align*}
holds for any any starting point $\gamma^{\otimes L} \in \Gamma^{\otimes L}$ and any value $\theta \in \Delta_\epsilon^{2p+1}$.
\end{lemma}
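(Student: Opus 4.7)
The plan is to establish simultaneous uniform ergodicity via a uniform Doeblin-type minorization on the finite product space $\Gamma^{\otimes L}$. Specifically, I aim to show that there exists $\delta_0 > 0$, depending only on $\epsilon$, $\Pi$, $C_g$, $p$ and $L$, such that
\begin{align*}
P^{\text{PARNI}-\bullet}_\theta(\gamma^{\otimes L}, \gamma'^{\otimes L}) \geq \delta_0
\end{align*}
holds for every $\gamma^{\otimes L}, \gamma'^{\otimes L} \in \Gamma^{\otimes L}$ and every $\theta \in \Delta_\epsilon^{2p+1}$. This is equivalent to the minorization $P^{\text{PARNI}-\bullet}_\theta(\gamma^{\otimes L}, \cdot) \geq (\delta_0 |\Gamma|^L)\,\mathrm{Unif}(\Gamma^{\otimes L})$, so a standard coupling argument yields $\|(P^{\text{PARNI}-\bullet}_\theta)^N(\gamma^{\otimes L}, \cdot) - \pi^{\otimes L}\|_{TV} \leq (1 - \delta_0 |\Gamma|^L)^N$, and the right-hand side can be made smaller than any prescribed $\delta$ by choosing $N$ large enough, uniformly in $\theta$. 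Because the kernel factorises across the $L$ independent chains, it is enough to prove an analogous single-chain uniform lower bound $\delta > 0$ and then set $\delta_0 = \delta^L$.

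For a single chain and arbitrary $\gamma, \gamma' \in \Gamma$, I would work with the specific indicator $k^*$ defined by $k^*_j = |\gamma_j - \gamma'_j|$, so that $\gamma' \in N(\gamma, k^*)$ and the PARNI trajectory from $\gamma$ to $\gamma'$ flips exactly the coordinates on which they disagree. The one-step probability then satisfies
\begin{align*}
P^{\text{PARNI}-\bullet}_\theta(\gamma, \gamma') \geq p^{\text{RN}}_{\eta^{\text{opt}}}(k^* | \gamma)\cdot q^{\text{PARNI}}_{\theta, k^*}(\gamma, \gamma')\cdot \alpha^{\text{PARNI}}_{\theta, k^*}(\gamma, \gamma'),
\end{align*}
and each factor admits a uniform lower bound. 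The first factor is a product of $p$ Bernoulli-type terms each at least $\epsilon$ by assumption (A.3). For the second, each of the $p_{k^*}$ position-wise factors in (\ref{mat:PARNI_prop}) is a ratio $g(\cdot) q^{\text{THIN}}/Z(r)$ whose numerator is bounded below by using (A.1), (A.2) and (A.3): the argument fed into $g$ consists of a ratio of posterior values times a ratio of $p^{\text{RN}}$ probabilities, both of which lie in an interval bounded away from $0$ and $\infty$ in terms of $\Pi$ and $\epsilon$, so $g$ of it is bounded away from $0$. The denominator $Z(r)$ is simultaneously bounded above by the same uniform bounds on its two summands.

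The main obstacle is the acceptance probability, which by Proposition \ref{prop:PARNI_acc} equals $\min\{1, \prod_{r=1}^{p_{k^*}} Z(r)/Z'(r)\}$, a product of up to $p$ terms that may individually fall below $1$. I would handle this by bounding each ratio $Z(r)/Z'(r)$ below by a single uniform constant $c_0 > 0$: both numerator and denominator are sums of just two terms of the form $g(\text{bounded ratio}) \cdot q^{\text{THIN}}$, and all ingredients are uniformly pinched between positive constants by (A.1), (A.2) and (A.3). Because $p_{k^*} \leq p$ is fixed, $c_0^{p}$ is a uniform lower bound on the entire product, hence on $\alpha^{\text{PARNI}}_{\theta, k^*}(\gamma, \gamma')$. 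Combining this with the bounds on the other two factors produces the required $\delta > 0$, and taking $\delta_0 = \delta^L$ completes the proof of simultaneous uniform ergodicity.
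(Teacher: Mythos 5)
Your proof is correct and follows essentially the same route as the paper's: a uniform one-step Doeblin minorization obtained by lower-bounding the neighbourhood probability by $\epsilon^p$, each position-wise proposal factor by a quantity of the form $\epsilon g^{\downarrow}/\bigl(2(1-\epsilon)g^{\uparrow}\bigr)$ using (A.1)--(A.3), and the acceptance probability by a constant depending only on $\epsilon$, $\Pi$ and $g$, then tensorising over the $L$ independent chains. The only minor variation is that you bound the acceptance probability through the reduced form $\min\{1,\prod_r Z(r)/Z^\prime(r)\}$ of Proposition \ref{prop:PARNI_acc} by pinching each normalising-constant ratio, whereas the paper discards the $\min$ by observing that the denominator of the Metropolis--Hastings ratio is at most one and lower-bounding the numerator directly; both give a uniform constant of the same form, and your choice of the single indicator $k^*_j=|\gamma_j-\gamma_j^\prime|$ is if anything a slightly cleaner way of extracting one positive term from the sum over $k$.
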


\begin{lemma}[Diminishing adaptation]
\label{lemma:PARNI_dimadap} 
Let $\lambda = 0.5$ be the constant of diminishing rate, for any $\epsilon \in(0, 1/2)$ and $\pi_0 \in (0, 1)$, the PARNI sampler satisfies diminishing adaptation, that is, its transition kernel satisfies
\begin{align}
    \sup_{\gamma \in \Gamma} \left\|P^{\text{PARNI}-\bullet}_{\theta^{(i+1)}}(\gamma, \cdot) - P^{\text{PARNI}-\bullet}_{\theta^{(i)}}(\gamma, \cdot) \right\|_{TV} \leq C i^{-\lambda} \label{mat:PARNI_dimadap_tv}
\end{align}
for some constant $C < \infty$.
\end{lemma}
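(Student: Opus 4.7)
The plan is to establish (\ref{mat:PARNI_dimadap_tv}) in two separate pieces and then combine them: (i) a uniform bound on the per-step increments $\theta^{(i+1)} - \theta^{(i)}$ of the tuning parameters, and (ii) a Lipschitz estimate of the PARNI transition kernel in $\theta$, valid on the compact set $\Delta_\epsilon^{2p+1}$. Both quantities depend on $\gamma$ only through finite sums over the finite state space $\Gamma$, so all suprema over $\gamma$ and $k$ will be harmless.

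For piece (i), I would examine each coordinate of $\theta = (\eta, \omega)$ separately. The coordinates of $\eta^{(i)} = (A^{(i)}, D^{(i)})$ are constructed from the Rao--Blackwellised estimates $\hat{\pi}^{(i)}_j$ in (\ref{mat:rb_pips}). Because these are running averages of quantities in $[0,1]$, one has $|\hat{\pi}^{(i+1)}_j - \hat{\pi}^{(i)}_j| = O(1/i)$. Passing through the shrinkage $\tilde{\pi}^{(i)}_j = \pi_0 + (1-2\pi_0)\hat{\pi}^{(i)}_j$ keeps $\tilde{\pi}^{(i)}_j$ bounded away from $0$ and $1$, and the maps $x \mapsto \min\{1, x/(1-x)\}$ and $x \mapsto \min\{1, (1-x)/x\}$ are Lipschitz on $[\pi_0, 1-\pi_0]$, hence $|A^{(i+1)}_j - A^{(i)}_j|$ and $|D^{(i+1)}_j - D^{(i)}_j|$ are $O(1/i)$. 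For the scalar $\omega^{(i)}$, the RM rule (\ref{mat:update_omega_RM}) gives $|\mathrm{logit}_\epsilon \omega^{(i+1)} - \mathrm{logit}_\epsilon \omega^{(i)}| = O(\phi_i) = O(i^{-0.7})$, while the KW rule (\ref{mat:update_omega_KW}) gives $O(a_i/c_i) = O(i^{-0.5})$ because ASJD values are uniformly bounded. Since $\mathrm{logit}_\epsilon^{-1}$ has bounded derivative on $\mathbb{R}$, the same rates transfer to $\omega^{(i)}$ itself. Combining all components gives $\|\theta^{(i+1)} - \theta^{(i)}\|_\infty = O(i^{-1/2})$ uniformly over both adaptation schemes.

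For piece (ii), I would show there exists $C' < \infty$ such that for every $\gamma \in \Gamma$ and every $\theta, \theta' \in \Delta_\epsilon^{2p+1}$,
\begin{align*}
\|P^{\text{PARNI}-\bullet}_{\theta'}(\gamma,\cdot) - P^{\text{PARNI}-\bullet}_{\theta}(\gamma,\cdot)\|_{TV} \le C' \|\theta' - \theta\|_\infty.
\end{align*}
Using the decomposition (\ref{mat:PARNI_trankernel}) and the factorisation of the sub-kernel into $p^{\text{RN}}_\eta(k|\gamma) q^{\text{PARNI}}_{\theta,k}(\gamma, \gamma') \alpha^{\text{PARNI}}_{\theta, k}(\gamma, \gamma')$, it suffices to establish that each factor is Lipschitz on $\Delta_\epsilon^{2p+1}$ with a constant independent of $\gamma, \gamma', k$; the total variation bound then follows by summing over the finite set $\mathcal{K} \times \Gamma$. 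The neighbourhood selection probability $p^{\text{RN}}_\eta(k|\gamma)$ is a product of $p$ Bernoulli factors with parameters in $[\epsilon, 1-\epsilon]$, so it is trivially Lipschitz in $\eta$. The within-neighbourhood proposal (\ref{mat:PARNI_prop}) is a product of at most $p$ position-wise informed kernels whose numerators have the form $g(\cdot) q^{\text{THIN}}_{\omega, e(K_r)}(\cdot, \cdot)$; by (A.1) the balancing function $g$ is Lipschitz on $[\varepsilon, \infty)$, by (A.2) the posterior ratios are bounded, and by (A.3) all $p^{\text{RN}}_\eta$ and $q^{\text{THIN}}_\omega$ factors are bounded above and below. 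A standard product-of-Lipschitz-functions argument then yields Lipschitz continuity of each $q^{\text{PARNI}}_{\theta,K_r}$ and, via (\ref{mat:PARNInorcon}), of the full proposal and its normalising constants; the reduced acceptance probability (\ref{mat:PARNIacc2}) inherits this regularity since it is a ratio of normalising constants bounded away from zero.

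Chaining piece (i) with piece (ii) gives $\sup_\gamma \|P^{\text{PARNI}-\bullet}_{\theta^{(i+1)}}(\gamma, \cdot) - P^{\text{PARNI}-\bullet}_{\theta^{(i)}}(\gamma, \cdot)\|_{TV} \le C' \|\theta^{(i+1)} - \theta^{(i)}\|_\infty = O(i^{-1/2})$, which is the claim with $\lambda = 0.5$. The main obstacle is the sequential, product structure of the PARNI proposal: because the new state is built through up to $p$ position-wise moves whose normalising constants in (\ref{mat:PARNInorcon}) depend on the intermediate models $\gamma(r)$ that themselves depend on $\theta$, the Lipschitz dependence propagates multiplicatively. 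Fortunately $p$ is fixed, $\Gamma$ is finite, and by (A.2)--(A.3) every normalising constant $Z(r)$ and $Z'(r)$ is bounded below uniformly on $\Delta_\epsilon^{2p+1}$, so the multiplicative blow-up inflates only the constant $C'$ and does not affect the rate $i^{-1/2}$; the KW case is the binding one and pins down the exponent $\lambda = 0.5$.
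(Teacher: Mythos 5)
Your proposal is correct and follows essentially the same route as the paper: both arguments reduce the total-variation bound to factor-by-factor Lipschitz control of $p^{\text{RN}}_\eta$, the position-wise informed kernels, their normalising constants and the acceptance probability on $\Delta_\epsilon^{2p+1}$ (using (A.1)--(A.3) and the lower bounds on $Z(r)$), combined with the $O(1/i)$ increments of the Rao--Blackwellised estimates and the $O(i^{-0.5})$ increments of $\omega$ under the Kiefer--Wolfowitz rule, which is the binding rate. The only difference is organisational: you package the kernel regularity as a standalone uniform Lipschitz-in-$\theta$ estimate before chaining, whereas the paper tracks the parameter increments directly through each factor via auxiliary lemmas, but the substance is identical.
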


\begin{theorem}[Ergodicity and SLLN]  \label{theo:PARNI_ergslln}
Consider a target distribution $\pi(\gamma)$ in (\ref{mat:target_post}), tuning rate $\lambda = 0.5$ and $\epsilon \in (0,1/2)$ that lead to a diminishing rate $\mathcal{O}(i^{-\lambda})$, and the parameter $\pi_0 > 0$ in Algorithm \ref{alg:PARNI}. Then ergodicity (\ref{mat:PARNI_ergodicity}) and a strong law of large numbers (\ref{mat:PARNI_SLLN}) hold for both the $\text{PARNI-KW}$ and $\text{PARNI-RM}$ samplers as described in Algorithm \ref{alg:PARNI} and its corresponding multiple chain acceleration versions.
\end{theorem}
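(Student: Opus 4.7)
The plan is to combine Lemma \ref{lemma:PARNI_simerg} and Lemma \ref{lemma:PARNI_dimadap} with the general adaptive MCMC convergence results of Roberts and Rosenthal (2007). Specifically, those authors show that an adaptive MCMC scheme whose family of transition kernels satisfies \emph{simultaneous uniform ergodicity} (their containment condition is then automatic) together with \emph{diminishing adaptation} is ergodic in total variation, in the sense of \eqref{mat:PARNI_ergodicity}. Both hypotheses are furnished by the lemmas just above: Lemma \ref{lemma:PARNI_simerg} supplies a uniform minorisation/ergodicity estimate over all $\theta \in \Delta_\epsilon^{2p+1}$, and Lemma \ref{lemma:PARNI_dimadap} establishes the $\mathcal{O}(i^{-\lambda})$ bound on $\|P^{\text{PARNI}-\bullet}_{\theta^{(i+1)}}(\gamma,\cdot) - P^{\text{PARNI}-\bullet}_{\theta^{(i)}}(\gamma,\cdot)\|_{TV}$, which is summable in the sense required by the diminishing adaptation assumption. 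So ergodicity is essentially a direct citation.

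For the SLLN, the plan is to invoke an analogue, such as Theorem 5 of Roberts and Rosenthal (2007) (or the extensions in \L{}atuszy\'nski, Roberts and Rosenthal), which shows that under simultaneous uniform ergodicity plus diminishing adaptation, the empirical averages $N^{-1}\sum_{i=0}^{N-1} f(\gamma^{(i)})$ converge almost surely to $\pi(f)$ for every bounded $f$. Since $\Gamma = \{0,1\}^p$ is finite, every $f:\Gamma\to\mathbb{R}$ is bounded, so no extra integrability hypotheses are needed and \eqref{mat:PARNI_SLLN} follows.

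The last piece is the \emph{multiple chain} version. The plan is to lift both lemmas from $\Gamma$ to the product space $\Gamma^{\otimes L}$. Since the joint kernel factorises as $P^{\text{PARNI}-\bullet}_{(\theta,k^{\otimes L})}(\gamma^{\otimes L}, S^{\otimes L}) = \prod_{l=1}^L P^{\text{PARNI}-\bullet}_{(\theta,k^l)}(\gamma^l,S^l)$ and all $L$ chains share a single adaptive parameter $\theta$, a $\delta$-small total-variation bound on each factor gives an $L\delta$-small bound on the product (by a standard telescoping inequality for product measures), so simultaneous uniform ergodicity lifts. Likewise diminishing adaptation lifts, with the constant inflated by a factor $L$. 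One then reapplies Roberts--Rosenthal on $\Gamma^{\otimes L}$ with target $\pi^{\otimes L}$ to obtain ergodicity, and the joint SLLN averaged over $i$ and $l$ follows by the same result applied to the function $f^{\otimes L}(\gamma^{\otimes L}) := L^{-1}\sum_{l=1}^L f(\gamma^l)$, which has $\pi^{\otimes L}$-mean $\pi(f)$.

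The main obstacle is essentially absorbed into the two lemmas; once those are in hand the theorem is a mechanical application of standard adaptive MCMC machinery. The only subtlety worth flagging in the write-up is verifying that the RM and KW updates both produce $\theta^{(i)} \in \Delta_\epsilon^{2p+1}$ for all $i$ (so that assumption (A.3) holds along the realised trajectory), which is guaranteed by use of the truncated logit map $\text{logit}_\epsilon$ in \eqref{mat:logit_eps} together with the fact that the Rao--Blackwellised estimates $\tilde{\pi}^{(i)}_j$ are pinned to the interior $(\pi_0, 1-\pi_0)$ by the $\pi_0$-convex combination in Algorithm \ref{alg:PARNI}.
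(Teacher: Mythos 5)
Your overall architecture is exactly the paper's: ergodicity follows from Lemma \ref{lemma:PARNI_simerg} (simultaneous uniform ergodicity) plus Lemma \ref{lemma:PARNI_dimadap} (diminishing adaptation) via Theorem 1 of Roberts and Rosenthal (2007), and the multiple-chain version is handled on the product space $\Gamma^{\otimes L}$ with target $\pi^{\otimes L}$ (the paper's Lemma \ref{lemma:PARNI_simerg} already establishes that $\Gamma^{\otimes L}$ is $(1,b^{\otimes L},\pi^{\otimes L})$-small, so your telescoping lift is not even needed there). Your closing remark about checking that the truncated $\text{logit}_\epsilon$ map and the $\pi_0$-convex combination keep $\theta^{(i)}$ inside $\Delta_\epsilon^{2p+1}$ is a sensible point that the paper leaves implicit.

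The one substantive divergence is the SLLN citation. Theorem 5 of Roberts and Rosenthal (2007) is a \emph{weak} law of large numbers; it does not by itself deliver the almost-sure convergence claimed in \eqref{mat:PARNI_SLLN}. The paper instead invokes Theorem 2.7 and Corollary 2.8 of Fort, Moulines and Priouret (2011), which requires verifying (i) a drift/minorisation condition --- trivial here by taking $V\equiv 1$, $\lambda_\theta = 1/2$, $b_\theta=1$, $\nu_\theta$ uniform on $\Gamma$, and the minorisation constant $b$ from Lemma \ref{lemma:PARNI_simerg} --- and (ii) the weighted summability $\sum_i i^{-1}\sup_\gamma\|P_{\theta^{(i+1)}}(\gamma,\cdot)-P_{\theta^{(i)}}(\gamma,\cdot)\|_{TV}<\infty$. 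Relatedly, your statement that the $\mathcal{O}(i^{-\lambda})$ bound with $\lambda=0.5$ ``is summable'' is not right as written: $\sum_i i^{-1/2}$ diverges, and diminishing adaptation only needs the supremum to vanish. What saves the SLLN is precisely the extra $i^{-1}$ weight in the Fort et al. condition, giving a convergent $\sum_i i^{-3/2}$ series. Your hedge towards the \L{}atuszy\'nski--Roberts--Rosenthal extensions would also work, but as the primary citation Roberts--Rosenthal Theorem 5 is insufficient for a \emph{strong} law; you should either switch to the Fort et al. route the paper takes or make the stronger reference explicit and verify its hypotheses.
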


\section{Numerical studies}
\label{sec:numerical_studies}

\subsection{Simulated data}
\label{subsec:simulated_dataset}

We consider the data generation model introduced by \cite{yang2016computational}, and replicated in simulation studies conducted by \cite{griffin2021search} and \cite{zanella2019scalable}. Suppose a linear model with $n$ observations and $p$ covariates is needed, data are generated from the model specification 
\begin{align*}
    y = X^* \beta^* + \epsilon
\end{align*}
where $\epsilon \sim N_n(0, \sigma^2 I_n)$ for pre-specified residual variance $\sigma^2$ and $\beta^* = \text{SNR}\times \Tilde{\beta}\sqrt{(\sigma^2\log p)/n}$ in which $\text{SNR}$ represents the signal-to-noise ratios. Let $\Tilde{\beta} = (2, -3, 2,2,-3,3,-2,3,-2,3,0,\cdots,0)$ and each row of the design matrix $X^*_i$ follow a multivariate normal distribution with mean zero and covariance $\Sigma$ with entries $\Sigma_{j j} = 1$ for all $j$ and $\Sigma_{ij} = 0.6^{|i-j|}$ for $i \ne j$. We consider four choices of SNR, namely 0.5, 1, 2 and 3, two choices of $n$, namely 500 and 1,000 and three choices of $p$, namely 500, 5,000 and 50,000. 

We use the prior parameter values $V_\gamma = I_{p_\gamma}$, $g = 9$ and $h = 10/p$. \cite{griffin2021search} give a detailed description of the resulting posterior distributions. In the presence of a low SNR ($\text{SNR} = 0.5$), there is too much noise to detect the true non-zero variables and the resulting posterior is rather flat, with no variables having posterior inclusion probabilities larger than $0.1$. The posterior distributions are completely different when the SNR is large ($\text{SNR} = 2$ and $\text{SNR} = 3$). In these cases all of the true non-zero variables have inclusion probabilities close to $1$ as the posterior distributions are more concentrated. In the intermediate case $\text{SNR} = 1$ slightly less than half of the true non-zero variables have inclusion probabilities above $0.8$. In general the problem of finding the true non-zero variables becomes more difficult in the cases with lower SNR, smaller $n$ and larger $p$. 

We are interested in comparing the performance of the ASI and PARNI schemes relative to an Add-Delete-Swap sampler because the ASI scheme has been compared with several other state-of-the-art MCMC algorithms in \cite{griffin2021search}. The adaptive algorithms are run with 25 multiple chains. In addition, to reduce the computational budget, all the adaptations terminate after the period of burn-in.

The choice of balancing function mainly focuses on three particular candidates: square root function $g_{\text{sq}} (t) = \sqrt{t}$, Hastings' choice $g_{\text{H}}(t)=\min\{1,t\}$ and Barker's choice $g_{\text{B}} = t/(1+t)$. The comparisons of these balancing functions in Supplement B.1.3 of \cite{zanella2020informed} illustrate two major findings. The Hastings' and Barker's choices only differ by at most a factor of 2 due to their similar asymptotic behaviors. The square root function mixes worsen outside the burn-in phase. Therefore, we consider the Hastings' choice throughout the section. Similar results are also expected for the Barker's choice.

Trace plots of chains are a straightforward way to visualise convergence. Fig. \ref{fig:yang_trace} are the trace plots of posterior model probabilities from the Add-Delete-Swap, ASI, PARNI-KW and PARNI-RM algorithms for the first 1,500 iterations when the SNR = 2. The Add-Delete-Swap scheme fails to converge for all choices of $n$ and $p$ and in particular becomes trapped at the empty model for a long period of time when $p=50,000$. The ASI scheme converges reasonably quickly when $p$ is $500$ or $5,000$, but takes longer to reach high probability regions when $p = 50,000$. This suggests that ASI mixes worse and converges slower in high-dimensional datasets. On the other hand, both the PARNI-KW and PARNI-RM samplers mix rapidly in this setting.

The trace plots are not truly a fair comparison as they do not take into account running time. To better address the issue of computational efficiency we repeatedly ran all of the algorithms for 15 minutes and stored the estimates of posterior inclusion probabilities. We calculated mean squared errors of these estimates compared to ``gold standard'' estimates taken from a weighted tempered Gibbs sampler that has ran for any extremely long time. We show results in the form of performance relative to the Add-Delete-Swap scheme in Table \ref{table:yang_mse}. Smaller values always indicate better performance of the scheme. The value of -1 indicates the scheme yields $10$ times smaller mean squared errors compared to those from the Add-Delete-Swap scheme in this specific dataset. Generally speaking, the mean squared errors for important variables are greater than those for important variables for almost every dataset and scheme. The choice of $n$ does not significantly affect the performance of the samplers.  Concentrating on the results for important variables, the ASI scheme leads to an order of magnitude improvement in efficiency over the Add-Delete-Swap sampler, which match the results in \cite{griffin2021search}. The two PARNI algorithms with different adaptations lead to similar levels of accuracy and dominate both the ASI and Add-Delete-Swap schemes in every case when $p>500$. In particular, the PARNI schemes result in roughly $10^5$ times improvements over Add-Delete-Swap and $10$ times improvements over ASI when $p=50,000$ and SNR$ = 2$. On the other hand, the Add-Delete-Swap scheme is quite adept at removing the unimportant variables when the true model size is small compared to the number of covariates. When $p = 50,000$ and SNR$>1$ the ASI scheme struggles with unimportant variables and has yet to give estimates better than Add-Delete-Swap, but the PARNI algorithms produce better estimates even for these unimportant variables. Overall, the results suggest both PARNI samplers are more computationally efficient than alternatives when $p$ is large. More results from simulated data are provided in  Section \ref{apx:yang_results} of the supplementary material.

\subsection{Real data}
\label{subsec:resl_dataset}

We consider eight real datasets implemented in \cite{griffin2021search}, four of them with moderate $p$ and four with larger $p$. 

The first dataset is the Tecator dataset, which is previously analysed by \cite{brown2010inference} in Bayesian linear regression and implemented by \cite{lamnisos2013adaptive} and  \cite{griffin2021search} in the context of Bayesian variable selection. It contains 172 observations and 100 explanatory variables. We also consider three small-$p$ data sets constructed by \cite{schafer2013sequential} to illustrate the performance of sequential Monte Carlo algorithms on Bayesian variable selection problems, the Boston Housing data $(n = 506, p=104)$, the Concrete data $(n = 1030, p = 79)$ and the Protein data $(n = 96, p = 88)$.  These data sets are extended by squared and interaction terms which lead to high dependencies and multicollinearity.

The last four data sets are high-dimensional problems with very large-$p$. Three of them come from an experiment conducted by \cite{lan2006combined} to examine the genetics of two inbred mouse populations. The experiment resulted in a set of data with 60 observations in total that were used to monitor the expression levels of $22,575$ genes of $31$ female and $29$ male mice. \cite{bondell2012consistent} first considered this dataset in the context of variable selection. Three physiological phenotypes are also measured by quantitative real-time  polymerase chain
reaction (PCR), they are used as possible responses and are named $\text{PCR}_i$ for $i=1,2,3$ respectively. For more details, see \cite{lan2006combined,bondell2012consistent}. The last dataset concerns genome-wide mapping of a complex trait. The data are illustrated in \cite{carbonetto2017varbvs}. They are body and testis weight measurements recorded for $993$ outbred mice, and genotypes at $79,748$ single nucleotide polymorphisms (SNPs) for the same mice. The main purpose of the study is to identify genetic variants contributing to variation in testis weight. Thus, we consider the testis weight as response, the body weight as a regressor that is always included in the model and
variable selection is performed on the 79,748 SNPs.

Before analysing the performance of MCMC algorithms on the above datasets, it is worth discussing the selection of an optimal acceptance rate for the PARNI-RM sampler. The optimal scaling property of a Gaussian random walk proposal on some specific forms of target distribution is a well-studied problem. The most commonly used guideline is to seek an average acceptance rate of 0.234 \citep{gelman1997weak}. The optimal acceptance rates for sophisticated \textit{informed} proposals involving gradient information are typically larger, \textit{e.g.} 0.57 for the \textit{Metropolis-adjusted Langevin algorithm} \citep{grenander1994representations,roberts1998optimal} and 0.65 for \textit{Hamiltonian Monte Carlo} \citep{duane1987hybrid, beskos2013optimal}. As our balanced random neighbourhood proposals can be viewed as a discrete analog to these gradient-based algorithms, it is natural to think that the PANRI sampler will have a larger optimal acceptance rate than a random walk Metropolis. To test this we ran the PARNI-RM scheme targeting different rates of acceptance on the above datasets. Fig. \ref{fig:opt_acc_rate} shows the effect of the average acceptance rate on the expected squared jumping distance and average mean squared errors of the sampler. Parts (a) and (b) of the figure illustrate the relation between the thinning parameter $\omega$ and the average acceptance rate. Bigger values of $\omega$ are synonymous with larger jumps and therefore can lead to a smaller average acceptance rate.  Parts (c) and (d) of the figure suggest that the maximum average squared jumping distance occurs when the acceptance rate is around $0.65$ for all datasets. Parts (e) and (f) show that the average mean squared error is minimised when acceptance is around the same value. Therefore, we recommend targeting an average acceptance rate of $0.65$ in most problems. Similar results for the simulated datasets of \ref{subsec:simulated_dataset} are presented in \ref{apx:yang_omega} of the supplementary material. We stress that the PARNI-KW scheme does not require a target acceptance rate to be chosen, so users who are uncomfortable with having to choose this quantity for a particular dataset are recommended to use this version of the sampler.

We consider a total of seven different MCMC schemes for these sets of data. In addition to the four schemes used in the simulation study (Add-Delete-Swap, ASI, PARNI-KW and PARNI-RM), we also implement 3 state-of-the-art algorithms, the Hamming ball sampler (HBS) with radius of 1 of \cite{titsias2017hamming}, and also both the tempered Gibbs sampler (TGS) and weighted tempered Gibbs sampler (WTGS) of \cite{zanella2019scalable}. The prior specification for each dataset is given in Table \ref{table:real_prior}.

Fig. \ref{fig:real_trace} shows trace plots of posterior model probabilities from the Add-Delete-Swap, ASI, PARNI-KW and PARNI-RM algorithms for the first 1,500 iterations in all eight real datasets. It is clear that the Add-Delete-Swap scheme does not mix well since it struggles to explore model space.   All algorithms do reach high probability regions for datasets with moderate $p$ in roughly the same number of iterations, however both the PARNI schemes appear to mix better than both Add-Delete-Swap and ASI and explore more models around the model space. In the large-$p$ datasets, these algorithms lead to different behaviour.  The Add-Delete-Swap scheme gets trapped in the empty model whereas the ASI algorithm does not converge properly for the first 1,500 iterations. The PARNI schemes, by contrast, accept almost every proposed states and mix very quickly. The figure suggests that both the PARNI schemes mix no worse than Add-Delete-Swap and ASI in moderate $p$ problems, and better in large $p$ problems.

We next turn attention to the average mean squared errors on these eight real datasets. The PARNI samplers do not dominate other schemes in moderate-$p$ datasets, but they still lead to impressive results that are not close to optimal. The worst performance is for the Boston Housing and Concrete datasets, which are multi-modal and contain intricately correlated covariates. For large-$p$ problems both the PARNI schemes significantly outperform other samplers. Surprisingly, the HBS and TGS schemes lead to worse estimates than Add-Delete-Swap. This can be explained by the computational cost per iteration of the HBS, TGS and WTGS algorithms, which is linear in $p$. These large computational costs combined with the issue of rarely exploring important variables lead to low efficiencies for HBS and TGS. The WTGS algorithm still outperforms TGS, which coincides with the conclusions gathered in \cite{zanella2019scalable} where the WTGS algorithm is shown to have smaller relaxation time than TGS. The ASI algorithm gives competitive estimates to WTGS in high-dimension but is eventually outperformed by the PARNI schemes. Among the PARNI schemes, the PARNI sampler with Kiefer-Wolfowitz adaption generally performs better than the Robbins-Monro version, but only by a small margin. This is due to the fact that the optimal acceptance rates are problem-specific and not exactly 0.65  for every dataset.

\section{Discussion and future work}
\label{sec:discussion_future}

In this paper we present a framework for neighbourhood based MCMC algorithms, and propose a new scheme as an \textit{informed} counterpart to the ASI algorithm in \cite{griffin2021search}, using elements from locally balanced Metropolis--Hastings introduced in \cite{zanella2020informed}. To address the expensive computational costs introduced by the informed proposal, we introduce two less computationally costly algorithms, the PARNI schemes, which can lead to a dramatic improvement in performance. The two PARNI schemes employ different adaptation schemes for the thinning parameter, the Kiefer-Wolfowitz and Robbins Moron schemes. The success of these new samplers is attributed to two aspects. Firstly the adaptation helps to explore the areas of interest (mainly with high posterior probabilities), and secondly the locally informed proposals are able to stabilise random walk behaviour in high-dimensions and lead to rapidly mixing samplers in practice. From the numerical studies on both simulated and real datasets, we recommend using a PARNI sampler with the Kiefer-Wolfowitz scheme for tackling high-dimensional (or large-$p$) Bayesian variable selection problems. We note that it can still be challenging for the PARNI samplers to move across low probabilistic regions, which could affect performance when the posterior has very  isolated modes. This phenomenon is due to the fact that the PARNI samplers propose models sequentially where each sub-proposal can alter only 1 position at most. On the other hand, the original ARNI scheme can take larger jumps and is more able to explore well-separated modes, albeit with a substantial increase in computational costs. In summary, new schemes like PARNI show the potential of combining adaptive, random neighbourhood and informed proposals. We look forward to adding more theoretical support to the numerical evidence shown here in future work. In addition, the code to run the PARNI samplers and aforementioned numerical studies is given in
\url{https://github.com/XitongLiang/The-PARNI-scheme.git}.

The paper provides many directions for extensions and future work. Some recent work has shed light on the issues of extra computational costs that come with informed proposals. \cite{grathwohl2021oops} develop an accelerated locally informed proposal that uses derivatives with respect to the log mass functions. It is possible to derive the gradient of the posterior mass function with respect to $\gamma$ with minor modifications on representations of the posterior distribution $\pi(\gamma)$. To address the lack of mode jumping in the PARNI schemes, we can first try to construct larger blocks intelligently so that separated models are covered in one single block. This solution can be achieved by introducing basis vectors beyond the Cartesian case in the block construction. One can also use the sequential Monte Carlo methods of \cite{ji2013adaptive}, \cite{schafer2013sequential} and \cite{ma2015scalable}, which are more able to handle multimodality. Combining them with PARNI yields the chance of producing efficient methods on highly multimodal posterior distributions with well-separated modes. Another option in this direction is the JAMS algorithm of \cite{pompe2020framework} that first locates each individual mode and then produces a mixture proposal that involves jumps within and between modes.

We intend to also study the performance of the PARNI schemes in generalised linear models as in \cite{wan2021adaptive} or a more flexible Bayesian variable selection model such as that suggested by \cite{rossell2018tractable}. In these cases, regression coefficients and residual variance are no longer integrated out analytically and the likelihood of $\gamma$ is not available in closed form. Informed proposals for such models are computationally challenging because the proposals involve the evaluations of these likelihood but the required approximations and estimates of the marginal likelihood are computationally intensive. One possible approach is the data-augmentation method using the P\'{o}lya-gamma distribution as described in \cite{polson2013bayesian}. The design does however require some care to avoid inefficiency causing by introducing a large number of auxiliary variables in large-$n$ problems. We also believe that their is potential to use random neighbourhood samplers beyond variable selection, and aim to consider to other discrete-valued sampling problems in future work.

\section*{Acknowledgements}

XL thanks Dr. Krzysztof {\L}atuszy{\'n}ski for the discussion and help for the proof of Lemma \ref{lemma:PARNI_dimadap} and related results.

\begin{figure}[t]
    \centering
    \begin{tikzpicture}
            \draw  plot[smooth, tension=.8] coordinates {(-6.5, 0) (-5,3.5) (0, 4.6) (5.5,3.7) (5.7, -1) (5,-3.5) (1, -3.2) (-1, -3.3)  (-5.4,-3.5) (-6.5, 0)};
            \draw [red] (-3.7,-2) ellipse (2.3cm and 1.4cm);
            \draw [rotate=-10, color=red] (-2.5,-0.5) ellipse (1.4cm and 2.3cm);
            \draw[rotate=50, color=red] (0,3) ellipse (2.3cm and 1.4cm);
            \draw[rotate=-60, color=red] (-0.5,3.1) ellipse (2.3cm and 1.4cm);
            \draw[rotate=-70, color=red] (1.6,3.3) ellipse (2.3cm and 1.4cm);
            \draw[fill, orange] (-5,-2) circle [radius=0.07];
            \node[below, orange] at (-5,-2) {\Large $\gamma = \gamma(0)$};
            \draw[fill, blue] (-3,-1.5) circle [radius=0.07];
            \node[below right, blue] at (-3,-1.5) {\Large $\gamma(1)$};
            \draw[fill, blue] (-2.5,1.3) circle [radius=0.07];
            \node[above left, blue] at (-2.5,1.3) {\Large $\gamma(2)$};
            \node[blue] at (0.2,3) {\Large $\cdots$};
            \draw[fill, blue] (3,1) circle [radius=0.07];
            \node[above right, blue] at (3,1) {\Large $\gamma(p_k-1)$};
            \draw[fill, purple] (4,-1.3) circle [radius=0.07];
            \node[below, purple] at (4,-1.3) {\Large $\gamma(p_k) = \gamma^\prime$};
            \draw [->, thick, cyan] (-4.8,
            -1.95) -- (-3.2,-1.55);
            \draw [->, thick, cyan] (-2.95,
            -1.22) -- (-2.55,1.02);
            \draw [->, thick, cyan] (-2.27,1.47) -- (-0.43,3);
            \draw [->, thick, cyan] (0.84,3) -- (2.76,1.2);
            \draw [->, thick, cyan] (3.1,0.77) -- (3.9,-1.07);
            \node[below, rotate=35] at (-5,3) {\Large $N(\gamma, k)$};
            \node[above, red] at (-5.5,-1) {\Large $N(1)$};
            \node[left, red] at (-4,0.5) {\Large $N(2)$};
            \node[red] at (4.2,3) {\Large $N(p_k-1)$};
            \node[below, red] at (4,-2.5) {\Large $N(p_k)$};
    \end{tikzpicture}
    \begin{tikzpicture}
            \draw  plot[smooth, tension=.8] coordinates {(-6.5, 0) (-5,3.5) (0, 4.6) (5.5,3.7) (5.7, -1) (5,-3.5) (1, -3.2) (-1, -3.3)  (-5.4,-3.5) (-6.5, 0)};
            \draw [red] (-3.7,-2) ellipse (2.3cm and 1.4cm);
            \draw [rotate=-10, color=red] (-2.5,-0.5) ellipse (1.4cm and 2.3cm);
            \draw[rotate=50, color=red] (0,3) ellipse (2.3cm and 1.4cm);
            \draw[rotate=-60, color=red] (-0.5,3.1) ellipse (2.3cm and 1.4cm);
            \draw[rotate=-70, color=red] (1.6,3.3) ellipse (2.3cm and 1.4cm);
            \draw[fill, orange] (-5,-2) circle [radius=0.07];
            \node[below, orange] at (-5,-2) {\Large $\gamma^{\prime}(p_k) = \gamma$};
            \draw[fill, blue] (-3,-1.5) circle [radius=0.07];
            \node[below right, blue] at (-3,-1.5) {\Large $\gamma^{\prime}(p_k-1)$};
            \draw[fill, blue] (-2.5,1.3) circle [radius=0.07];
            \node[above left, blue] at (-2.5,1.3) {\Large $\gamma^{\prime}(p_k-2)$};
            \node[blue] at (0.2,3) {\Large $\cdots$};
            \draw[fill, blue] (3,1) circle [radius=0.07];
            \node[above right, blue] at (3,1) {\Large $\gamma^{\prime}(1)$};
            \draw[fill, purple] (4,-1.3) circle [radius=0.07];
            \node[below, purple] at (4,-1.3) {\Large $\gamma^\prime = \gamma^{\prime}(0)$};
            \draw [<-, thick, cyan] (-4.8,
            -1.95) -- (-3.2,-1.55);
            \draw [<-, thick, cyan] (-2.95,
            -1.22) -- (-2.55,1.02);
            \draw [<-, thick, cyan] (-2.27,1.47) -- (-0.43,3);
            \draw [<-, thick, cyan] (0.84,3) -- (2.76,1.2);
            \draw [<-, thick, cyan] (3.1,0.77) -- (3.9,-1.07);
            \node[below, rotate=35] at (-5,3) {\Large $N(\gamma^\prime, k)$};
            \node[above, red] at (-5.5,-1) {\Large $N^\prime(p_k)$};
            \node[left, red] at (-4,0.5) {\Large $N^\prime(p_\gamma-1)$};
            \node[red] at (4.2,3) {\Large $N^\prime(2)$};
            \node[below, red] at (4,-2.5) {\Large $N^\prime(1)$};
    \end{tikzpicture}
    \caption{Flowcharts of the Pointwise implementation of Adaptive Random Neighbourhood Informed proposal in one iteration. Top panel: proposed direction. Bottom panel: reversed direction. The black neighbourhoods $N(\gamma, k)$ and $N(\gamma^\prime, k)$ are the original large neighbourhoods. The red neighbourhoods $N(r)$ and $N^\prime(r)$ are subsequent small neighbourhoods used for each intermediate proposals. The orange model $\gamma$ is the current state and the cerise model $\gamma^\prime$ is the final proposal. The blue models $\gamma(r)$ and $\gamma^{\prime}(r)$ are intermediate models. The light blue arrows indicate the position-wise proposals.}
    \label{fig:PARNI_flowchart}
\end{figure}
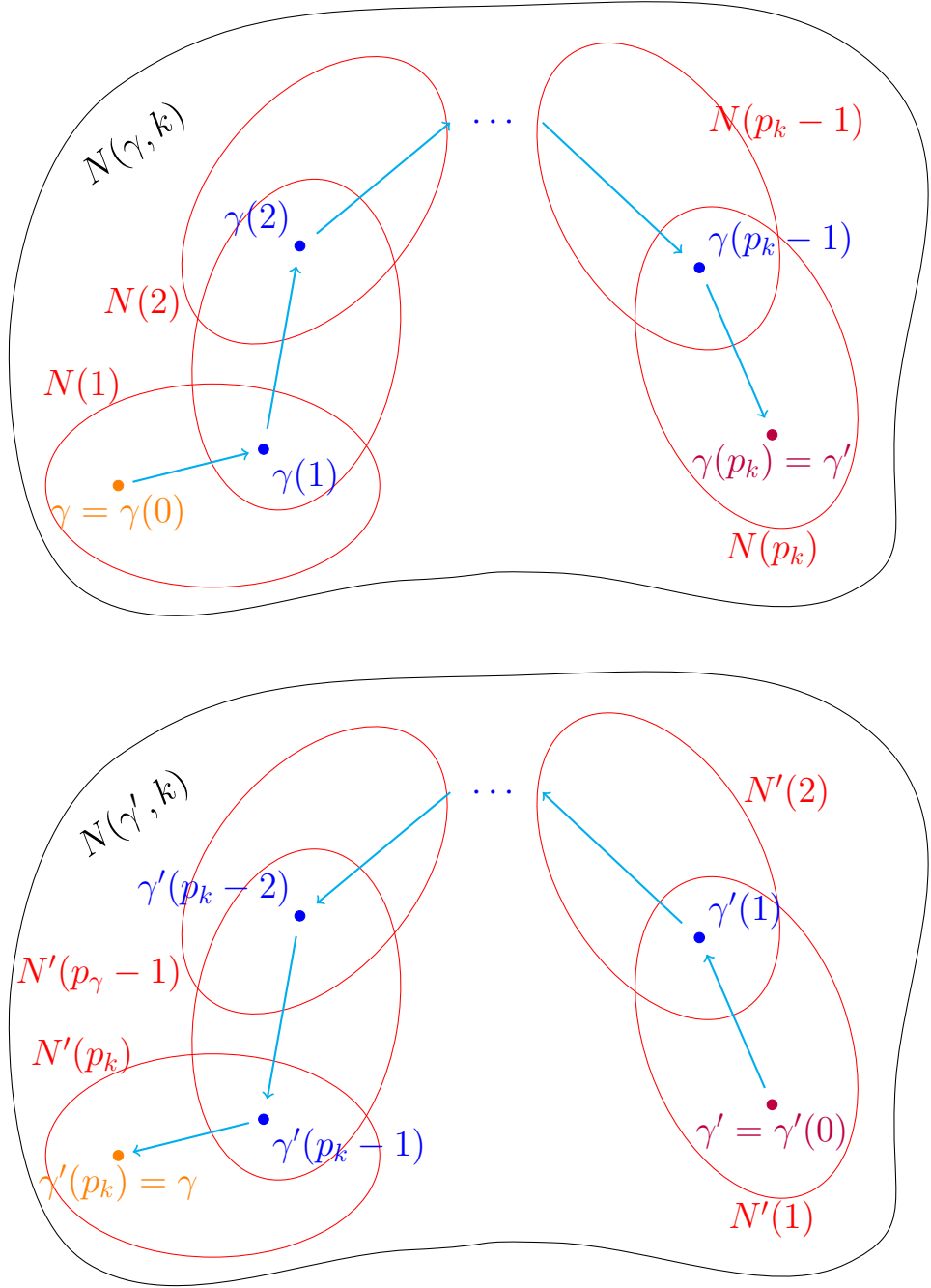

\begin{figure}[!p]
\includegraphics[width=\columnwidth]{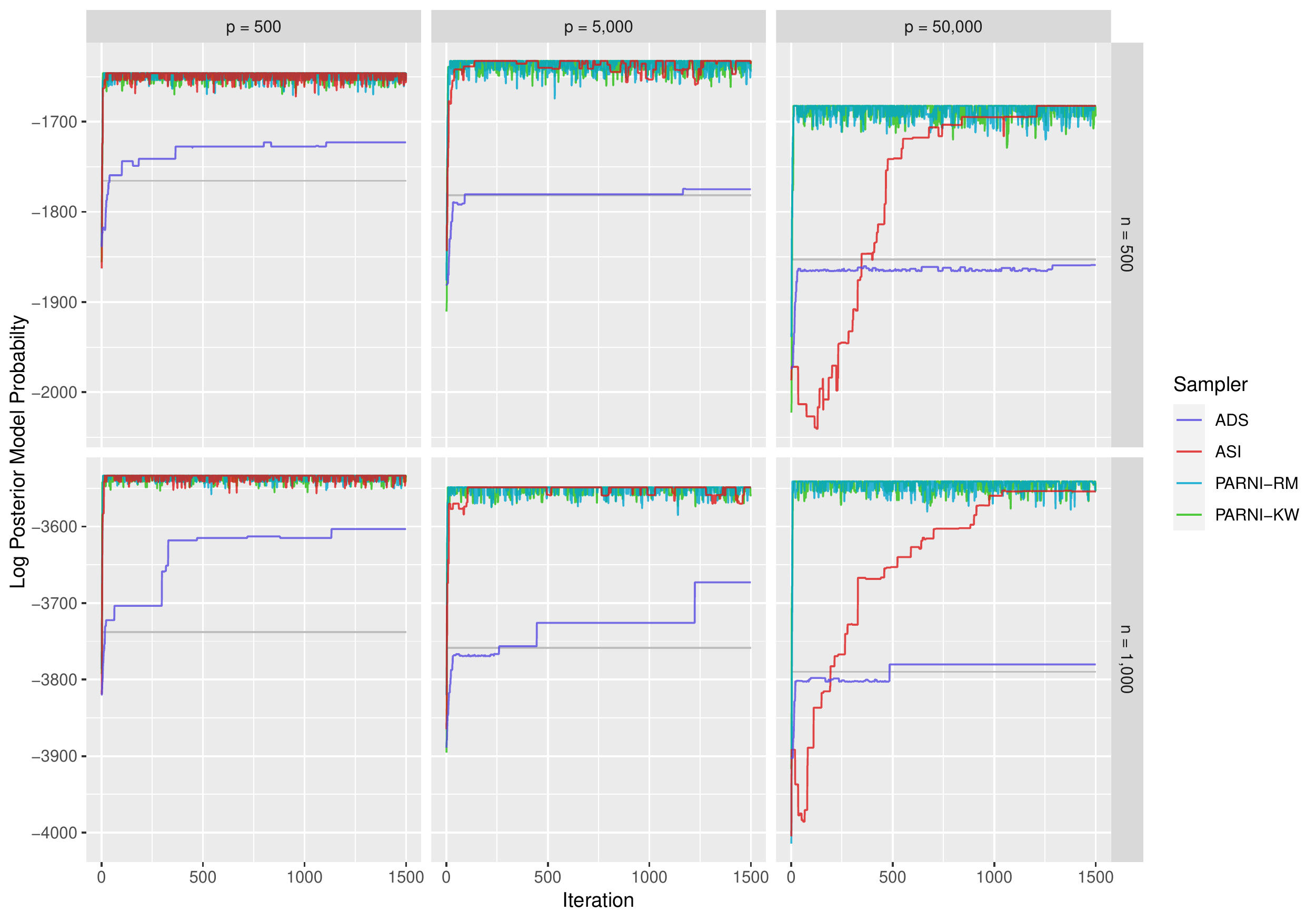}
\centering
\caption{Simulated data: trace plots of log posterior model probability from the Add-Delete-Swap (ADS), Adaptively Scaled Individual (ASI) adaptation, Pointwise implementation of Adaptive Random Neighbourhood Informed proposal with Robbins-Monro update (PARNI-RM) and Pointwise implementation of Adaptive Random Neighbourhood Informed proposal with Kiefer-Wolfowitz update (PARNI-KW) samplers for the first 1,500 iterations on simulated datasets with signal-to-noise ratio of 2.}
\label{fig:yang_trace}
\end{figure}

\begin{table}[!p]
\centering
\renewcommand{\arraystretch}{1.5}
\begin{tabular}{llcccc}
\hline\hline
 $(n,p)$ & Samplers & \multicolumn{4}{c}{SNR}      \\ 
\cline{3-6}
 &   &   0.5   &  1  &   2  &  3        \\   
 \hline $(500, 500)$  
  &   ASI   &    -4.08(\textbf{-2.59})    &  -3.94(-1.77)  & -4.84(\textbf{-2.93})  &  \textbf{-4.18}(-3.06)  \\
      &  PARNI-KW   &    \textbf{-4.12}(-2.44)    &  \textbf{-4.02}(-2.88)  &  \textbf{-5.04}(-2.81)  &  -3.90(\textbf{-3.07})  \\
      &  PARNI-RM   &    -3.90(-2.50)    &  -3.91(\textbf{-2.89})  &  -4.85(-2.74)  &  -3.67(-2.99)  \\
\hline $(1000, 500)$  
  &   ASI   &   -4.04(-3.69)     &  -1.51(-1.20)  & -4.11(\textbf{-2.84})   &  -3.89(\textbf{-2.88})  \\
    &  PARNI-KW   &    -4.01(\textbf{-3.74})    &  \textbf{-3.18}(\textbf{-2.86})  &  -3.85(-2.80)   &  \textbf{-4.02}(-2.78)  \\
      &  PARNI-RM   &   \textbf{-4.14}(-3.69)     &  -3.05(-2.53)  &  \textbf{-4.12}(-2.77)   &  -3.87(-2.72)  \\
      \hline $(500, 5000)$  
  &   ASI   &    -3.32(-1.46)    &  -3.00(-0.96)  & -4.52(-1.35)  &  \textbf{-3.5}(-1.36)  \\
    &  PARNI-KW   &    -3.62(\textbf{-1.93})    &  \textbf{-3.67}(-1.02)  &  \textbf{-5.09}(-1.74)  &  -3.46(-1.65)  \\
      &  PARNI-RM   &    \textbf{-4.01}(-1.91)    &  \textbf{-3.66}(\textbf{-1.10})  &  -4.98(\textbf{-1.79})  &  -3.38(\textbf{-1.69})  \\
      \hline $(1000, 5000)$   
  &   ASI   &    -3.65(-1.82)    &   -1.45(-1.23)   &  -6.18(-1.25)   &  -2.82(-1.16)  \\
    &  PARNI-KW   &    -3.97(-2.43)    &  \textbf{-1.76}(-1.68)  &  \textbf{-7.28}(-1.62)  &  \textbf{-3.95}(-1.51)  \\
      &  PARNI-RM   &    \textbf{-4.64}(\textbf{-2.46})    &  -1.73(\textbf{-1.76})  &   -5.86(\textbf{-1.63})   &  -3.33(\textbf{-1.56})  \\
      \hline $(500, 50000)$   
  &   ASI   &    -0.95(0.71)    &  -3.22(0.78)  & -4.30(0.93)  &  -3.70(1.07)  \\
    &  PARNI-KW   &    -1.90(\textbf{-0.59})    &  \textbf{-5.24}(\textbf{-0.45})  &  \textbf{-5.68}(\textbf{-0.50}) &  -5.26(-0.51)  \\
      &  PARNI-RM   &   \textbf{-2.35}(\textbf{-0.60})     &  -5.18(\textbf{-0.46})  &  -5.54(-0.46)  &  \textbf{-5.38}(\textbf{-0.56})  \\
      \hline $(1000, 50000)$   
  &   ASI   &    -1.33(-0.28)    &  -3.28(0.90)  &  -4.40(1.93)   &  -2.90(2.33)  \\
    &  PARNI-KW   &    -1.89(\textbf{-1.49})    &  -4.36(\textbf{-0.20})  &  \textbf{-5.24}(-0.35) &  -4.85(-0.24)  \\
      &  PARNI-RM   &    \textbf{-2.41}(-1.48)    &  \textbf{-4.47}(-0.17)  &  -5.07(\textbf{-0.47})  &  \textbf{-5.07}(\textbf{-0.27})  \\
\hline\hline
\end{tabular}
\caption{Simulated data: relative average mean squared errors for the Adaptively scaled individual (ASI), Pointwise implementation of Adaptive Random Neighbourhood Informed proposal with Kiefer-Wolfowitz update (PARNI-KW) and Pointwise implementation of Adaptive Random Neighbourhood Informed proposal with Robbins-Monro update (PARNI-RM) schemes on estimating posterior inclusion probabilities over important and unimportant variables respectively against a standard Add-Delete-Swap algorithm. The quantities outside the brackets are for important variables which have posterior inclusion probabilities greater than 0.01 whereas the quantities inside the brackets are for unimportant variables which have posterior inclusion probabilities less than 0.01. Values are presented in logarithm to base 10. Smaller values always indicate better estimates. Values in bold are those methods which have the best performance for each simulated dataset.}
\label{table:yang_mse}
\end{table}

\begin{table}[!p]
\centering
\renewcommand{\arraystretch}{1.5}
\begin{tabular}{ccc}
\hline \hline
Dataset  & Prior on $\beta_\gamma$ & Prior on $\gamma$ \\ \hline
\begin{tabular}[c]{@{}c@{}}Tecator, Concrete,\\ Boston Housing, Protein\end{tabular} &     $\beta_\gamma \sim N_{p_\gamma}(0, 100 I_{p_\gamma})$                          & \begin{tabular}[c]{@{}c@{}} $p(\gamma) = h^{p_\gamma} \left(1 - h\right)^{p - p_\gamma}$, \\ $h = 5/100$  \end{tabular}    \\ \hline
PCR1, PCR2, PCR3           &        $\beta_\gamma \sim N_{p_\gamma}(0, 1/2 \times I_{p_\gamma})$                         &     \begin{tabular}[c]{@{}c@{}} $p(\gamma) = h^{p_\gamma} \left(1 - h\right)^{p - p_\gamma}$, \\ $h \sim Be(1,(p - 5)/5)$  \end{tabular}            \\ \hline
SNP    &   $\beta_\gamma \sim N_{p_\gamma}(0, 1/4 \times I_{p_\gamma})$      &     \begin{tabular}[c]{@{}c@{}} $p(\gamma) = h^{p_\gamma} \left(1 - h\right)^{p - p_\gamma}$, \\ $h = 5/p$  \end{tabular}   \\ \hline \hline
\end{tabular}
\caption{Table of prior specifications of 8 real dataset.}
\label{table:real_prior}
\end{table}

\begin{figure}[!p]
\includegraphics[width=\textwidth]{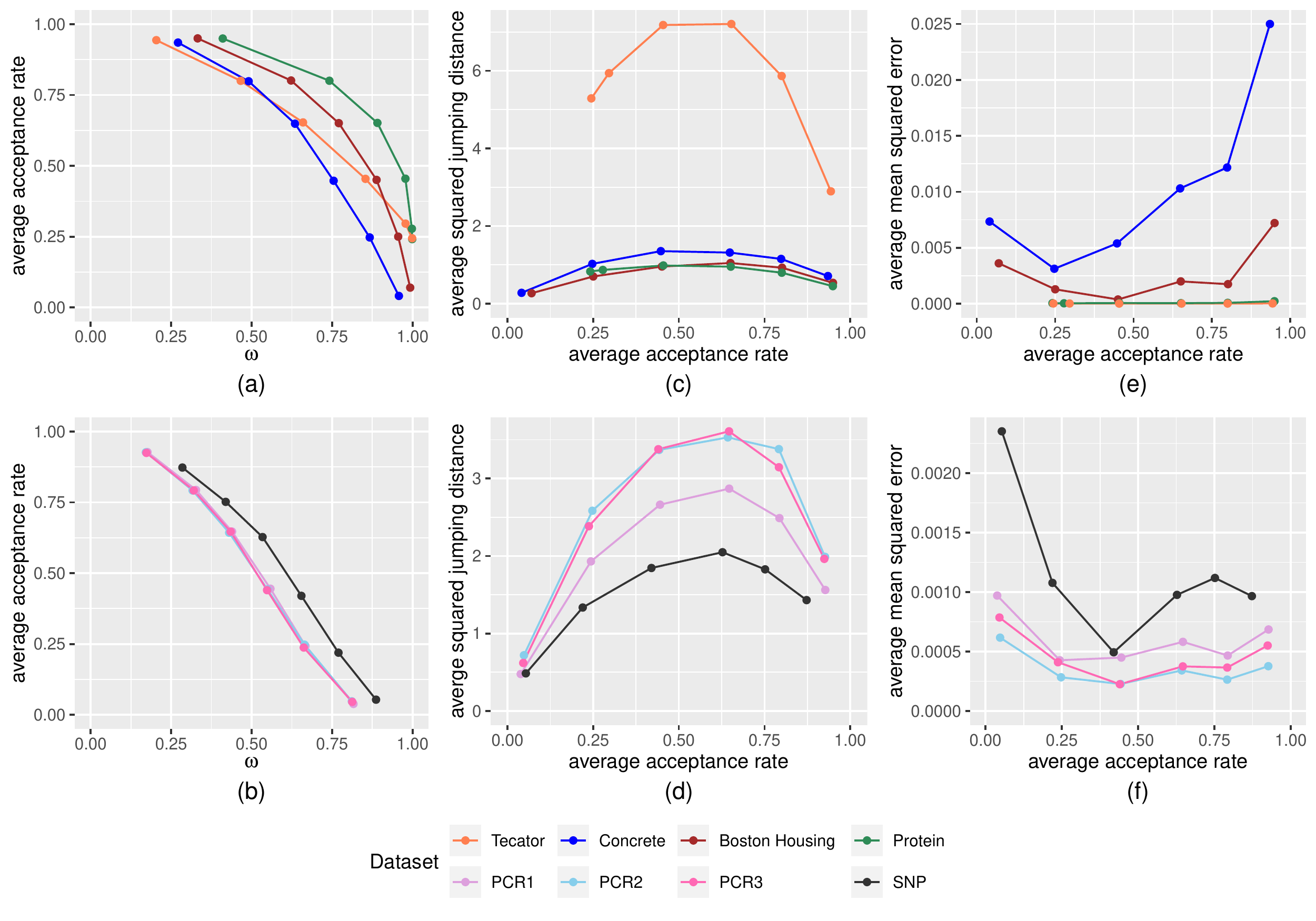}
\centering
\caption{Real data: plots of expected squared jumping distance and average mean square error again average acceptance rate and $\omega$. (a) average acceptance rate against $\omega$ for 4 small-$p$ real datasets; (b) average acceptance rate against $\omega$ for 4 large-$p$ real datasets; (c) expected squared jumping distance against average acceptance rate for 4 small-$p$ real datasets; (d) expected squared jumping distance against average acceptance rate for 4 large-$p$ real datasets; (e) average mean squared error against average acceptance rate for 4 small-$p$ real datasets; (f) average mean squared error against average acceptance rate for 4 large-$p$ real datasets.}
\label{fig:opt_acc_rate}
\end{figure}

\begin{figure}[!p]
\includegraphics[width=\columnwidth]{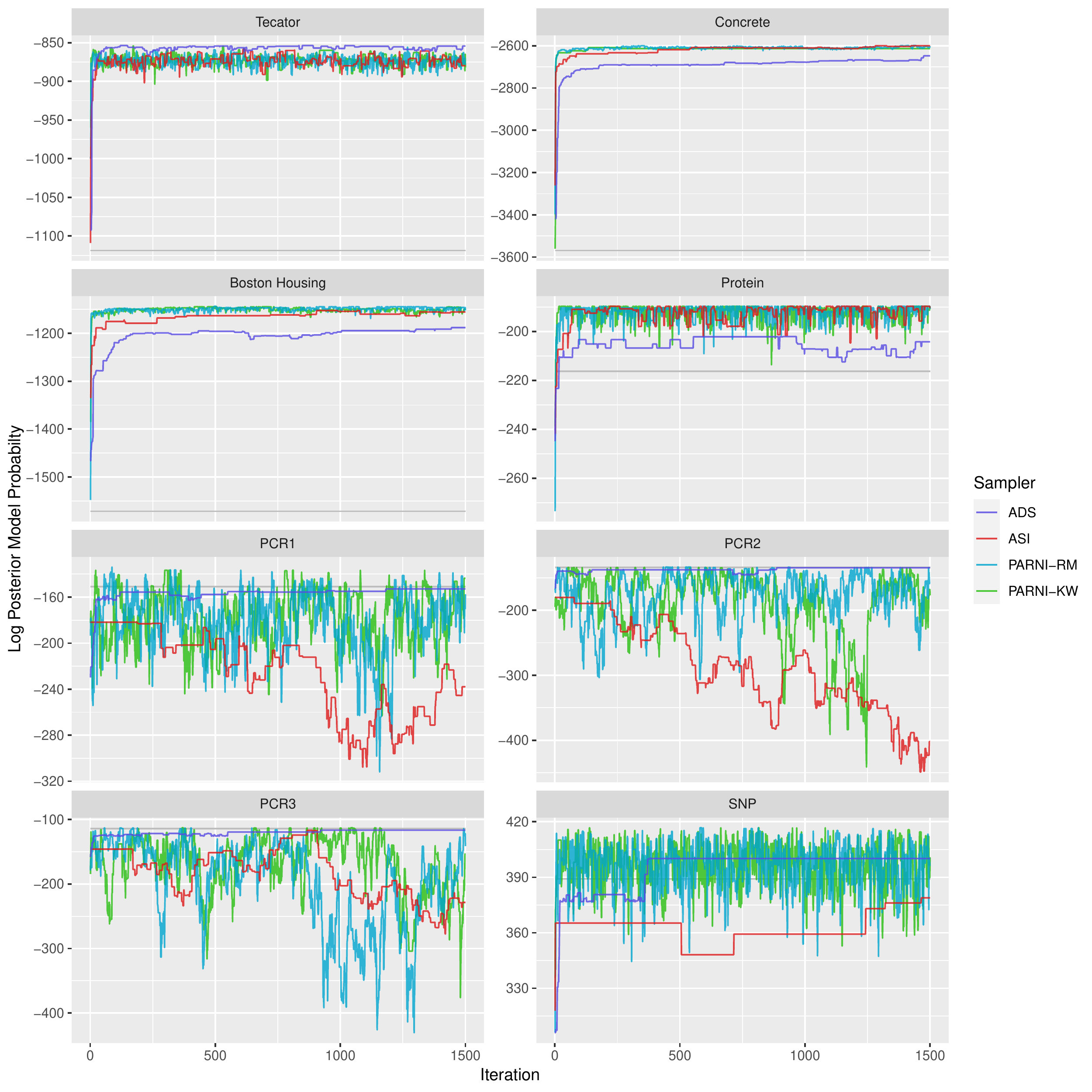}
\centering
\caption{Real data: trace plots of log posterior model probability from the Add-Delete-Swap (ADS), Adaptively Scaled Individual (ASI) adaptation, Pointwise implementation of Adaptive Random Neighbourhood Informed proposal with Robbins-Monro update (PARNI-RM) and Pointwise implementation of Adaptive Random Neighbourhood Informed proposal with Kiefer-Wolfowitz update (PARNI-KW) samplers for the first 1,500 iterations on 8 real datasets.}
\label{fig:real_trace}
\end{figure}

\begin{table}[!p]
\centering
\renewcommand{\arraystretch}{1.5}
\begin{tabular}{lllcccccc} 
\hline\hline
 dataset & $n$ & $p$ & \multicolumn{6}{c}{Samplers}      \\ 
\cline{4-9}
 &   &    &   ASI   &  HBS  &  TGS  &  WTGS  &  PARNI-KW  &  PARNI-RM        \\   
 \hline   Tecator  &  172  &  100  &  \textbf{-4.04}  &  -3.23  &  -3.13  &  -3.28  &  -3.44 &   -3.45 \\ 
 Boston Housing  &  506  &  104  &  0.51  &  \textbf{-1.43}  &  -1.20  &  -1.01  &  -1.19  & -1.18 \\ 
 Concrete  &  1030  &  79  &  -1.63  &  -1.27  &  -1.84  &  \textbf{-1.89}  &  -1.26   &   -0.72  \\ 
 Protein  &  96  &  88  &  -3.64  &  -1.61  &  -3.66  &  -4.21  &  -3.62   &  \textbf{-3.79}  \\ 
 \hline PCR1  &  60  &  22,575  &  -1.28  &  0.16  &  0.38  &  -1.51  &  \textbf{-1.53}   &   -1.34  \\ 
 PCR2  &  60  &  22,575  &  -0.75  &  0.67  &  0.77  &  -1.02  &    \textbf{-1.25}    &   -1.24  \\ 
 PCR3  &  60  &  22,575  &  -0.72  &  0.76  &  0.69  &  -1.02  &  \textbf{-1.20}   &   -0.98   \\ 
 SNP  &  993  &  79,748  &  -0.48  &  0.47  &  0.49  &  -0.75  &    \textbf{-1.37}    &  -1.23  \\ 
\hline\hline
\end{tabular}
\caption{Real data: relative average mean squared errors for the Adaptively Scaled Individual (ASI), Hamming Ball Sampler (HBS), Tempered Gibbs Sampler (TGS), Weighted Tempered Gibbs Sampler (WTGS), Pointwise implementation of Adaptive Random Neighbourhood Informed proposal with Kiefer-Wolfowitz update (PARNI-KW) and Pointwise implementation of Adaptive Random Neighbourhood Informed proposal with Robbins-Monro update (PARNI-RM) schemes on estimating posterior inclusion probabilities over important variables against a standard Add-Delete-Swap algorithm. Important variables are those variables have posterior inclusion probabilities greater than 0.01. Values are presented in logarithm to base 10. Smaller values always indicate better estimates. Values in bold are those methods which have the best performance for each real dataset.}
\label{table:real_mse}
\end{table}

\bibliographystyle{agsm}
\bibliography{references}  

\addcontentsline{toc}{chapter}{Appendices}

\appendix

\renewcommand*{\theequation}{%
  \thesection
 .\arabic{equation}%
}
\counterwithin*{equation}{section}

\newpage

\section{Additional materials}
\label{appendixlabel1}

\subsection{The Kiefer-Wolfowitz adaption scheme}
\label{apx:KW_scheme}

The optimal scaling property of a Gaussian random walk proposal on some specific forms of target distribution is well-studied. The most commonly used way to achieve optimal mixing time is to tune scaling parameters, which leads to an average acceptance rate of 0.234. In practice, even in those cases where the posterior distribution does not strictly obey the assumptions, the average acceptance rate of 0.234 is often a suitable guide and results in good practical performance. For those proposals beyond random walks, the guidelines for optimal tuning are often unknown. Due to this fact, we develop an adaptation scheme which is able to adapt the tuning parameters in which the mixing time and convergence rate are optimised without knowing any theoretical results in advance.

We design a scheme that maximises the \emph{Expected
Squared Jumping Distance} (ESJD). The ESJD is an efficiency measure which accounts for the jumping distances between two consecutive states from a Markov chain which is highly related to first order autocorrelation \citep{pasarica2010adaptively}. Suppose that $\omega \in \mathbb{R}$ is a continuous tuning parameter of a $\pi$-reversible transition kernel $p_\omega$. The definition of ESJD given parameter $\omega$ is given as follows
\begin{align}
    \text{ESJD}(\omega)
    & = \sum_{\gamma \in \Gamma} \sum_{\gamma^\prime \in \Gamma} \left(  \sum_{j = 1}^p (\gamma_j - \gamma_j^\prime)^2 \right)\pi(\gamma) p_\omega(\gamma, \gamma^\prime).
\end{align}
If $p_\omega$ is a Metropolis--Hastings transition kernel and can be decomposed into a product of a proposal kernel $Q_\omega$ with mass function $q_\omega$ and a term of the Metropolis-Hasting acceptance probability $\alpha_\omega(\gamma, \gamma^\prime)$, the definition of the ESJD above is equivalent to
\begin{align}
    \text{ESJD}(\omega)
    & = \sum_{\gamma \in \Gamma} \sum_{\gamma^\prime \in \Gamma} \left(  \sum_{j = 1}^p (\gamma_j - \gamma_j^\prime)^2 \right)\pi(\gamma) q_\omega(\gamma, \gamma^\prime)\alpha_\omega(\gamma, \gamma^\prime).
\end{align}
The ESJD is often infeasible to compute since it involves double sum over the sample space. To access the value of ESJD, we consider an estimator \textit{Average Squared Jumping Distance} (ASJD) which depends on the past chain and ASJD is defined as follows:
\begin{align}
    \text{ASJD}(\omega)
    & = \frac{1}{N} \sum_{i=0}^{N-1} \left(  \sum_{j = 1}^p (\gamma^{(i)}_j - \gamma^{(i+1)}_j)^2 \right)\pi(\gamma) 
\end{align}
or alternatively
\begin{align}
    \text{ASJD}(\omega)
    & = \frac{1}{N} \sum_{i=0}^{N-1} \left(  \sum_{j = 1}^p (\gamma^{(i)}_j - (\gamma^{(i)})^\prime_j)^2 \right)\alpha_\omega(\gamma^{(i)}, (\gamma^{(i)})^\prime)
\end{align}
where $(\gamma^{(i)})^\prime$ is the proposal of $\gamma^{(i)}$ through $Q_\omega$. From above, the main advance of using ASJD is that ASJD can be easily estimated in each individual iteration.

The objective is to locate the value of $\omega$ that leads to the largest ASJD. This is equivalent to solving the following optimisation problem of the tuning parameter
\begin{align}
    \omega^* := \arg \max_{\omega} ~ \text{ASJD}(\omega).
\end{align}
If objective function $\text{ESJD}(\omega)$ is unimodal and smooth, $\omega^*$ can be found by solving the first order ordinary differential equation
\begin{align}
    \frac{d}{d \omega}  \text{ASJD}(\omega) = 0.
\end{align} 
The Robbins-Monro scheme can be applied here to adaptively update the optimal $\theta$ when those derivatives exist analytically. In most cases, however, the derivatives are not available analytically, which makes the Robbins-Monro scheme impossible to use. The Kiefer-Wolfowitz scheme \citep{kiefer1952stochastic}, on the other hand, is an alternative to the Robbins-Monro algorithm where the derivatives are estimated using a finite difference method. 

The following is how a Kiefer-Wolfowitz scheme proceeds. Let $M(\omega)$ be an objective function with a maximum $\theta^*$. If $M(\omega)$ is assumed to be unknown but some random observations $\mathcal{M}(\omega)$ are given such that $M(\omega) = \mathbb{E}[\mathcal{M}(\omega)]$, $\omega$ is updated following an iterative algorithm as follows
\begin{align}
    \omega_{i+1} = \omega_i + a_i \left( \frac{\mathcal{M}(\omega_i + c_i) - \mathcal{M}(\omega_i - c_i)}{2c_i} \right)
\end{align}
where $a_i$ and $c_i$ are two diminishing sequences of forward positive step sizes and finite difference widths used respectively. In each iteration, we need two independent observations, $\mathcal{M}(\omega_i + c_i)$ and $\mathcal{M}(\omega_i - c_i)$, with tuning parameters $\omega_i + c_i$ and $\omega_i - c_i$ respectively. If the objective function $M(\omega)$ satisfies certain regularity conditions, it can be shown that $\omega_i$ will converge to the optimal value $\omega^*$ as $n \to \infty$. \cite{blum1954approximation} show that this convergence is almost sure provided that some other conditions hold, most importantly that the diminishing sequences $a_i$ and $c_i$ satisfy
\begin{enumerate}
    \item $c_i \to 0$ as $i \to \infty$;
    \item $\sum_{i=0}^\infty a_i = \infty$;
    \item $\sum_{i=0}^\infty a_i c_i < \infty$;
    \item $\sum_{i=0}^\infty a_i^2c_i^{-2} = \infty$.
\end{enumerate}

We design an adaptive MCMC sampler which combines the Kiefer-Wolfowitz scheme and parallel chain implementation. The parallel chain implementation involves a number of independent chains which only share the same tuning parameters and provides independent observations as the Kiefer-Wolfowitz scheme requires. We consider a sampler which involves $L$ parallel chains. In the sampler a new state is proposed through the kernel $Q_\omega$, which is then accepted with probability $\alpha_\omega$. An adaptation scheme with the Kiefer-Wolfowitz updates is given as follows: compute $a_i = i^{-\phi_a}$ and $c_i = i^{-\phi_c}$; calculate $\omega^+ = \omega_i + c_i$ and $\omega^- = \omega_i - c_n$; separate the $N$ chains into two groups, $L^-$ and $L^+$,; for each $l \in L^{+}$, propose new state $(\gamma^{(l,i)})^{\prime}$ using $Q_{\omega^+}(\gamma^{(l,i)}, \cdot)$, accept $(\gamma^{(l,i)})^{\prime}$ with probability $\alpha_{\omega^+}(\gamma^{(l,i)}, (\gamma^{(l,i)})^{\prime}$; for each $l \in L^{-}$, propose new state $(\gamma^{(l,i)})^{\prime}$ using $Q_{\omega^-}(\gamma^{(l,i)}, \cdot)$, accept $(\gamma^{(l,i)})^{\prime}$ with probability $\alpha_{\omega^-}(\gamma^{(l,i)}, (\gamma^{(l,i)})^{\prime})$; compute the ASJD for the current iteration by averaging over the chains in groups $L^+$ and $L^-$ respectively as follows: 
\begin{align}
    \text{ASJD}^{\bullet,(i)} \approx \frac{1}{|L^{\bullet}|} \sum_{l \in L^\bullet} \left(\sum_{j = 1}^p (\gamma^{(l,i)}_j - (\gamma^{(l,i)})^{\prime}_j)^2 \right) \alpha_{\theta^\bullet} (\gamma^{(l,i)}, (\gamma^{(l,i)})^{\prime}) \label{ESJD2}
\end{align}
where $\bullet$ is either $+$ or $-$; update the tuning parameter for the next iteration by
\begin{align}
    \omega^{(i+1)} =  \omega^{(i)} + a_i \left(\frac{\text{ASJD}^{+,(i)} - \text{ASJD}^{-,(i)}}{2c_i}\right) \label{ESJDupdateKW}.
\end{align}

\newpage

\section{Proofs}
\label{appendixlabel2}

\subsection{Proof of Proposition \ref{prop:RN_correct}}

\begin{proof}[Proof of Proposition \ref{prop:RN_correct} (Alternative)]
Define the probability mass function $\mu(\gamma,\gamma',k) := \pi(\gamma)p(k|\gamma)q_k(\gamma,\gamma')$. Then the algorithm can be viewed as a Markov chain on the larger state space $(\gamma,\gamma',k)$, which alternates between the following steps:
\begin{enumerate}
    \item Re-sample $k,\gamma'|\gamma$ from its conditional distribution with mass function $p(k|\gamma)q_k(\gamma,\gamma')$
    \item Perform a Metropolis step with deterministic proposal
    $$
    \phi \left( \begin{array}{c} \gamma \\ \gamma' \\ k \end{array} \right) = \left( \begin{array}{c} \gamma' \\ \gamma \\ \rho(k) \end{array} \right)
    $$
    and acceptance probability $\min\left(1,\mu \circ \phi(\gamma,\gamma',k)/\mu(\gamma,\gamma',k)\right)$.
\end{enumerate}
Note that $\phi \circ \phi(\gamma,\gamma',k) = (\gamma,\gamma',k)$, meaning $\phi$ is an involution. Therefore by Proposition 1 in \cite{andrieu2020general} the triple $(\mu,\phi,a)$ can be defined with $a(r) = \min(1,r)$, which combined with step 1 shows that the process on $\gamma$-space defined by a random neighbourhood sampler is a $\pi$-reversible Markov chain.
\end{proof}

\subsection{Proof of Proposition \ref{prop:ARN_correct}}

\begin{proof}[Proof of Proposition \ref{prop:ARN_correct}]

    Since $\eta \in \Delta_\epsilon^{2p}$, it is clear that $p^{\text{RN}}_\eta(k|\gamma) > \epsilon^p$ for all $\gamma \in \Gamma$ and $k \in \mathcal{K}$. Recall that $q^{\text{THIN}}_{\omega, k}$ is symmetric, so we have $q^{\text{THIN}}_{\omega, k}(\gamma, \gamma^\prime) = q^{\text{THIN}}_{\omega, k}(\gamma^\prime, \gamma)$. The condition
    \begin{align}
        p^{\text{RN}}_\eta(k|\gamma) q^{\text{THIN}}_{\omega, k}(\gamma, \gamma^\prime) > 0 \iff p^{\text{RN}}_\eta(k|\gamma^\prime) q^{\text{THIN}}_{\omega, k}(\gamma^\prime, \gamma) > 0
    \end{align}
    follows immediately.
    
    To show that $p^{\text{RN}}_\eta(\cdot|\gamma)$ is a probability measure on $\mathcal{K}$ for any $\gamma \in \Gamma$, we also need to show that 
    \begin{align}
        \sum_{k \in \mathcal{K}} p^{\text{RN}}_\eta(k|\gamma) = 1.
    \end{align}
    Starting from the right hand side gives
    \begin{align*}
        \sum_{k \in \mathcal{K}} p^{\text{RN}}_\eta(k|\gamma) & = \sum_{k \in \mathcal{K}} \prod_{j = 1}^p  p^{\text{RN}}_{\eta,j}(k_j|\gamma_j) \\
        & =  \prod_{j = 1}^p   \left( p^{\text{RN}}_{\eta,j}(k_j = 0|\gamma_j) + p^{\text{RN}}_{\eta,j}(k_j = 1|\gamma_j) \right). \\
        & = 1
    \end{align*}
    as required.
    
    We then show $q_{\omega,k}(\gamma, \cdot)$ is a probability measure on set $N(\gamma, k)$ for any $\gamma \in \Gamma$ and $k \in \mathcal{K}$. Let $J$ be a projection of $k$ to a set $J(k)$ consisting of the indices $j$ for which $k_j = 1$ (\textit{i.e.} $J(k) = \{j|k_j = 1\}$). Starting from equation \eqref{mat:ARN_prop3} of \emph{Remark} \ref{remark:thin_proposal} together with the identity used above
	$\int_{\Gamma}p(d\gamma) = \prod_j \int_{\Gamma_j}p_j(d\gamma_j)$, we obtain
	\begin{align*}
	\sum_{\gamma^\prime \in N(\gamma,k)} q^{\text{THIN}}_{\omega, k}(\gamma, \gamma^\prime)
	& = \prod_{j \in J(k)} \sum_{\gamma^\prime_j \in \{0,1\}} \left(\frac{\omega}{1-\omega}\right)^{d_H(\gamma_j,\gamma_j^\prime)} (1-\omega) \\
	& = \prod_{j \in J(k)} \sum_{\gamma_j^\prime \in \{\gamma_j,1-\gamma_j\}} \left(\frac{\omega}{1-\omega}\right)^{|\gamma_j - \gamma_j^\prime|} (1-\omega) \\
	& = \prod_{j \in J(k)} \left( (1-\omega) + \omega \right) \\
	& = 1
	\end{align*}
	as required.
\end{proof}

\subsection{Proof of Theorem \ref{them:asi_arn_equal}}

Before proving Theorem \ref{them:asi_arn_equal}, we first draw some conclusions on acceptance probability of the ASI and ARN proposals.

\begin{prop} \label{prop:ARN_acc_rate_equ}
Suppose $\gamma$ is the current state, $\gamma^\prime$ is the proposed state and $\eta \in \Delta_\epsilon^{2p}$ is fixed parameter. For any $k$ that constructs neighbourhood containing $\gamma$ and $\gamma^\prime$ and any choices of $\xi \in \Delta_\epsilon$ and $\omega \in \Delta_\epsilon$, the Metropolis-Hastings acceptance probability of the ARN proposal, $\alpha_{(\xi\eta,\omega), k}^{\text{ARN}}$, as in (\ref{mat:ARN_acc}) is fixed. In addition, this term is also the acceptance probability of the ASI proposal, \textit{i.e.}
\begin{align*}
\alpha_{(\xi\eta,\omega), k}^{\text{ARN}} (\gamma, \gamma^\prime) = \alpha_{\zeta \eta}^{\text{ASI}} (\gamma, \gamma^\prime).
\end{align*}
for any $\zeta \in \Delta_\epsilon$.
\end{prop}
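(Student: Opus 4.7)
The plan is to show that both the $\omega$ and $\xi$ dependence in $\alpha^{\text{ARN}}_{(\xi\eta,\omega),k}(\gamma,\gamma')$ cancel exactly, and that what remains is precisely the ASI acceptance ratio (whose $\zeta$ dependence also cancels). First I would exploit the symmetry of the within-neighbourhood proposal: by Remark \ref{remark:thin_proposal} the kernel $q^{\text{THIN}}_{\omega,k}$ satisfies $q^{\text{THIN}}_{\omega,k}(\gamma,\gamma') = q^{\text{THIN}}_{\omega,k}(\gamma',\gamma)$ whenever $\gamma,\gamma' \in N(\gamma,k)$, so these factors cancel inside the ratio and remove all dependence on $\omega$.

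Next I would analyse the neighbourhood-construction ratio $p^{\text{RN}}_{\xi\eta}(k|\gamma')/p^{\text{RN}}_{\xi\eta}(k|\gamma)$ coordinatewise via the product form \eqref{mat:RN_prop1}. The coordinates split into three groups: (i) $k_j = 0$, where the constraint $\gamma' \in N(\gamma,k)$ forces $\gamma_j = \gamma'_j$ and the factor is trivially $1$; (ii) $k_j = 1$ and $\gamma_j = \gamma'_j$, where numerator and denominator are the same Bernoulli parameter ($\xi A_j$ if the common value is $0$, $\xi D_j$ if $1$) and again the factor is $1$; (iii) $k_j = 1$ and $\gamma_j \neq \gamma'_j$, where the factor is $\xi D_j/(\xi A_j) = D_j/A_j$ when $\gamma_j = 0$ (so $\gamma'_j = 1$) and $\xi A_j/(\xi D_j) = A_j/D_j$ when $\gamma_j = 1$. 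The key observation is that the $\xi$ prefactors cancel in every nontrivial term, so the whole ratio depends on neither $\xi$ nor $\omega$, proving the first claim.

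For the equality with the ASI acceptance probability, I would perform the analogous coordinatewise decomposition of the ASI proposal ratio $q_{\zeta\eta}(\gamma',\gamma)/q_{\zeta\eta}(\gamma,\gamma')$ using \eqref{mat:asi_prop}. Coordinates with $\gamma_j = \gamma'_j$ contribute $1$, while coordinates with $\gamma_j \neq \gamma'_j$ contribute $\zeta D_j/(\zeta A_j) = D_j/A_j$ (if $\gamma_j = 0$) or $\zeta A_j/(\zeta D_j) = A_j/D_j$ (if $\gamma_j = 1$), so the $\zeta$ factors cancel. Comparing with the ARN computation above, exactly the same product $\prod_{j:\gamma_j\neq\gamma'_j}(\cdot)$ appears in both proposal ratios. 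Multiplying by the common factor $\pi(\gamma')/\pi(\gamma)$ and taking $\min\{1,\cdot\}$ yields $\alpha^{\text{ARN}}_{(\xi\eta,\omega),k}(\gamma,\gamma') = \alpha^{\text{ASI}}_{\zeta\eta}(\gamma,\gamma')$ for every admissible $\xi, \omega, \zeta \in \Delta_\epsilon$.

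The argument is essentially algebraic and no step should pose real difficulty; the only thing to watch carefully is the bookkeeping in group (ii) above, where it is important that a $k_j = 1$ coordinate with $\gamma_j = \gamma'_j$ truly contributes $1$ to the ratio (rather than being excluded by the neighbourhood constraint). Handling this cleanly is what makes the $\xi$-independence transparent, and it is also what ties the ARN acceptance ratio to the ASI acceptance ratio despite the two algorithms having different state-transition mechanisms.
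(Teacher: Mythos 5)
Your proposal is correct and follows essentially the same route as the paper's proof: cancel the symmetric thinning kernel to remove $\omega$, then expand $p^{\text{RN}}_{\xi\eta}(k|\gamma')/p^{\text{RN}}_{\xi\eta}(k|\gamma)$ coordinatewise so that the $\xi$ factors cancel and only $\prod_{j:\gamma_j\neq\gamma'_j}(A_j/D_j)^{\pm1}$ survives, which is identified with the ASI acceptance ratio. Your explicit three-case bookkeeping (in particular checking that $k_j=1$ coordinates with $\gamma_j=\gamma'_j$ contribute $1$) and your explicit verification of the $\zeta$-cancellation on the ASI side are slightly more detailed than the paper, which simply asserts the ASI identity by definition, but the argument is the same.
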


\begin{proof}[Proof of Proposition \ref{prop:ARN_acc_rate_equ}]
	
	Suppose that $\gamma$, $\gamma^\prime \in \Gamma$, $\eta \in \Delta_\epsilon^{2p}$ are given and fixed. We consider all the $k \in \mathcal{K}$ such that $\gamma^\prime \in N(\gamma, k)$. We are going to show that for any $\xi$ and $\omega \in \Delta_\epsilon$, the acceptance probability $\alpha_{\theta, k}^{\text{ARN}} (\gamma, \gamma^\prime)$ is free from the choice of $k$, $\xi$ and $\omega$.

	To locate the different positions between $\gamma$ and $\gamma^\prime$, we define the set $J(\gamma, \gamma^\prime) := \{j| \gamma_j \ne \gamma_j^\prime\} \subseteq J(k)$. From (\ref{mat:ARN_acc}), we have
	\begin{align*}
	\alpha_{\theta, k}^{\text{ARN}} (\gamma, \gamma^\prime) & = \min \left\{1, \frac{\pi(\gamma^\prime)p^{\text{RN}}_{\xi \eta^\text{opt}}(k|\gamma^\prime)q_{\omega,k}^{\text{THIN}}(\gamma^\prime,\gamma)}{\pi(\gamma)p^{\text{RN}}_{\xi \eta^\text{opt}}(k|\gamma)q_{\omega, k}^{\text{THIN}}(\gamma,\gamma^\prime)} \right\} \\
	 & = \min \left\{1, \frac{\pi(\gamma^\prime)p^{\text{RN}}_{\xi \eta^\text{opt}}(k|\gamma^\prime)}{\pi(\gamma)p^{\text{RN}}_{\xi \eta^\text{opt}}(k|\gamma)} \right\}
	\end{align*}
	where the last equality follows since $q^{\text{THIN}}_{\omega, k}$ is symmetric. Substituting $p^{\text{RN}}_{\xi \eta}$ yields
	\begin{align*}
	\alpha_{\theta, k}^{\text{ARN}} (\gamma, \gamma^\prime) & = \min \left\{1, \frac{\pi(\gamma^\prime)}{\pi(\gamma)} \prod_{j=1}^{p} \frac{(\xi A_j)^{(1-\gamma^\prime_j)k_j} (1-\xi A_j)^{(1-\gamma^\prime_j)(1-k_j)} (\xi D_j)^{\gamma^\prime_jk_j} (1-\xi D_j)^{\gamma^\prime_j(1-k_j)}}{(\xi A_j)^{(1-\gamma_j)k_j} (1-\xi A_j)^{(1-\gamma_j)(1-k_j)} (\xi D_j)^{\gamma_jk_j} (1-\xi D_j)^{\gamma_j(1-k_j)}}  \right\} \\
	& = \min \left\{1, \frac{\pi(\gamma^\prime)}{\pi(\gamma)} \prod_{j \in J(\gamma, \gamma^\prime)} \frac{ (A_j)^{(1-\gamma^\prime_j)}  (D_j)^{\gamma^\prime_j} }{(A_j)^{(1-\gamma_j)}  (D_j)^{\gamma_j} }  \right\}.
	\end{align*}
	The value of $\alpha_{\theta, k}^{\text{ARN}} (\gamma, \gamma^\prime)$ only depends on the choice of $\eta$ and it does not involve the terms $\xi$, $\omega$ and $k$. Therefore, the proposition is proved. In addition, this is also the Metropolis-Hastings acceptance probability for ASI of proposing $\gamma^\prime$ to $\gamma$ following the definition of the ASI sampler.
	
\end{proof}

Now we formally prove Theorem \ref{them:asi_arn_equal}.

\begin{proof}[Proof of Theorem \ref{them:asi_arn_equal}]

    Rewriting the transition kernel of $p^{\mathrm{ARN}}_{(\xi \eta,\omega)}$ gives
    \begin{align}
	p^{\mathrm{ARN}}_{(\xi\eta,\omega)} (\gamma, \gamma^\prime) & = \sum_{k \in \mathcal{K}} p_{\xi\eta}^\text{RN} (k|\gamma) q^{\text{THIN}}_{\omega, k}(\gamma, \gamma^\prime)  \alpha^{\text{ARN}}_{(\xi\eta,\omega), k}(\gamma \gamma^\prime) \notag \\
	& =  \alpha^{\text{ASI}}_{\zeta \eta}(\gamma, \gamma^\prime) \sum_{k \in \mathcal{K}} p_{\xi \eta}^\text{RN} (k|\gamma) q^{\text{THIN}}_{\omega, k}(\gamma, \gamma^\prime) \label{mat:apx_asi_arn_1}
	\end{align}
    for which the last line follows from Proposition \ref{prop:ARN_acc_rate_equ} and the fact that $\zeta = \xi \times \omega$.

	Recall the definitions of the conditional distribution of $k$ in (\ref{mat:RN_prop1}) and the within-neighbourhood proposal in (\ref{mat:ARN_prop3}).  Together with $\eta = (A, D)$, we have
	\begin{align*}
	p^{\text{RN}}_{\xi\eta} (k|\gamma) & = \prod_{j = 1}^p p^{\text{RN}}_{\xi\eta, j}(k_j|\gamma_j) \\ 
	& = \prod_{j = 1}^p (\xi A_j)^{(1-\gamma_j)k_j} (1-\xi A_j)^{(1-\gamma_j)(1-k_j)} (\xi D_j)^{\gamma_j k_j} (1-\xi D_j)^{\gamma_j(1-k_j)}
	\end{align*}
	and
	\begin{align*}
	q^{\text{THIN}}_{\omega, k}(\gamma,\gamma^\prime)	& = \left(\frac{\omega}{1-\omega}\right)^{d_H(\gamma, \gamma^\prime)} (1-\omega)^{\sum_{j=1}^p k_j} \\
	& = \left(\frac{\omega}{1-\omega}\right)^{d_H(\gamma, \gamma^\prime)} \prod_{j = 1}^p (1-\omega)^{k_j}
	\end{align*}
	where the last line follows since $(1-\omega)^{k_j} = 1$ if $k_j = 0$. Substituting the above into (\ref{mat:apx_asi_arn_1}) yields
	\begin{align}
	p^{\mathrm{ARN}}_{(\xi\eta,\omega)} (\gamma, \gamma^\prime)
	& =  \alpha^{\text{ASI}}_{\zeta\eta}(\gamma, \gamma^\prime)  \sum_{k \in \mathcal{K}}  \left[  \prod_{j=1}^p (\xi A_j)^{(1-\gamma_j)k_j} (1-\xi A_j)^{(1-\gamma_j)(1-k_j)} \right. \notag \\
	& \qquad \left. (\xi D_j)^{\gamma_jk_j}  (1-\xi D_j)^{\gamma_j(1-k_j)} \left(\frac{\omega}{1-\omega}\right)^{d_H(\gamma, \gamma^\prime)}  (1-\omega)^{k_j} \right]. \label{mat:apx_asi_arn_2}
	\end{align}

	Let $J(\gamma, \gamma^\prime)$ be a set which consists of the indices $j$ for which $\gamma_j \neq \gamma_j^\prime$ and $J(\gamma, \gamma^\prime)^c$ be the complement to $J(\gamma, \gamma^\prime)$. By definition $k_j$ must be $1$ when $j \in J(\gamma, \gamma^\prime)$ and $k_j$ can take values either $0$ or $1$ when $j \in J(\gamma, \gamma^\prime)^c$. Continuing from (\ref{mat:apx_asi_arn_2}), we divide $j=1$ to $p$ into two groups, $J(\gamma, \gamma^\prime)$ and $J(\gamma, \gamma^\prime)^c$, and obtain
	\begin{align*}
	p^{\mathrm{ARN}}_{(\xi\eta,\omega)} (\gamma, \gamma^\prime) &  =  \alpha^{\text{ASI}}_{\zeta\eta}(\gamma, \gamma^\prime)  \left(\frac{\omega}{1-\omega}\right)^{|J(\gamma, \gamma^\prime)|}
	\prod_{j \in J(\gamma, \gamma^\prime)} \left(  (\xi A_j)^{(1-\gamma_j)} (\xi D_j)^{\gamma_j} (1-\omega) \right) \times \\
	& \qquad  \prod_{j \in J(\gamma, \gamma^\prime)^c} \Bigg[ \sum_{k_j \in \{0,1\}} (\xi A_j)^{(1-\gamma_j)k_j} (1-\xi A_j)^{(1-\gamma_j)(1-k_j)}  \\
	& \qquad  (\xi D_j)^{\gamma_j k_j} (1-\xi D_j)^{\gamma_j(1-k_j)} (1-\omega)^{k_j} \Bigg] \\
	& = \alpha^{\text{ASI}}_{\zeta\eta}(\gamma, \gamma^\prime) \prod_{j \in J(\gamma, \gamma^\prime)}  \left(  (\xi A_j)^{(1-\gamma_j)} (\xi D_j)^{\gamma_j} \cdot \omega \right) \times \\
	& \qquad  \prod_{j \in J(\gamma,\gamma^\prime)^c} \underbrace{ \left( (1-\xi A_j)^{1-\gamma_j} (1-\xi D_j)^{\gamma_j} +    (\xi A_j)^{1-\gamma_j}  (\xi D_j)^{\gamma_j} (1-\omega) \right)}_{I_j}
	\end{align*}
	
	We am going to further investigate the terms $I_j$ for $j \in J(\gamma,\gamma^\prime)^c$. Clearly that, if $\gamma_j = 1$, then
	\begin{align*}
	I_j & = (1-\xi D_j) + \xi D_j (1-\omega) \\
	& = 1 - \omega \xi D_j,
	\end{align*}
	similarly that when $\gamma_j = 0$, we have
	\begin{align*}
	I_j & = (1-\xi A_j) + \xi A_j (1 - \omega) \\
	& = 1 - \omega \xi A_j.
	\end{align*}
	
	Putting everything back to $p^{\mathrm{ARN}}_{\xi\eta,\omega}$, and reconstructing the product in $j$ from $1$ to $p$ gives
	\begin{align*}
	p^{\mathrm{ARN}}_{\xi\eta,\omega} (\gamma, \gamma^\prime) &  =  \alpha^{\text{ASI}}_{\zeta\eta}(\gamma, \gamma^\prime) \times 
	\prod_{j \in J(\gamma, \gamma^\prime)}  (\xi \omega A_j)^{(1-\gamma_j)} (\xi \omega D_j)^{\gamma_j}  \times \prod_{j \in J(\gamma, \gamma^\prime)^c}  (1-\xi \omega A_j)^{(1-\gamma_j)}  (1-\xi \omega D_j)^{\gamma_j} \\
	& =  \alpha^{\text{ASI}}_{\zeta\eta}(\gamma, \gamma^\prime) \times \prod_{j =1}^p \left[ (\xi \omega A_j)^{(1-\gamma_j)\gamma_j^\prime} (\xi \omega D_j)^{\gamma_j(1-\gamma_j^\prime)} (1-\xi \omega A_j)^{(1-\gamma_j)(1-\gamma_j^\prime)}  (1-\xi \omega D_j)^{\gamma_j \gamma_j^\prime} \right].
	\end{align*}
	Rewriting the above in terms of $\zeta = \xi\times\omega$ yields
	\begin{align*}
	p^{\text{ARN}}_{(\xi\eta,\omega)} (\gamma, \gamma^\prime) & =   \alpha^{\text{ASI}}_{\zeta\eta}(\gamma, \gamma^\prime) \times \prod_{j =1}^p \left[  (\zeta A_j)^{(1-\gamma_j)\gamma_j^\prime} (\zeta D_j)^{\gamma_j(1-\gamma_j^\prime)} (1-\zeta A_j)^{(1-\gamma_j)(1-\gamma_j^\prime)}  (1-\zeta D_j)^{\gamma_j \gamma_j^\prime}\right] \\
	& =  p^{\text{ASI}}_{\zeta\times\eta} (\gamma, \gamma^\prime)
	\end{align*}
	as required.
\end{proof}

\subsection{Proof of Corollary \ref{coro:arns_equal}}

\begin{proof}[Proof of Corollary \ref{coro:arns_equal}]
It is clear that the argument holds from the Theorem \ref{them:asi_arn_equal} and its proof.
\end{proof}

\subsection{Proof of Proposition \ref{prop:PARNI_acc}}

\begin{proof}

We define $J(k)$ to be a set that consists of the positions that $k_j = 1$ (\textit{i.e.} $J(k) = \{j|k_j = 1\}$) and $J(\gamma, \gamma^\prime)$ to be a set which consists of the different variables (\textit{i.e.} $J(\gamma, \gamma^\prime) = \{j| \gamma_j \ne \gamma_j^\prime\}$). They obey the relationship $J(\gamma, \gamma^\prime) \subseteq J(k)$ if $\gamma^\prime \in N(\gamma, k)$. The $j$-th conditional distribution of $k_j$, $p^{\text{RN}}_{\eta,j}$, satisfies 
\begin{align*}
	p^{\text{RN}}_{\eta,j}(k_j|\gamma_j) = p^{\text{RN}}_{\eta,j}(k_j|\gamma^\prime_j)
\end{align*} 
if $j$ is outside of $J(\gamma,\gamma^\prime)$. This is because $\gamma_j = \gamma^\prime_j$ for $j \in J(\gamma,\gamma^\prime)$. 

Start with simplifying the ratio $p^{\text{RN}}_{\eta}(k|\gamma^\prime)/p^{\text{RN}}_{\eta}(k|\gamma)$. Following the similar steps in the proof of Proposition \ref{prop:ARN_acc_rate_equ} and suppose $\gamma_j = 1-\gamma_j^\prime$ for all $j \in J(\gamma,\gamma^\prime)$ and $\eta = (A,D)$, we can show the following
\begin{align}
    \frac{p^{\text{RN}}_\eta(k|\gamma^\prime)}{p^{\text{RN}}_\eta(k|\gamma)} & = \prod_{j \in J(\gamma, \gamma^\prime)} \frac{A_j^{1-\gamma_j} D_j^{\gamma_j}}{A_j^{\gamma_j} D_j^{1-\gamma_j}} \notag \\
    & = \prod_{j \in J(\gamma, \gamma^\prime)} \left(\frac{A_j}{D_j}\right)^{1-2\gamma_j} = \prod_{j \in J(\gamma, \gamma^\prime)} \left(\frac{A_j}{D_j}\right)^{2\gamma^\prime_j-1}. \label{mat:apx_parni_con_k_odd}
\end{align}

Next step is simplifying the second ratio $q^{\text{PARNI}}_{\theta, k}(\gamma^\prime, \gamma) / q^{\text{PARNI}}_{\theta, k}(\gamma, \gamma^\prime)$ and showing that this term can be decomposed into 3 parts. Since the sampling process is sequential, the model $\gamma(r)$ is proposed from $\gamma(r-1)$ at time $r$. For reversed move, the model $\gamma^\prime(r)$ is proposed from $\gamma^{\prime}(r-1)$ at time $r$. Moreover, $\gamma(0)$ and $\gamma^\prime(p_k)$ are the current state $\gamma$ and $\gamma^\prime(p_k)$ and $\gamma^{\prime}(0)$ are the final proposal $\gamma^\prime$. We correlate $r$ and $r^\prime = p_k-r+1$ since $\gamma(r) = \gamma^\prime(r^\prime-1)$ and $K_{r} = K^\prime_{r^\prime}$. We consider the ratio $q^{\text{PARNI}}_{\theta,K^{\prime}_{r^\prime}}(\gamma^\prime(r^\prime-1), \gamma^\prime(r^\prime)) / q^{\text{PARNI}}_{\theta,K_r}(\gamma(r-1), \gamma(r))$. From (\ref{mat:PARNI_prop}) and therefore have
\begin{align*}
    \frac{q^{\text{PARNI}}_{\theta,K_{r^\prime}}(\gamma^\prime(r^\prime-1), \gamma^{\prime}(r^\prime))}{q^{\text{PARNI}}_{\theta,K_r}(\gamma(r-1), \gamma(r))} & =  \frac{g\left(\frac{\pi(\gamma^{\prime}(r^\prime))p^{\text{RN}}_{\eta}(e(K_{r^\prime})|\gamma^{\prime}(r^\prime))}{\pi(\gamma^\prime(r^\prime-1))p^{\text{RN}}_{\eta}(e(K_{r^\prime})|\gamma^\prime(r^\prime-1))}\right)/Z^\prime(r^\prime)}{g\left(\frac{\pi(\gamma(r))p^{\text{RN}}_{\eta}(e(K_{r})|\gamma(r))}{\pi(\gamma(r-1))p^{\text{RN}}_{\eta}(e(K_{r})|\gamma(r-1))}\right)/Z(r)}.
\end{align*}
Since $g$ is a balancing function and satisfies $g(t) = tg(1/t)$ for any positive $t$ and $\gamma(r) = \gamma^\prime(r^\prime+1)$, we have
\begin{align*}
    \frac{q^{\text{PARNI}}_{\theta,K_{r^\prime}}(\gamma^\prime(r^\prime-1), \gamma^{\prime}(r^\prime))}{q^{\text{PARNI}}_{\theta,K_r}(\gamma(r-1), \gamma(r))} & = \frac{\pi(\gamma(r-1))p^{\text{RN}}_{\eta}(e(K_{r})|\gamma(r-1))}{\pi(\gamma(r))p^{\text{RN}}_{\eta}(e(K_{r})|\gamma(r))} \cdot \frac{Z(r)}{Z^\prime(r^\prime)}.
\end{align*}
The product of the above ratio from $r=1$ to $p_k$ yields the term $q^{\text{PARNI}}_{\theta, k}(\gamma^\prime, \gamma) / q^{\text{PARNI}}_{\theta, k}(\gamma, \gamma^\prime)$ as follows
\begin{align*}
    \frac{q^{\text{PARNI}}_{\theta, k}(\gamma^\prime, \gamma)}{ q^{\text{PARNI}}_{\theta, k}(\gamma, \gamma^\prime)} & =  \frac{\prod_{r^\prime=1}^{p_k} q^{\text{PARNI}}_{\theta,K_{r^\prime}}(\gamma^\prime(r^\prime-1), \gamma^{\prime}(r^\prime))}{\prod_{r=1}^{p_k} q^{\text{PARNI}}_{\theta,K_r}(\gamma(r-1), \gamma(r))} \\
    & = \prod_{r=1}^{p_k} \frac{\pi(\gamma(r-1))p^{\text{RN}}_{\eta}(e(K_{r})|\gamma(r-1))}{\pi(\gamma(r))p^{\text{RN}}_{\eta}(e(K_{r})|\gamma(r))} \cdot \frac{Z(r)}{Z^\prime(r^\prime)} \\
    & = \underbrace{ \left( \prod_{r=1}^{p_k} \frac{\pi(\gamma(r-1))}{\pi(\gamma(r))} \right) }_{\text{I}} \cdot \underbrace{ \left( \prod_{r=1}^{p_k} \frac{p^{\text{RN}}_{\eta}(e(K_{r})|\gamma(r-1))}{p^{\text{RN}}_{\eta}(e(K_{r})|\gamma(r))} \right) }_{\text{II}} \cdot \underbrace{ \left( \prod_{r=1}^{p_k} \frac{Z(r)}{Z^\prime(r)} \right) }_{\text{III}}
\end{align*}
since $r^\prime = p_k-r$.

The first term $\text{I}$ is equal to
\begin{align*}
     \text{I} = \frac{\pi(\gamma)}{\pi(\gamma(1))} \frac{\pi(\gamma(1))}{\pi(\gamma(2))} \cdots \frac{\pi(\gamma(p_k-2))}{\pi(\gamma(p_k-1))} \frac{\pi(\gamma(p_k-1))}{\pi(\gamma^\prime)}.
\end{align*}
Most terms can be cancelled out and this leaves the first numerator and the last denominator
\begin{align*}
    \text{I} = \frac{\pi(\gamma)}{\pi(\gamma^\prime)}.
\end{align*}

We now deal with the second term $\text{II}$. Substituting the values gives
\begin{align*}
    \frac{p^{\text{RN}}_{\eta}(e(K_{r})|\gamma(r-1))}{p^{\text{RN}}_{\eta}(e(K_{r})|\gamma(r)))} = \left(\frac{A_{K_{r}}}{D_{K_{r}}}\right)^{1-2\gamma(r)_{K_{r}}}.
\end{align*}
We know that the positions $K_1,\dots,K_{p_k}$ are distinct and the vectors $e(K_1),\dots,e(K_{p_k})$ can recover the auxiliary variable $k$. Therefore, we obtain
\begin{align*}
    \text{II} = \prod_{j \in J(\gamma, \gamma^\prime)} \left(\frac{A_j}{D_j}\right)^{1-2\gamma_j}.
\end{align*}

Following the above arguments, the product of sequence $j = 1, \dots, p_k$ can be simplified 
\begin{align}
    \frac{q^{\text{PARNI}}_{\theta, k}(\gamma^\prime, \gamma)}{q^{\text{PARNI}}_{\theta, k}(\gamma, \gamma^\prime)} = \underbrace{\frac{\pi(\gamma)}{\pi(\gamma^\prime)}}_{\text{I}} \cdot \underbrace{\prod_{j \in J(\gamma, \gamma^\prime)}  \left(\frac{A_j}{D_j}\right)^{1-2\gamma_j} }_{\text{II}} \cdot \underbrace{\prod_{r = 1}^{p_k} \frac{Z(r)}{Z^\prime(r)}}_{\text{III}} \label{mat:apx_parni_mh_acc}
\end{align}

The Metropolis-Hastings acceptance probability in (\ref{mat:PARNIacc}) is 
\begin{align*}
    \alpha^{\text{PARNI}}_{\theta, k}(\gamma, \gamma^\prime) & = \left\{ 1, \frac{\pi(\gamma^\prime)p^{\text{RN}}_{\eta}(k|\gamma^\prime)q^{\text{PARNI}}_{\theta, k}(\gamma^\prime, \gamma)}{\pi(\gamma)p^{\text{RN}}_{\eta}(k|\gamma)q^{\text{PARNI}}_{\theta, k}(\gamma, \gamma^\prime)} \right\} \\
    & = \min\left\{1, \left(\frac{\pi(\gamma^\prime)}{\pi(\gamma)}\cdot \text{I}\right) \cdot \left(\frac{p^{\text{RN}}_{\eta}(k|\gamma^\prime)}{p^{\text{RN}}_{\eta}(k|\gamma)}\cdot\text{II}\right) \cdot \text{III} \right\}
\end{align*}
From (\ref{mat:apx_parni_con_k_odd}) and (\ref{mat:apx_parni_mh_acc}), we have
\begin{align*}
    & \frac{\pi(\gamma^\prime)}{\pi(\gamma)}\cdot \text{I} = 1 \\
    & \frac{p^{\text{RN}}_{\eta}(k|\gamma^\prime)}{p^{\text{RN}}_{\eta}(k|\gamma)}\cdot\text{II} = 1
\end{align*}
and we therefore obtain
\begin{align*}
    \alpha^{\text{PARNI}}_{\theta, k}(\gamma, \gamma^\prime) & = \text{III} \\
    & = \prod_{j=1}^{p_k} \frac{Z(j)}{Z^\prime(j)}
\end{align*}
as required.

\end{proof}

\subsection{Proof of Lemma \ref{lemma:PARNI_simerg}}

The proof of Lemma \ref{lemma:PARNI_simerg} is structured as proof of Lemma 1 in \cite{griffin2021search}.

\begin{proof}[Proof of Lemma \ref{lemma:PARNI_simerg}]

We first introduce some preliminary work. Since $\theta = (\eta, \omega) \in \Delta_\epsilon^{2p+1}$, we have
\begin{align}
     \epsilon^p \leq p^{\text{RN}}_{\eta}(k|\gamma) \leq (1-\epsilon)^p
\end{align}
for any $k$ and $\gamma$. Similar arguments implies that
\begin{align}
    \left(\frac{\epsilon}{1-\epsilon} \right)^p \leq \frac{p^{\text{RN}}_{\eta}(k|\gamma^\prime)}{p^{\text{RN}}_{\eta}(k|\gamma)}  \leq \left(\frac{1-\epsilon}{\epsilon} \right)^p. \label{mat:apx_parni_sue_1}
\end{align}
From assumption (A.2), we know that there exists a constant $\Pi$ such that
\begin{align}
    \frac{1}{\Pi} \leq \frac{\pi(\gamma^\prime)}{\pi(\gamma)} \leq \Pi. \label{mat:apx_parni_sue_2}
\end{align}
Let $t_{\gamma, \gamma^\prime, k}$ be
\begin{align}
    t_{\gamma, \gamma^\prime, k} = \frac{\pi(\gamma^\prime)}{\pi(\gamma)} \cdot \frac{p^{\text{RN}}_{\eta}(k|\gamma^\prime)}{p^{\text{RN}}_{\eta}(k|\gamma)}.
\end{align}
Using (\ref{mat:apx_parni_sue_1}) and (\ref{mat:apx_parni_sue_2}) leads to
\begin{align}
   \frac{1}{\Pi} \cdot \left(\frac{\epsilon}{1-\epsilon} \right)^p \leq t_{\gamma, \gamma^\prime, k}  \leq \Pi \cdot \left(\frac{1-\epsilon}{\epsilon} \right)^p,
\end{align}
for any $\gamma,\gamma^\prime \in \Gamma$ and $k \in \mathcal{K}$, thus,
\begin{align*}
   g\left(\frac{1}{\Pi} \cdot \left(\frac{\epsilon}{1-\epsilon} \right)^p\right) \leq g(t_{\gamma, \gamma^\prime, k}) \leq g \left( \Pi \cdot \left(\frac{1-\epsilon}{\epsilon} \right)^p \right)
\end{align*}
since $g$ is a non-decreasing function. We define the following qualities
\begin{align*}
    g^\uparrow & := g \left( \Pi \cdot \left(\frac{1-\epsilon}{\epsilon} \right)^p \right) \\
    g^\downarrow & := g\left(\frac{1}{\Pi} \cdot \left(\frac{\epsilon}{1-\epsilon} \right)^p\right)
\end{align*}
Therefore, for all normalising constants $Z(r)$, it is bounded between $Z^{\downarrow}$ and $Z^{\uparrow}$ where
\begin{align}
    Z^{\downarrow} & := 2\epsilon g^{\downarrow} \label{mat:apx_parni_sue_6}\\
    Z^{\uparrow} & := 2(1-\epsilon) g^{\uparrow} \label{mat:apx_parni_sue_4}
\end{align}
due to the fact that $\omega \in(\epsilon, 1- \epsilon)$.

Suppose $k$ and corresponding $K$ are given, we now bound each individual kernel $q^{\text{PARNI}}_{\theta, K_r}(\gamma, \gamma^\prime)$ for any $\gamma$ and $r$ from $1$ to $p_k$. Staring from the definition in (\ref{mat:PARNI_prop}),
\begin{align}
    q^{\text{PARNI}}_{\theta, K_r}(\gamma, \gamma^\prime) = & g\left(\frac{\pi(\gamma^\prime)p^{\text{RN}}_{\eta}(e(K_r)|\gamma^\prime)}{\pi(\gamma)p^{\text{RN}}_{\eta}(e(K_r)|\gamma)}\right) q^{\text{THIN}}_{\omega, e(K_r)}(\gamma, \gamma^\prime) / Z(r) \notag \\
    \geq & \frac{\epsilon g^\downarrow}{Z(r)} \notag \\
    \geq & \frac{\epsilon g^\downarrow}{2(1-\epsilon)g^\uparrow} \label{mat:apx_parni_sue_5}
\end{align}
where the last is followed by (\ref{mat:apx_parni_sue_4}). We next consider the full update kernel of PARNI
\begin{align}
    q_{\theta, k}^{\text{PARNI}}(\gamma, \gamma^\prime) = \prod_{r = 1}^{p_k} q^{\text{PARNI}}_{\theta, K_r}(\gamma(r-1), \gamma(r))
\end{align}
where $\gamma(0) = \gamma$ and $\gamma(p_k) = \gamma^\prime$. From (\ref{mat:apx_parni_sue_5}), the full update kernel is bounded as follows
\begin{align}
    q_{\theta, k}^{\text{PARNI}}(\gamma, \gamma^\prime) \geq \left( \frac{\epsilon g^\downarrow}{2(1-\epsilon)g^\uparrow} \right)^{p_k} \notag \\
    \geq \left( \frac{\epsilon g^\downarrow}{2(1-\epsilon)g^\uparrow} \right)^{p}
\end{align}
since $p_k \leq p$ for all $k$. We also bound the Metropolis-Hastings acceptance probability in (\ref{mat:PARNIacc}) from below
\begin{align*}
    \alpha_{\theta, k}^\text{PARNI}(\gamma, \gamma^\prime) & = \min \left\{1, \frac{\pi(\gamma^\prime) p^{\text{RN}}_\eta(k|\gamma^\prime) q^{\text{PARNI}}_{\theta, k}(\gamma^\prime,\gamma)}{\pi(\gamma) p^{\text{RN}}_\eta(k|\gamma) q^{\text{PARNI}}_{\theta, k}(\gamma, \gamma^\prime)}\right\} \\
    & \geq \pi(\gamma^\prime) p^{\text{RN}}_\eta(k|\gamma^\prime) q^{\text{PARNI}}_{\theta, k}(\gamma^\prime,\gamma) \\ 
    & \geq \pi_m  \left( \frac{\epsilon g^\downarrow}{2(1-\epsilon)g^\uparrow} \right)^p \epsilon^p
\end{align*}
where $\pi_m = \min_\gamma\pi(\gamma)$.

Finally, we can chose $b$ such that
\begin{align*}
    b = \pi_m  \left( \frac{\epsilon^2 g^\downarrow}{2(1-\epsilon)g^\uparrow} \right)^{2p}
\end{align*}
and therefore
\begin{align*}
    p^{\text{PARNI}}_{\theta}(\gamma,\gamma^\prime) & = \sum_{k \in \mathcal{K}} p^{\text{PARNI}}_{\theta, k}(\gamma,\gamma^\prime) \\
     & = \sum_{k \in \mathcal{K}}  q^{\text{PARNI}}_{\theta, k}(\gamma,\gamma^\prime) \alpha^{\text{PARNI}}_{\theta, k}(\gamma,\gamma^\prime) \\
    & \geq \sum_{k \in \mathcal{K}}  b \geq b.
\end{align*}
We say that $\Gamma$ is $(1, b, \pi(\cdot))$-small, and the simultaneous uniform ergodicity of the chain has been established following Theorem 8 of \cite{roberts2004general}.

For $L$ multiple chains, the arguments are similar, but instead, the target distribution is now $\pi^{\otimes L}(\gamma^{\otimes L})$ for $\gamma^{\otimes L} \in \Gamma^{\otimes L}$ and $\Gamma^{\otimes L}$ is $(1, b^{\otimes L}, \pi^{\otimes L}(\cdot))$-small for which
\begin{align*}
    b^{\otimes L} = \pi_m^L  \left( \frac{\epsilon^2 g^\downarrow}{2(1-\epsilon)g^\uparrow} \right)^{2pL}.
\end{align*}

\end{proof}

\subsection{Proof of Lemma \ref{lemma:PARNI_dimadap}}

Before proving the lemma, we require the following inequalities and its generalised version
\begin{lemma} \label{lemma:apx_bound_series}
\begin{align}
    \bigg|\prod_{j=1}^p a_j - \prod_{j=1}^p b_j\bigg| \leq \sum_{j=1}^p |a_j - b_j| \label{mat:inequality_1}
\end{align}
for all $a_j$, $b_j \in [0,1]$.
\end{lemma}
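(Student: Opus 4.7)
The plan is to prove Lemma \ref{lemma:apx_bound_series} by induction on $p$, using the fact that all entries lie in $[0,1]$ so partial products are bounded above by $1$. The base case $p=1$ is immediate since both sides equal $|a_1-b_1|$.

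For the inductive step, assuming the inequality holds for $p-1$, I would introduce a telescoping term $a_p \prod_{j=1}^{p-1} b_j$ and split
\begin{align*}
\Bigl|\textstyle\prod_{j=1}^p a_j - \prod_{j=1}^p b_j\Bigr|
&= \Bigl|a_p \textstyle\prod_{j=1}^{p-1} a_j - a_p \prod_{j=1}^{p-1} b_j + a_p \prod_{j=1}^{p-1} b_j - b_p \prod_{j=1}^{p-1} b_j\Bigr|,
\end{align*}
then apply the triangle inequality to bound this by
\begin{align*}
a_p \Bigl|\textstyle\prod_{j=1}^{p-1} a_j - \prod_{j=1}^{p-1} b_j\Bigr| + |a_p - b_p| \textstyle\prod_{j=1}^{p-1} b_j.
\end{align*}
Using $a_p \le 1$ and $\prod_{j=1}^{p-1} b_j \le 1$, together with the inductive hypothesis applied to the first factor, yields $\sum_{j=1}^{p-1}|a_j - b_j| + |a_p - b_p| = \sum_{j=1}^p |a_j - b_j|$, which completes the induction.

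An equivalent non-inductive route would be to write the difference as a telescoping sum $\prod_{j=1}^p a_j - \prod_{j=1}^p b_j = \sum_{i=1}^p \bigl(\prod_{j<i} b_j\bigr)(a_i - b_i)\bigl(\prod_{j>i} a_j\bigr)$ and then apply the triangle inequality together with the bound that each partial product in $[0,1]$ is at most $1$. There is no serious obstacle here; the only subtlety is making sure one uses the $[0,1]$ assumption to discard the surrounding partial-product factors, which is essential since the inequality fails without a uniform bound on the $a_j$ and $b_j$.
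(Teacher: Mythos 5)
Your proof is correct, and it takes a genuinely different route from the paper's --- in fact a sounder one. The paper's own argument starts from the observations $\prod_{j=1}^p a_j \leq \sum_{j=1}^p a_j$ and $\prod_{j=1}^p b_j \leq \sum_{j=1}^p b_j$ and then asserts $\bigl|\prod_j a_j - \prod_j b_j\bigr| \leq \bigl|\sum_j a_j - \sum_j b_j\bigr|$; but upper bounds on two quantities do not control the absolute value of their difference, and that intermediate inequality is actually false: with $p=2$, $a_1=a_2=\tfrac12$, $b_1=1$, $b_2=0$, the left side is $\tfrac14$ while the right side is $0$. The lemma itself is still true, and your telescoping decomposition
\begin{align*}
\prod_{j=1}^p a_j - \prod_{j=1}^p b_j = \sum_{i=1}^p \Bigl(\prod_{j<i} b_j\Bigr)(a_i - b_i)\Bigl(\prod_{j>i} a_j\Bigr)
\end{align*}
(or its inductive formulation) is the standard correct argument: the triangle inequality reduces everything to the cross terms, and the $[0,1]$ hypothesis is used precisely to discard the surrounding partial products, which you correctly identify as the essential point. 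Either of your two variants establishes the bound, and your argument also adapts immediately to give the generalised version (Lemma 6 in the paper, with a uniform bound $C$ on the entries) by rescaling. In short: no gap on your side; the paper's stated proof is the one that needs repair, and an argument along your lines is what should replace it.
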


\begin{proof}[Proof of Lemma \ref{lemma:apx_bound_series}]

    Since $a_j$ and $b_j \in [0,1]$ for all $j$, we have 
    \begin{align}
        \prod_{j=1}^p a_j \leq \sum_{j=1}^p a_j, \quad \prod_{j=1}^p b_j \leq \sum_{j=1}^p b_j.
    \end{align}
    Starting from the right hand size of (\ref{mat:inequality_1}) and combining with the above gives
    \begin{align}
        \bigg|\prod_{j=1}^p a_j - \prod_{j=1}^p b_j\bigg| & \leq \bigg|\sum_{j=1}^p a_j - \sum_{j=1}^p b_j\bigg| \notag \\
        & \leq \sum_{j=1}^p |a_j - b_j|
    \end{align}
    for which the last line follows from the triangle inequality.
    
\end{proof}

We can generalise the above lemma and obtain the following.
\begin{lemma} \label{lemma:apx_bound_series_2}
If $a_j$, $b_j \leq C$ for some constant $C > 0$, then
\begin{align}
    \bigg|\prod_{j=1}^p a_j - \prod_{j=1}^p b_j\bigg| \leq C_1 \sum_{j=1}^p |a_j - b_j|
\end{align}
for some constant $C_1$. $C_1$ can be chosen to be $C^{p-1}$.
\end{lemma}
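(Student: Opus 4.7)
The plan is to reduce the bound on $\bigl|\prod_j a_j - \prod_j b_j\bigr|$ to a sum of the individual differences $|a_j - b_j|$ via the standard telescoping identity
\begin{align*}
\prod_{j=1}^p a_j - \prod_{j=1}^p b_j = \sum_{k=1}^p \Bigl(\prod_{j<k} a_j\Bigr)(a_k - b_k)\Bigl(\prod_{j>k} b_j\Bigr),
\end{align*}
which is easily verified by induction on $p$ (or by observing that the right-hand side telescopes: the term at index $k$ replaces $a_k$ with $b_k$ in the running product, and the intermediate terms cancel). Unlike Lemma \ref{lemma:apx_bound_series}, here the entries are not assumed to lie in $[0,1]$, so the earlier proof (which relied on $\prod a_j \leq \sum a_j$) does not apply, and the telescoping identity is exactly the right substitute.

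Next I would apply the triangle inequality to the telescoped expression, then use the hypothesis $|a_j|, |b_j| \leq C$ to bound each surviving factor:
\begin{align*}
\Bigl|\prod_{j=1}^p a_j - \prod_{j=1}^p b_j\Bigr| \leq \sum_{k=1}^p \Bigl(\prod_{j<k} |a_j|\Bigr) |a_k - b_k| \Bigl(\prod_{j>k} |b_j|\Bigr) \leq \sum_{k=1}^p C^{k-1} \cdot C^{p-k} \cdot |a_k - b_k|.
\end{align*}
Since each term in the sum contributes a factor of $C^{p-1}$, this collapses to $C^{p-1} \sum_{k=1}^p |a_k - b_k|$, giving the claim with $C_1 = C^{p-1}$.

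There is essentially no hard step: the proof is a one-line telescoping combined with the triangle inequality, and the resulting constant $C^{p-1}$ is precisely the one proposed in the statement. The only care required is to verify the telescoping identity in the form above, for which a quick induction on $p$ suffices: the base case $p=1$ is trivial, and the inductive step follows from
\begin{align*}
\prod_{j=1}^{p+1} a_j - \prod_{j=1}^{p+1} b_j = \Bigl(\prod_{j=1}^p a_j\Bigr)(a_{p+1} - b_{p+1}) + \Bigl(\prod_{j=1}^p a_j - \prod_{j=1}^p b_j\Bigr) b_{p+1},
\end{align*}
into which the inductive hypothesis is substituted directly.
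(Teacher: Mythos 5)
Your proof is correct, and it takes a genuinely different route from the paper. The paper proves this lemma by rescaling: it writes $\bigl|\prod_j a_j - \prod_j b_j\bigr| = C^p\bigl|\prod_j (a_j/C) - \prod_j (b_j/C)\bigr|$, notes that the rescaled entries lie in $[0,1]$, and invokes the preceding Lemma \ref{lemma:apx_bound_series} to get the bound $C^p\sum_j|a_j/C - b_j/C| = C^{p-1}\sum_j|a_j-b_j|$. Your telescoping identity plus triangle inequality reaches the same constant directly. Your version buys three things: it is self-contained (it does not lean on Lemma \ref{lemma:apx_bound_series}, whose own proof in the paper passes through the questionable intermediate claim $\bigl|\prod a_j - \prod b_j\bigr| \le \bigl|\sum a_j - \sum b_j\bigr|$, which fails for, e.g., $a=(0.6,0.6)$, $b=(0.5,0.7)$ even though the lemma's conclusion is true); it needs only $|a_j|,|b_j|\le C$ rather than the nonnegativity that the paper's rescaling implicitly requires for the quotients to land in $[0,1]$; and with $C=1$ it recovers Lemma \ref{lemma:apx_bound_series} as a special case, so the telescoping argument would repair both proofs at once. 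The paper's route is shorter on the page, but only by deferring the real work to the earlier lemma.
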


\begin{proof}
\begin{align*}
    \bigg|\prod_{j=1}^p a_j - \prod_{j=1}^p b_j\bigg| & = C^p \bigg|\prod_{j=1}^p \frac{a_j}{C} - \prod_{j=1}^p \frac{b_j}{C}\bigg| := A
\end{align*}
As $a_j$, $b_j \leq C$, $a_j/C$ and $b_j/C$ are smaller than 1.
\begin{align*}
    A & \leq C^p \sum_{j=1}^p \bigg|\frac{a_j}{C} - \frac{b_j}{C} \bigg| \\
    & = C^{p-1} \sum_{j=1}^p |a_j - b_j|
\end{align*}
as required.
\end{proof}

The following lemma shows the diminishing rate of the proposal thinning parameter $\omega$ for both schemes (the PARNI-KW and PARNI-RM proposals).

\begin{lemma}\label{lemma:omega_dimishing}
The diminishing rate of adaptive parameter $\omega$ in both (\ref{mat:update_omega_RM}) and (\ref{mat:update_omega_KW}) between two consecutive iterations satisfies
\begin{align}
    |\omega^{(i+1)} - \omega^{(i)}| = \mathcal{O}(i^{-\lambda}) \label{mat:kwdimishing}
\end{align}
for some $\lambda > 0$. In particular, setting $\phi_i = i^{-0.7}$, $a_i = i^{-1}$ and $c_i = i^{-0.5}$ as suggested, (\ref{mat:kwdimishing}) holds for $\lambda = 0.5$.
\end{lemma}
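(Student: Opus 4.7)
The plan is to treat both adaptation schemes uniformly by working in logit space, where the increments are easy to bound, and then transferring the bound back to $\omega$-space via the Lipschitz property of $\text{logit}_\epsilon^{-1}$. From the definition $\text{logit}_\epsilon(x) = \log((x-\epsilon)/(1-x-\epsilon))$ a short calculation gives $\text{logit}_\epsilon^{-1}(y) = \epsilon + (1-2\epsilon)\sigma(y)$ where $\sigma$ is the standard logistic function, so its derivative equals $(1-2\epsilon)\sigma(y)(1-\sigma(y))$, which is bounded above by $(1-2\epsilon)/4 < 1/4$. Hence $\text{logit}_\epsilon^{-1}$ is globally Lipschitz with a constant $L_\epsilon \leq 1/4$, and it suffices to bound $\lvert \text{logit}_\epsilon \omega^{(i+1)} - \text{logit}_\epsilon \omega^{(i)} \rvert$ for each scheme.

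For the Robbins--Monro update in \eqref{mat:update_omega_RM}, the increment in logit space equals $\tfrac{\phi_i}{L}\sum_{l=1}^{L}(\alpha_i^l - \tau)$. Because Metropolis--Hastings acceptance probabilities and the target $\tau$ all lie in $[0,1]$, each summand satisfies $|\alpha_i^l - \tau| \leq 1$, so the logit increment is bounded in absolute value by $\phi_i = i^{-0.7}$. Combined with the Lipschitz bound this yields $|\omega^{(i+1)} - \omega^{(i)}| \leq L_\epsilon \phi_i = \mathcal{O}(i^{-0.7})$, which is in particular $\mathcal{O}(i^{-0.5})$.

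For the Kiefer--Wolfowitz update in \eqref{mat:update_omega_KW}, the logit increment equals $a_i (\text{ASJD}^{+,(i)} - \text{ASJD}^{-,(i)})/(2c_i)$. The key elementary observation is that each ASJD quantity is bounded above by $p$: it is an average of terms of the form $\bigl(\sum_{j=1}^p(\gamma_j - \gamma_j')^2\bigr)\alpha \leq p \cdot 1 = p$, since Hamming distance squared is at most $p$ and acceptance probabilities lie in $[0,1]$. Hence $|\text{ASJD}^{+,(i)} - \text{ASJD}^{-,(i)}| \leq 2p$, and the logit increment is bounded by $p\, a_i / c_i = p\, i^{-0.5}$. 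Applying the Lipschitz bound gives $|\omega^{(i+1)} - \omega^{(i)}| \leq L_\epsilon p \, i^{-0.5} = \mathcal{O}(i^{-0.5})$.

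Combining the two schemes, both satisfy $|\omega^{(i+1)} - \omega^{(i)}| = \mathcal{O}(i^{-\lambda})$ with $\lambda = 0.5$, which is the weaker of the two rates and thus the effective common rate claimed in the lemma. There is no real obstacle here: the argument is a routine Lipschitz chain of inequalities, and the only content is the uniform boundedness of $\alpha_i^l - \tau$ (trivially in $[-1,1]$) and of the ASJD (bounded by $p$). The proof transfers unchanged to the multiple-chain setting since each chain contributes a bounded term that is averaged rather than summed, so the $L$ in the denominator of the RM update keeps the bound independent of $L$.
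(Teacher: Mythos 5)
Your proof is correct and follows essentially the same route as the paper's: bound the logit-space increments using $|\alpha_i^l-\tau|\le 1$ for the Robbins--Monro update and $\text{ASJD}\le p$ for the Kiefer--Wolfowitz update, then read off the rates $\phi_i=i^{-0.7}$ and $p\,a_i/c_i=p\,i^{-0.5}$. Your explicit Lipschitz bound for $\text{logit}_\epsilon^{-1}$ (derivative $(1-2\epsilon)\sigma(y)(1-\sigma(y))\le (1-2\epsilon)/4$) is a welcome detail that the paper leaves implicit when it passes from logit increments to $|\omega^{(i+1)}-\omega^{(i)}|$. The one point where you diverge: for the KW scheme the paper additionally bounds $\bigl|(\omega^{(i+1)}\pm c_{i+1})-(\omega^{(i)}\pm c_i)\bigr|$, which picks up the extra term $|c_{i+1}+c_i|=\mathcal{O}(i^{-0.5})$, because the chains are actually run at the perturbed values $\omega\pm c_i$ rather than at $\omega$ itself. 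Your argument establishes the lemma as literally stated, but would need that one additional line (which costs nothing, since $c_i=i^{-0.5}$ already has the right order) to control the quantity that the diminishing-adaptation argument ultimately requires.
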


\begin{proof}[Proof of Lemma \ref{lemma:omega_dimishing}]
    
    The update rule of (\ref{mat:update_omega_RM}) immediately leads to 
    \begin{align*}
        |\omega^{(i+1)} - \omega^{(i)}| = \mathcal{O}(i^{-\lambda})
    \end{align*}
    for $\lambda = 0.7$.
    
    For the Kiefer-Wolfowitz updating law in (\ref{mat:update_omega_KW}), the values of tuning parameters $\omega$ adopted involve diminishing sequence $c_i$.  We are therefore interested in 
    \begin{align}
        \bigg||\omega^{(i+1)} \pm c_{i+1}| - |\omega^{(i)} \pm c_i|\bigg|. \label{mat:omega_kw_dimishing}
    \end{align}
    By applying the triangle inequality, we obtain
    \begin{align*}
    \bigg||\omega^{(i+1)} \pm c_{i+1}| - |\omega^{(i)} \pm c_i|\bigg| \leq & \bigg|(\omega^{(i+1)} \pm c_{i+1}) - (\omega^{(i)} \pm c_i)\bigg| \\
    \leq & \bigg|\omega^{(i+1)} + c_{i+1} - \omega^{(i)} + c_i\bigg| \\
    \leq & \bigg|\omega^{(i+1)} - \omega^{(i)}\bigg| + \bigg|c_{i+1}  + c_i\bigg|
    := S_1 + S_2,
\end{align*}
Starting from the first term $S_1$ and rearranging (\ref{ESJDupdateKW}), we obtain
\begin{align*}
    S_1 \leq & \Bigg| a_i\left(\frac{\text{ASJD}^{+,(i)} - \text{ASJD}^{-,(i)}}{2c_i}\right)\Bigg| \\
    \leq & p \Bigg| \frac{a_i}{c_i}\Bigg| \\
    = & \mathcal{O}(i^{-(\phi_a + \phi_c)}) = \mathcal{O}(i^{-0.5})
\end{align*}
where the second line follows from the fact that the expected jumping distances are bounded above by $p$.

Substituting the definition of $c_i$ into $S_2$ yields
\begin{align*}
    S_2 \leq & \bigg|(i+1)^{-\phi_c}  + (i)^{-\phi_c}\bigg| \\
    \leq & \bigg|2i^{-\phi_c}\bigg| \\
    = & \mathcal{O}(i^{-\phi_c}) = \mathcal{O}(i^{-0.5}).
\end{align*}

Since both terms $S_1$ and $S_2$ are of the same order of $\mathcal{O}(i^{-0.5})$, the equation (\ref{mat:omega_kw_dimishing}) is also $\mathcal{O}(i^{-0.5})$, which completes the proof.

\end{proof}

We also require the following lemma to bound transition kernels by proposal kernels. The lemma and its proof is inspired by Lemma 4.21 in \cite{latuszynski2013adaptive}. 

\begin{lemma} \label{lemma:apx_PARNI_tran_prop}

The sub-proposal kernel in (\ref{mat:PARNI_subpropkernel}) and sub-transition kernel in (\ref{mat:PARNI_sub_trankernel}) obey the following relationship:
\begin{align}
    \sup_{\gamma \in \Gamma} \sup_{\gamma^\prime \in \Gamma}  \sup_{k \in \mathcal{K}} &
     \bigg|  p^{\text{PARNI}}_{\theta^{(i+1)}, k }(\gamma, \gamma^\prime) - p^{\text{PARNI}}_{\theta^{(i)}, k}(\gamma, \gamma^\prime) \bigg| \notag  \\
     & \leq C \sup_{\gamma \in \Gamma} \sup_{\gamma^\prime \in \Gamma} \sup_{k \in \mathcal{K}}   \bigg| \psi^{\text{PARNI}}_{\theta^{(i+1)}, k}(\gamma, \gamma^\prime) -  \psi^{\text{PARNI}}_{\theta^{(i)}, k}(\gamma, \gamma^\prime)\bigg|
     \label{mat:apx_PARNI_tran_prop}
\end{align}
for some constant $C < \infty$.
\end{lemma}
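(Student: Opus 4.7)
The plan is to split the analysis by cases on whether $\gamma = \gamma^\prime$, since $p^{\text{PARNI}}_{\theta, k}$ as defined in \eqref{mat:PARNI_sub_trankernel} takes a different form on and off the diagonal. The algebraic identity that makes the off-diagonal case clean is that, using $\rho = \mathrm{id}$ for the PARNI scheme together with the ratio form of the Metropolis--Hastings acceptance probability in \eqref{mat:PARNIacc},
\[
p^{\text{PARNI}}_{\theta,k}(\gamma,\gamma^\prime) = \psi^{\text{PARNI}}_{\theta,k}(\gamma,\gamma^\prime)\,\alpha^{\text{PARNI}}_{\theta,k}(\gamma,\gamma^\prime) = \min\!\left\{\psi^{\text{PARNI}}_{\theta,k}(\gamma,\gamma^\prime),\ \frac{\pi(\gamma^\prime)}{\pi(\gamma)}\,\psi^{\text{PARNI}}_{\theta,k}(\gamma^\prime,\gamma)\right\}.
\]
Denoting by $\Psi_i$ the right-hand supremum of \eqref{mat:apx_PARNI_tran_prop}, and applying the elementary inequality $|\min\{a,b\} - \min\{c,d\}| \leq |a - c| + |b - d|$ together with the bound $\pi(\gamma^\prime)/\pi(\gamma) \leq \Pi$ from (A.2), I would obtain $|p^{\text{PARNI}}_{\theta^{(i+1)},k}(\gamma,\gamma^\prime) - p^{\text{PARNI}}_{\theta^{(i)},k}(\gamma,\gamma^\prime)| \leq (1 + \Pi)\,\Psi_i$ for every $\gamma \neq \gamma^\prime$ and every $k$.

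For the diagonal case $\gamma = \gamma^\prime$, I would use that $\alpha^{\text{PARNI}}_{\theta,k}(\gamma,\gamma) = 1$ to expand
\[
p^{\text{PARNI}}_{\theta,k}(\gamma,\gamma) = \psi^{\text{PARNI}}_{\theta,k}(\gamma,\gamma) + \sum_{\gamma^* \in N(\gamma,k) \setminus \{\gamma\}} \psi^{\text{PARNI}}_{\theta,k}(\gamma,\gamma^*)\bigl(1 - \alpha^{\text{PARNI}}_{\theta,k}(\gamma,\gamma^*)\bigr).
\]
Writing $\psi(1 - \alpha) = \psi - \psi\alpha$, the difference of each summand between the two parameter values is controlled by $\Psi_i + (1 + \Pi)\Psi_i = (2 + \Pi)\Psi_i$ via the off-diagonal estimate just obtained. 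Since $\psi^{\text{PARNI}}_{\theta,k}(\gamma,\cdot)$ is supported on the finite set $N(\gamma,k)$ with $|N(\gamma,k)| \leq 2^p$, the triangle inequality across the sum yields $|p^{\text{PARNI}}_{\theta^{(i+1)},k}(\gamma,\gamma) - p^{\text{PARNI}}_{\theta^{(i)},k}(\gamma,\gamma)| \leq \bigl(1 + 2^p(2 + \Pi)\bigr)\Psi_i$.

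Combining the two cases and taking suprema on the left then gives the claimed inequality with, for example, $C = 1 + 2^p(2 + \Pi)$, which is a finite constant because $p$ is fixed. The only modest obstacle is the careful bookkeeping in the diagonal case, where the rejection-to-self term involves a sum over the neighbourhood; this inflates the constant by a $2^p$ factor, but this is harmless for the intended use of the lemma in the subsequent proof of Lemma \ref{lemma:PARNI_dimadap}, where this bound is combined with a direct estimate of $\Psi_i = \mathcal{O}(i^{-\lambda})$ (which itself would follow from Lemma \ref{lemma:omega_dimishing}, assumption (A.1) on Lipschitz continuity of the balancing function $g$, and the telescoping product bound of Lemma \ref{lemma:apx_bound_series_2} applied to the product structure of $\psi^{\text{PARNI}}_{\theta,k}$).
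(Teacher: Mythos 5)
Your proposal is correct and follows essentially the same route as the paper's proof: decompose the sub-transition kernel into its acceptance part and its rejection-to-self part, rewrite $\psi\alpha$ as a minimum of two terms, and apply $|\min\{a,b\}-\min\{c,d\}|\leq|a-c|+|b-d|$ together with the bound $\pi(\gamma^\prime)/\pi(\gamma)\leq\Pi$ from (A.2). The only difference is that you explicitly track the $2^p$ cardinality of the neighbourhood sum in the diagonal case, whereas the paper absorbs it into an unstated finite constant; both are harmless since $p$ is fixed and only finiteness of $C$ is needed.
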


\begin{proof}[Proof of Lemma \ref{lemma:apx_PARNI_tran_prop}]
   
   Let the left-hand side and right-hand side of (\ref{mat:apx_PARNI_tran_prop}) be $L$ and $R$ respectively, namely
   \begin{align}
       L & = \sup_{\gamma \in \Gamma} \sup_{\gamma^\prime \in \Gamma}  \sup_{k \in \mathcal{K}}
     \bigg|  p^{\text{PARNI}}_{\theta^{(i+1)}, k }(\gamma, \gamma^\prime) - p^{\text{PARNI}}_{\theta^{(i)}, k}(\gamma, \gamma^\prime) \bigg| \\
     R & = \sup_{\gamma \in \Gamma} \sup_{\gamma^\prime \in \Gamma} \sup_{k \in \mathcal{K}}  \bigg| \psi^{\text{PARNI}}_{\theta^{(i+1)}, k }(\gamma, \gamma^\prime) -  \psi^{\text{PARNI}}_{\theta^{(i)}, k}(\gamma, \gamma^\prime)\bigg|.
   \end{align}

   Starting from the definition of sub-proposal kernel
   \begin{align*}
      p^{\text{PARNI}}_{\theta, k}(\gamma, \gamma^\prime)
      = &  \psi^{\text{PARNI}}_{\theta, k}(\gamma, \gamma^\prime) \alpha^{\text{PARNI}}_{\theta, k}(\gamma, \gamma^\prime) + \mathbb{I}\{\gamma = \gamma^\prime\} \sum_{\gamma^\prime \in \Gamma} \psi^{\text{PARNI}}_{\theta, k}(\gamma, \gamma^\prime)(1- \alpha^{\text{PARNI}}_{\theta, k}(\gamma, \gamma^\prime))
   \end{align*}
   and substituting it into the left-hand side of (\ref{mat:apx_PARNI_tran_prop}), we obtain
   \begin{align*}
       \bigg|  p^{\text{PARNI}}_{\theta^{(i+1)}, k}(\gamma, \gamma^\prime) & - p^{\text{PARNI}}_{\theta^{(i)}, k}(\gamma, \gamma^\prime) \bigg| \\
       \leq & \bigg| \psi^{\text{PARNI}}_{\theta^{(i+1)}, k}(\gamma, \gamma^\prime) \alpha^{\text{PARNI}}_{\theta^{(i+1)}, k}(\gamma, \gamma^\prime) - \psi^{\text{PARNI}}_{\theta^{(i)}, k}(\gamma, \gamma^\prime) \alpha^{\text{PARNI}}_{\theta^{(i)}, k}(\gamma, \gamma^\prime)   \bigg| \\ 
       & + \mathbb{I}\{\gamma = \gamma^\prime\} \sum_{\gamma^\prime \in \Gamma} \bigg| \psi^{\text{PARNI}}_{\theta^{(i+1)}, k}(\gamma, \gamma^\prime)(1- \alpha^{\text{PARNI}}_{\theta^{(i+1)}, k}(\gamma, \gamma^\prime)) \\
       & \qquad \qquad - \psi^{\text{PARNI}}_{\theta^{(i)}, k}(\gamma, \gamma^\prime)(1- \alpha^{\text{PARNI}}_{\theta^{(i)}, k}(\gamma, \gamma^\prime)) \bigg| := \text{I} + \text{II} 
   \end{align*}
   
   Starting with the term $I$ and substituting in the definition of Metropolis-Hastings acceptance probability in (\ref{mat:PARNIacc}) gives
   \begin{align}
       \text{I} = & \bigg| \min\{\psi^{\text{PARNI}}_{\theta^{(i+1)}, k}(\gamma, \gamma^\prime), \frac{\pi(\gamma^\prime)}{\pi(\gamma)}\psi^{\text{PARNI}}_{\theta^{(i+1)}, k}(\gamma^\prime, \gamma)\} \notag \\ 
       & \qquad - \min\{\psi^{\text{PARNI}}_{\theta^{(i)}, k}(\gamma, \gamma^\prime), \frac{\pi(\gamma^\prime)}{\pi(\gamma)}\psi^{\text{PARNI}}_{\theta^{(i)}, k}(\gamma^\prime, \gamma)\} \bigg| \label{mat:apx_parni_tran_prop_1}
   \end{align}
   Using $|\min\{a,b\} - \min\{c,d\}| < |a-c| + |b-d|$ to further split $\text{I}$
   \begin{align}
       \text{I} = & |\psi^{\text{PARNI}}_{\theta^{(i+1)}, k}(\gamma, \gamma^\prime) - \psi^{\text{PARNI}}_{\theta^{(i)}, k}(\gamma, \gamma^\prime)| \notag \\
       & \qquad + \frac{\pi(\gamma^\prime)}{\pi(\gamma)} |\psi^{\text{PARNI}}_{\theta^{(i+1)}, k}(\gamma^\prime, \gamma) - \psi^{\text{PARNI}}_{\theta^{(i)}, k}(\gamma^\prime, \gamma)| \notag \\
       \leq & (1+\Pi)  R \label{mat:apx_parni_tran_prop_2}
   \end{align}
   where the last line follows from the assumption (A.2).
   An analogous calculation gives
   \begin{align}
       \text{II} \leq (2 + K) R,
   \end{align}
   which together with (\ref{mat:apx_parni_tran_prop_2}) implies that
   \begin{align*}
       \bigg|  p^{\text{PARNI}}_{\theta^{(i+1)}, k }(\gamma, \gamma^\prime) - p^{\text{PARNI}}_{\theta^{(i)}, k}(\gamma, \gamma^\prime) \bigg| \leq (3+2\Pi) R
   \end{align*}
   for any values of $\gamma$, $\gamma^\prime$, $k$ and $O$, hence, it is enough to prove
   \begin{align}
       L \leq C \cdot R
   \end{align}
   for $C = (3+2\Pi)$ as required.

\end{proof}

The proof of Lemma \ref{lemma:PARNI_dimadap} is structured similarly to the proof of Lemma 2 in \cite{griffin2021search}.

\begin{proof}[Proof of Lemma \ref{lemma:PARNI_dimadap}]

The proof is structured as follows: firstly, we re-write the problem as a sum of sub-transition kernels and bound the sub-transition kernels by sub-proposal kernels; secondly, we break sub-proposal kernels into various parts; lastly, we bound each part individually and hence bound the proposal kernels.

Starting from the total variation norm in (\ref{mat:PARNI_dimadap_tv}) and substituting in the definition of $P^{\text{PARNI}}_{\theta}$ in (\ref{mat:PARNI_trankernel}), we have
\begin{align*}
   \sup_{\gamma \in \Gamma} \|P^{\text{PARNI}}_{\theta^{(i+1)}}(\gamma, \cdot) - P^{\text{PARNI}}_{\theta^{(i)}}(\gamma, \cdot) \|_{TV} 
   = & \sup_{\gamma \in \Gamma} \sum_{\gamma^\prime \in \Gamma} \sum_{k \in \mathcal{K}} \bigg|p^{\text{PARNI}}_{\theta^{(i+1)}, k }(\gamma, \gamma^\prime) - p^{\text{PARNI}}_{\theta^{(i)}, k}(\gamma, \gamma^\prime) \bigg| \\
   \leq & C_1 \sup_{\gamma \in \Gamma} \sup_{\gamma^\prime \in \Gamma}  \sup_{k \in \mathcal{K}} 
   \bigg|p^{\text{PARNI}}_{\theta^{(i+1)}, k }(\gamma, \gamma^\prime) - p^{\text{PARNI}}_{\theta^{(i)}, k}(\gamma, \gamma^\prime) \bigg| \\
   := & \text{ I}
\end{align*}
for some constant $C_1 < \infty$. Using Lemma \ref{lemma:apx_PARNI_tran_prop}, the problem is reduced to bounding the largest variations in two consecutive proposal kernels
\begin{align}
    \text{I} \leq & C_2 \sup_{\gamma \in \Gamma} \sup_{\gamma^\prime \in \Gamma}  \sup_{k \in \mathcal{K}}  
   \bigg|\psi^{\text{PARNI}}_{\theta^{(i+1)}, k }(\gamma, \gamma^\prime) - \psi^{\text{PARNI}}_{\theta^{(i)}, k}(\gamma, \gamma^\prime) \bigg|
\end{align}
for some constant $C_2 < \infty$. Plugging in the definition of $\psi^{\text{PARNI}}_{\theta, k}$ into (\ref{mat:PARNI_subpropkernel}), therefore,
\begin{align}
    \text{I} \leq C_3 \sup_{\gamma \in \Gamma} \sup_{\gamma^\prime \in \Gamma}  \sup_{k \in \mathcal{K}}  
   \bigg|p^{\text{RN}}_{\eta^{(i+1)}}(k|\gamma)q^{\text{PARNI}}_{\theta^{(i+1)}, k }(\gamma, \gamma^\prime) - p^{\text{RN}}_{\eta^{(i)}}(k|\gamma) q^{\text{PARNI}}_{\theta^{(i)}, k}(\gamma, \gamma^\prime) \bigg| \label{mat:apx_parni_dimiadap_1}
\end{align}
for some constant $C_3 < \infty$. Applying Lemma \ref{lemma:apx_bound_series} to (\ref{mat:apx_parni_dimiadap_1}), we obtain
\begin{align*}
    \text{I} \leq & C_3 \sup_{\gamma \in \Gamma} \sup_{\gamma^\prime \in \Gamma}  \sup_{k \in \mathcal{K}}  \Bigg( \underbrace{\bigg|p^{\text{RN}}_{\eta^{(i+1)}}(k|\gamma)- p^{\text{RN}}_{\eta^{(i)}}(k|\gamma)\bigg|}_{\text{II}}+
     \underbrace{ \bigg|q^{\text{PARNI}}_{\theta^{(i+1)}, k}(\gamma, \gamma^\prime)  -  q^{\text{PARNI}}_{\theta^{(i)}, k}(\gamma, \gamma^\prime) \bigg|}_{\text{III}} \Bigg).
\end{align*}

Now we move our attention to the next part of the proof where we are going to bound terms $\text{II}$ and $\text{III}$ respectively.

Starting with $\text{II}$, recall the definition of $p^\text{RN}_{\eta}$ in (\ref{mat:RN_prop1}) and $\eta = (A, D)$, we have
\begin{align*}
    p_{\eta}(k|\gamma) & = \prod_{j=1}^p p_{\eta,j}(k_j|\gamma_j) \\
    & = \prod_{j=1}^p  \left( A_j \right)^{(1-\gamma_j)k_j} \left( 1-A_j \right)^{(1-\gamma_j)(1-k_j)} \left( D_j \right)^{\gamma_j(1-k_j)} \left( 1-D_j \right)^{\gamma_j k_j}.
\end{align*}
Following similar arguments to the proof of Lemma 2 of \cite{griffin2021search}, we obtain
\begin{align*}
    \text{II} \leq & \sum_{j=1}^p \max \left\{ |A_j^{(i+1)} - A_j^{(i)}|,  |D_j^{(i+1)} - D_j^{(i)}| \right\} \\
    \leq & p \max \left\{ \max_j \left\{|A_j^{(i+1)} - A_j^{(i)}|\right\}, \max_j \left\{|D_j^{(i+1)} - D_j^{(i)}| \right\}\right\}.
\end{align*}
From the definitions of $A_j$ and $D_j$, we have
\begin{align}
    |A_j^{(i+1)} - A_j^{(i)}| & = \Bigg| \min \left\{1, \frac{ \tilde{\pi}^{(i+1)}_j}{(1-\tilde{\pi}^{(i+1)}_j)} \right\} - \min \left\{1, \frac{ \tilde{\pi}^{(i)}_j}{(1-\tilde{\pi}^{(i)}_j)} \right\} \Bigg| \\
    |D_j^{(i+1)} - D_j^{(i)}| & = \Bigg| \min \left\{1, \frac{(1-\tilde{\pi}^{(i+1)}_j)}{ \tilde{\pi}^{(i+1)}_j} \right\} - \min \left\{1, \frac{(1-\tilde{\pi}^{(i)}_j)}{ \tilde{\pi}^{(i)}_j} \right\} \Bigg|.
\end{align}
The pseudo-code of the PARNI sampler in (\ref{alg:PARNI}) states that $\tilde{\pi}^{(i)}_j = \pi_0 + (1-2\pi_0) \hat{\pi}^{(i)}_j$ and $\hat{\pi}^{(i)}_j$ are the Rao-Blackwellised estimates of posterior inclusion probabilities $\pi_j$. Note that
\begin{align}
    \hat{\pi}^{(i)}_j = \frac{1}{i} \sum_{n=1}^i \Pr(\gamma_j = 1|y, \gamma^{(n)}_{-j})
\end{align}
for all $j$ from $1$ to $p$, and therefore
\begin{align*}
    |\hat{\pi}^{(i+1)}_j - \hat{\pi}^{(i)}_j| & = \bigg| \frac{i}{i+1}\hat{\pi}^{(i)}_j + \frac{1}{i+1} \Pr(\gamma_j = 1|y, \gamma^{(i+1)}_{-j}) - \hat{\pi}^{(i)}_j \bigg| \\
    & \leq \bigg| \frac{i}{i+1}\hat{\pi}^{(i)}_j- \hat{\pi}^{(i)}_j \bigg|  + \frac{1}{i+1} \Pr(\gamma_j = 1|y, \gamma^{(i+1)}_{-j}) \\
    & \leq \frac{2}{i+1}
\end{align*}
Note that $f_{\pi_0}(x) = \min\{1, (\pi_0 + (1-2\pi_0x))/(\pi_0 + (1-2\pi_0(1-x))\}$ is Lipshitz with constant $1/\pi_0$, we obtain
\begin{align*}
    \Bigg| \min \left\{1, \frac{ \tilde{\pi}^{(i+1)}_j}{(1-\tilde{\pi}^{(i+1)}_j)} \right\} - \min \left\{1, \frac{ \tilde{\pi}^{(i)}_j}{(1-\tilde{\pi}^{(i)}_j)} \right\} \Bigg| & \leq \frac{1}{\pi_0} |\hat{\pi}^{(i+1)}_j - \hat{\pi}^{(i)}_j| \\
    & \leq \frac{1}{\pi_0} \cdot \frac{2}{i+1}.
\end{align*}
A similar conclusion can be drawn for each $D_j$, meaning
\begin{align}
    \text{II} \leq C_4 i^{-1}
\end{align}
for some constant $C_4 < \infty$.

The second term, $\text{III}$, is more complicated.  We start by substituting sub proposal kernels in (\ref{mat:PARNI_prop}) to $\text{III}$ yielding
\begin{align}
    \text{III} = \bigg| \prod_{r=1}^{p_k} q_{\theta^{(i+1)}, K_r}^{\text{PARNI}} (\gamma(r-1), \gamma(r)) - \prod_{r=1}^{p_k} q_{\theta^{(i)}, K_r}^{\text{PARNI}} (\gamma(r-1), \gamma(r)) \bigg| \label{mat:apx_parni_dimiadap_2}
\end{align}
where $\gamma(0) = \gamma$ and $\gamma(p_k) = \gamma^\prime$. Applying Lemma \ref{lemma:apx_bound_series} to (\ref{mat:apx_parni_dimiadap_2}), we have
\begin{align}
    \text{III} & \leq  \sum_{r=1}^{p_k} \bigg| q_{\theta^{(i+1)}, K_r}^{\text{PARNI}} (\gamma(r-1), \gamma(r)) - q_{\theta^{(i)}, K_r}^{\text{PARNI}} (\gamma(r-1), \gamma(r)) \bigg| \notag \\
    & \leq p\max_{r} \bigg| q_{\theta^{(i+1)}, K_r}^{\text{PARNI}} (\gamma(r-1), \gamma(r)) - q_{\theta^{(i)}, K_r}^{\text{PARNI}} (\gamma(r-1), \gamma(r)) \bigg| \notag  \\
    & := \text{IV}. \notag
\end{align}

The sub-proposal kernels typically contain two moves, either flipping position $K_r$ or keeping it.  Term $\text{IV}$ is smaller than the maximum probability of these two moves. Let $\text{V}$ be the absolute difference in flipping and $\text{VI}$ be the absolute difference in keeping, then
\begin{align*}
    \text{IV} & \leq q \max_r \{ \max \left\{\text{V} , \text{VI} \right\} \}.
\end{align*}
We next consider terms $\text{V}$ and $\text{VI}$. Starting with $\text{V}$ and substituting (\ref{mat:PARNI_prop}) in $\text{V}$ gives
\begin{align*}
    \text{V} & = \left| \frac{\omega^{(i+1)} g \left( \frac{\pi(\gamma(r))}{\pi(\gamma(r-1))}  \left(\frac{A^{(i+1)}_{K_r}}{D^{(i+1)}_{K_r}}\right)^{ \gamma(r)_{K_r} - \gamma(r-1)_{K_r}} \right)}{Z^{(i+1)}(r)} - \frac{\omega^{(i)} g \left( \frac{\pi(\gamma(r))}{\pi(\gamma(r-1))}  \left(\frac{A^{(i)}_{K_r}}{D^{(i)}_{K_r}}\right)^{ \gamma(r)_{K_r} - \gamma(r-1)_{K_r}} \right)}{Z^{(i)}(r)}\right|.
    \end{align*}
    Reduce the fractions to a common denominator to yield
    \begin{align*}
    \text{V} & = \left| \frac{\omega^{(i+1)} g \left( \frac{\pi(\gamma(r))}{\pi(\gamma(r-1))}  \left(\frac{A^{(i+1)}_{K_r}}{D^{(i+1)}_{K_r}}\right)^{ \gamma^d(r) } \right)Z^{(i)}(r)  - \omega^{(i)} g \left( \frac{\pi(\gamma(r))}{\pi(\gamma(r-1))}  \left(\frac{A^{(i)}_{K_r}}{D^{(i)}_{K_r}}\right)^{ \gamma^d(r)}\right)Z^{(i+1)}(r) }{Z^{(i+1)}(r) Z^{(i)}(r)}  \right| \\
    & \leq \frac{1}{(Z^\downarrow)^2} \left| \omega^{(i+1)} g \left( \frac{\pi(\gamma(r))}{\pi(\gamma(r-1))} \left(\frac{A^{(i+1)}_{K_r}}{D^{(i+1)}_{K_r}}\right)^{ \gamma^d(r) } \right)Z^{(i)}(r)  - \omega^{(i)} g \left( \frac{\pi(\gamma(r))}{\pi(\gamma(r-1))} \left(\frac{A^{(i)}_{K_r}}{D^{(i)}_{K_r}}\right)^{ \gamma^d(r) }\right) Z^{(i+1)}(r)  \right| 
    \end{align*}
    where $\gamma^d(r) = \gamma(r)_{K_r} - \gamma(r-1)_{K_r}$ and the last line follows from (\ref{mat:apx_parni_sue_6}) in the proof of Lemma \ref{lemma:PARNI_simerg} where all normalising constants can be bounded above and below. Using Lemma \ref{lemma:apx_bound_series_2}, we obtain
    \begin{align*}
    \text{V} & \leq  C_5 \underbrace{\left| \omega^{(i+1)} - \omega^{(i)} \right|}_{:=\text{VII}} + \\
    & \qquad C_6 \underbrace{ \left|  g \left( \frac{\pi(\gamma(r))}{\pi(\gamma(r-1))} \left(\frac{A^{(i+1)}_{K_r}}{D^{(i+1)}_{K_r}}\right)^{ \gamma^d(r) } \right) - g \left( \frac{\pi(\gamma(r))}{\pi(\gamma(r-1))} \left(\frac{A^{(i)}_{K_r}}{D^{(i)}_{K_r}}\right)^{ \gamma^d(r) }\right) \right|}_{:=\text{VIII}} + \\
    & \qquad C_7 \underbrace{\left| Z^{(n+1)}(r) - Z^{(n)}(r)  \right|}_{:=\text{IX}}
\end{align*}
for some constants $C_5, C_6, C_7 < \infty$. We can apply similar arguments to $\text{VI}$ giving
\begin{align*}
    \text{VI} & = \left| \frac{(1-\omega^{(i+1)}) g(1)}{Z^{(i+1)}(r)} - \frac{(1-\omega^{(i)}) g(1)}{Z^{(i)}(r)}\right| \\
    & \leq C_8 \underbrace{\left| \omega^{(i+1)} -  \omega^{(i)}\right|}_{\text{VII}} + C_{9} \underbrace{ \left| Z^{(i+1)}(r) - Z^{(i)}(r) \right|}_{\text{IX}}
\end{align*}
for some constants $C_8, C_{9} < \infty$. Starting with $\text{IX}$, by substituting in the definitions in (\ref{mat:PARNInorcon}), we have
\begin{align*}
    \text{IX} & = \left| Z^{(i+1)}(r) - Z^{(i)}(r) \right|  \\
    & = \left| \left( \omega^{(i+1)} g \left( \frac{\pi(\gamma(r))}{\pi(\gamma(r-1))} \left(\frac{A^{(i+1)}_{K_r}}{D^{(i+1)}_{K_r}}\right)^{ \gamma^d(r) } \right) + (1-\omega^{(i+1)}) g(1) \right) \right.  \\
    & - \left. \left( \omega^{(i)}  g \left( \frac{\pi(\gamma(r))}{\pi(\gamma(r-1))} \left(\frac{A^{(i)}_{K_r}}{D^{(i)}_{K_r}}\right)^{ \gamma^d(r) } \right) + (1-\omega^{(i)})  g(1)\right) \right|,
\end{align*}
and applying the triangle inequality yields
\begin{align*}
    \text{IX} \leq& \left| \omega^{(i+1)} g \left( \frac{\pi(\gamma(r))}{\pi(\gamma(r-1))} \left(\frac{A^{(i+1)}_{K_r}}{D^{(i+1)}_{K_r}}\right)^{ \gamma^d(r) } \right) -  \omega^{(i)}  g \left( \frac{\pi(\gamma(r))}{\pi(\gamma(r-1))} \left(\frac{A^{(i)}_{K_r}}{D^{(i)}_{K_r}}\right)^{ \gamma^d(r) } \right)\right|
     + g(1) \Bigg| \omega^{(i+1)} - \omega^{(i)}   \Bigg| \\
    \leq&  C_{10} \underbrace{ \Bigg| \omega^{(i+1)} - \omega^{(i)} \Bigg|}_{\text{VII}}
    + C_{11}  \underbrace{ \left| g \left( \frac{\pi(\gamma(r))}{\pi(\gamma(r-1))} \left(\frac{A^{(i+1)}_{K_r}}{D^{(i+1)}_{K_r}}\right)^{ \gamma^d(r) } \right) -   g \left( \frac{\pi(\gamma(r))}{\pi(\gamma(r-1))} \left(\frac{A^{(i)}_{K_r}}{D^{(i)}_{K_r}}\right)^{ \gamma^d(r) } \right)  \right|}_{\text{VIII}}
\end{align*}
for some constants $C_{10}, C_{11} < \infty$.  The last line follows after applying Lemma \ref{lemma:apx_bound_series}.

The diminishing adaptation of tuning parameter $\omega$ is shown in Lemma \ref{lemma:omega_dimishing} where 
\begin{align}
   \text{VII} \leq C_{12} i^{-\lambda}
\end{align}
for some constant $C_{12} < \infty$ and $\lambda = 0.5$.

We now consider term \text{VIII}. From assumption (A.1), we have
\begin{align*}
    g(t_2) - g(t_1) \leq C_g (t_2 - t_1)
\end{align*}
and therefore
\begin{align}
    |g(t_2) - g(t_1)| \leq C_g |t_2 - t_1|
\end{align}
for any $t_1,t_2 > 0$. We then have
\begin{align*}
    \text{VIII} & \leq C_g \left|  \frac{\pi(\gamma(r))}{\pi(\gamma(r-1))} \left(\frac{A^{(i+1)}_{K_r}}{D^{(i+1)}_{K_r}}\right)^{ \gamma^d(r) } -   \frac{\pi(\gamma(r))}{\pi(\gamma(r-1))} \left(\frac{A^{(i)}_{K_r}}{D^{(i)}_{K_r}}\right)^{ \gamma^d(r) } \right| \\
    & \leq \Pi C_g \left|   \left(\frac{A^{(i+1)}_{K_r}}{D^{(i+1)}_{K_r}}\right)^{ \gamma^d(r) } -    \left(\frac{A^{(i)}_{K_r}}{D^{(i)}_{K_r}}\right)^{ \gamma^d(r) } \right|,
\end{align*}
where the last line follows from Assumption (A.2). Considering two possible values of $\gamma^d(r)$, namely $1$ and $-1$, we show that $\text{VIII}$ is bounded by the maximum of those values
\begin{align*}
    \text{VIII}  & \leq \Pi C_g \max \left\{ \left|    \frac{A^{(i+1)}_{K_r}}{D^{(i+1)}_{K_r}}-    \frac{A^{(i)}_{K_r}}{D^{(i)}_{K_r}}\right| , \left|    \frac{D^{(i+1)}_{K_r}}{A^{(i+1)}_{K_r}} -    \frac{D^{(i)}_{K_r}}{A^{(i)}_{K_r}} \right| \right\}.
\end{align*}
Next, multiplying the common denominator yields
\begin{align*}
     \text{VIII} & \leq \Pi C_g \max \left\{ \Bigg|    \frac{A_{K_r}^{(i+1)}D^{(i)}_{K_r} - A^{(i)}_{K_r} D^{(i+1)}_{K_r}} {D^{(i+1)}_{K_r}D^{(i)}_{K_r}}\Bigg| , \Bigg|    \frac{D^{(i+1)}_{K_r}A^{(i)}_{K_r} - D^{(i)}_{K_r} A^{(i+1)}_{K_r} }{A^{(i+1)}_{K_r}A^{(i)}_{K_r}}\Bigg| \right\}  \\
     & \leq \frac{ \Pi C_g}{\pi_0^2} \max \left\{ \Bigg|    A^{(i+1)}_{K_r}D^{(i)}_{K_r} - A^{(i)}_{K_r} D^{(i+1)}_{K_r}\Bigg| , \Bigg|    D^{(i+1)}_{K_r}A^{(i)}_{K_r} - D^{(i)}_{K_r} A^{(i+1)}_{K_r} \Bigg| \right\},
\end{align*}
which holds because $\pi_0 \leq A_j, D_j \leq 1$ from the proof of Lemma \ref{lemma:PARNI_simerg}. Then, by applying Lemma \ref{lemma:apx_bound_series}, we have
\begin{align*}
     \text{VIII} & \leq \frac{ \Pi C_g}{\pi_0^2} \max \left\{ \Bigg| A^{(i+1)}_{K_r} - A^{(i)}_{K_r} \Bigg| + \Bigg| D^{(i+1)}_{K_r} - D^{(i)}_{K_r} \Bigg| , \Bigg| A^{(i+1)}_{K_r} - A^{(i)}_{K_r} \Bigg| + \Bigg| D^{(i+1)}_{K_r} - D^{(i)}_{K_r} \Bigg| \right\} \\
     & \leq \frac{\Pi C_g}{\pi_0^2}\left(\Bigg| A^{(i+1)}_{K_r} - A^{(i)}_{K_r} \Bigg| + \Bigg| D^{(i+1)}_{K_r} - D^{(i)}_{K_r} \Bigg| \right) \\
     & \leq C_{13}\left( \max_j \left\{\Bigg| A^{(i+1)}_{j} - A^{(i)}_{j} \Bigg| \right\} + \max_j \left\{\Bigg| D^{(i+1)}_{j} - D^{(i)}_{j} \Bigg| \right\}  \right)
\end{align*}
for some constant $C_{13} < \infty$.

As we have previously showed that
\begin{align}
    \max_j \left\{\Bigg| A^{(i+1)}_{j} - A^{(i)}_{j} \Bigg| \right\} &  \leq \frac{1}{\pi_0} \cdot \frac{2}{i+1} \\
    \max_j \left\{\Bigg| D^{(i+1)}_{j} - D^{(i)}_{j} \Bigg| \right\}&  \leq \frac{1}{\pi_0} \cdot \frac{2}{i+1},
\end{align}
this leads to
\begin{align*}
    \text{VIII} & \leq C_{14} \frac{1}{\pi_0} \cdot \frac{2}{i+1} \\
    & \leq C_{15} i^{-1}
\end{align*}
for some constants $C_{14},C_{15} < \infty$.

We complete the proof by stating that $\text{IV} \leq C_{16} i^{-\lambda}$ for some constant $C_{16} < \infty$, and hence $\text{I} \leq C_{17} i^{-\lambda}$ for some constant $C_{17} < \infty$ and $\lambda = 0.5$, which shows that  diminishing adaptation for the PARNI sampler has established.

\end{proof}

\subsection{Proof of Theorem \ref{theo:PARNI_ergslln}}

\begin{proof}

\emph{Simultaneous uniform ergodicity} together with \emph{diminishing adaption} are enough to show that $\pi$ is stationary for each kernel $P^{\text{PARNI}}_{\theta}$ and the adaptive algorithm is ergodic from Theorem 1 in \cite{roberts2007coupling}. Its multiple chain version is also ergodic with respect to $\pi^{\otimes L}$.

The proof of the Strong Law of Large Numbers (SLLN) contains two steps. Firstly, we show that each individual chain satsifies a SLLN, that is
\begin{align}
    \frac{1}{N} \sum_{i=0}^{N-1} f(\gamma^{l,(i)}) \to \pi(f) \quad \text{almost surely as $N \to \infty$}. \label{mat:apx_parni_erg_thm_1}
\end{align}
Then by averaging over $L$ parallel chains, we can show that the SLLN in (\ref{mat:PARNI_SLLN}) is satisfied for the multiple chain version.

We use Theorem 2.7 in \cite{fort2011convergence} to show that (\ref{mat:apx_parni_erg_thm_1}) holds. To do so, three conditions are required.
\begin{enumerate} [label=(Con.\arabic*)]
    \item \label{PARNI_ergslln_con1} The measurable function $V$ can be chosen to be the constant function $V \equiv 1$, where $V$-variation distance norm reduces to the total variation distance. It is obvious that the below is met if with $\lambda_\theta = 1/2$, $b_\theta = 1$, the measure $\nu_\theta$ is uniformly distributed on the sample space $\Gamma$, that is
    \begin{align*}
         \nu_\theta(\gamma) = \frac{1}{2^p}, 
    \end{align*}
    with $\delta_\theta = b$ (the lower bound for a single chain in Lemma \ref{lemma:PARNI_simerg}), then
    \begin{align*}
        P^{\text{PARNI}}_\theta V & \leq \frac{1}{2}V + 1 \\
        P^{\text{PARNI}}_\theta(\gamma, \cdot) & \geq b \nu_\theta(\cdot) I\{V \leq c_\theta\}(\gamma), \quad c_\theta = 2b_\theta (1-\lambda_\theta)^{-1} - 1 = 3
    \end{align*}
    is satisfied.
    \item \label{PARNI_ergslln_con2} Using the same parameters specified in \ref{PARNI_ergslln_con1}, the problem is reduced to
    \begin{align*}
        \sum_{i=1}^\infty i^{-1} \sup_{\gamma \in \Gamma}\|P^{\text{PARNI}}_{\theta^{(i+1)}}(\gamma, \cdot) - P^{\text{PARNI}}_{\theta^{(i)}}(\gamma, \cdot)\| < + \infty.
    \end{align*}
    This is satisfied since the PARNI sampler satisfies diminishing adaption by Lemma \ref{lemma:PARNI_dimadap}.
    \item \label{PARNI_ergslln_con3} Condition A5 in \emph{Fort et al.} \cite{fort2011convergence} is trivially satisfied with the parameters chosen in \ref{PARNI_ergslln_con1}.
\end{enumerate}
We have established \ref{PARNI_ergslln_con1}, \ref{PARNI_ergslln_con2}, and \ref{PARNI_ergslln_con3}, therefore by Corollary 2.8 in \cite{fort2011convergence}, (\ref{mat:apx_parni_erg_thm_1}) holds,
and so does (\ref{mat:PARNI_SLLN}).

\end{proof}

\newpage

\section{Additional numerical results}
\label{appendixlabel3}

This section will provide more numerical results in addition to Section \ref{sec:numerical_studies}.

\subsection{Sensitivity analysis on thinning parameter $\omega$ for simulated datasets}
\label{apx:yang_omega}

\begin{figure}
\includegraphics[width=0.8\columnwidth]{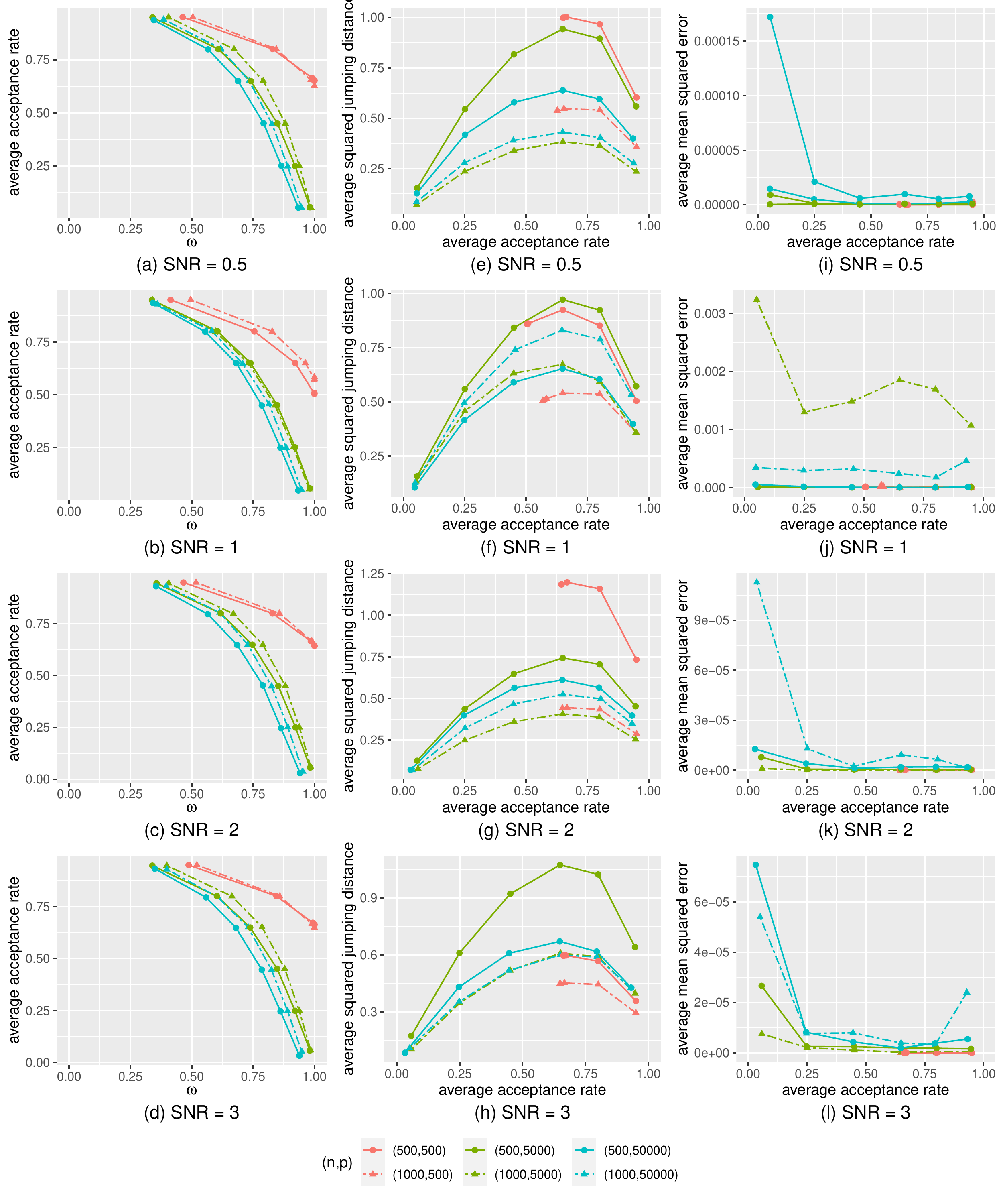}
\centering
\caption{Simulated data: plots of average squared jumping distance and average mean square error against average acceptance rate and $\omega$. (a) - (d) average acceptance rate against $\omega$ for simulated datasets with signal-to-noise ratio of 0.5, 1, 2 and 3; (e) - (h) average squared jumping distance against average acceptance rate for simulated datasets with signal-to-noise ratio of 0.5, 1, 2, 3; (i) - (j) average mean squared error against average acceptance rate for simulated datasets with signal-to-noise ratio of 0.5, 1, 2 and 3.}
\label{fig:yang_senana}
\end{figure}

We repeat the experiment which studies the optimal value of $\omega$ and optimal average acceptance rate in Section \ref{subsec:resl_dataset} on simulated datasets used in Section \ref{subsec:simulated_dataset}. Fig. \ref{fig:yang_senana} shows the effect of manipulating the target average acceptance rate on average squared jumping distance and average mean squared errors. We split the plots into 4 sets where each set of graphs corresponds to a level of signal-to-noise ratio. Panels (a)-(d) show the negative relationship of $\omega$ against average acceptance rate. Panels (e)-(h) plot the average squared jumping distance against the average acceptance rate. Finally, panels (i)-(l) show the average acceptance rate and the average mean squared errors. These plots suggest the similar conclusion in Section \ref{subsec:resl_dataset} for which the average acceptance rate of $0.65$ yields the largest average squared jumping distance. The smallest average mean squared errors are also located around the region of 0.65 average acceptance rate.

\subsection{Additional results from Kiefer-Wolfowitz adaptation scheme}
\label{apx:KW_feasibility}

This section is to examine whether applying the Kiefer-Wolfowitz adaptation scheme to the PARNI sampler would lead to the optimal scaling property of the chains. We repeatedly ran the PARNI-KW sampler for 1,500 iterations with 3 different initial values on the 24 simulated datasets of Section \ref{subsec:simulated_dataset} and the 8 real datasets of Section \ref{subsec:resl_dataset} and recorded the values of $\omega$. The trace plots of these $\omega$ values are given in Figs.  \ref{fig:yang_omega_trace_0_5}, \ref{fig:yang_omega_trace_1}, \ref{fig:yang_omega_trace_2} and \ref{fig:yang_omega_trace_3} for the simulated datasets and Fig.  \ref{fig:real_omega_trace} for the real datasets. The black horizontal lines in these plots indicate the empirical optimal values of $\omega$ gathered for each dataset from Fig. \ref{fig:opt_acc_rate} and \ref{fig:yang_senana}. The optimal values decrease along with the dimensionality of $p$ and they are also influenced by the correlation structure for which more complicated correlation structures imply smaller values of $\omega$. It appears that the values of $\omega$ are moving towards the black lines and converging to them regardless of initial values.

There is a significant trend that the $\omega$ values are approaching the region around the optimal values fairly quickly. Surprisingly, the parameter $\omega$ converges even faster in high-dimensional problems, for example, when $p=50,000$ in simulated datasets and the SNP dataset. But there is still a rare chance that the Kiefer-Wolfowitz scheme does not lead to the optimal choice. Some of $\omega$ values become trapped on the value of $1$. This issue is mainly caused by two reasons. Firstly, the ASJD estimators are often too noisy to capture the true signal in the expected squared jumping distances. These estimators can be viewed as simple Monte Carlo with only a few samples and therefore we may obtain estimates with extremely large estimation errors. Large errors introduce uncertainty into the true direction that should be updated and make $\omega$ take longer to converge or converge to a suboptimal value. Secondly, the use of the logistic transformation makes $\omega$ difficult to be updated in the boundary areas and therefore $\omega$ easily get trapped in values of 0 or 1.

Overall, the Kiefer-Wolfowitz adaptation scheme is relatively robust in tuning $\omega$ for the PARNI sampler, and we believe it can also be applied to other adaptive MCMC schemes in tuning the scaling parameters.

\begin{figure}
\includegraphics[width=\columnwidth]{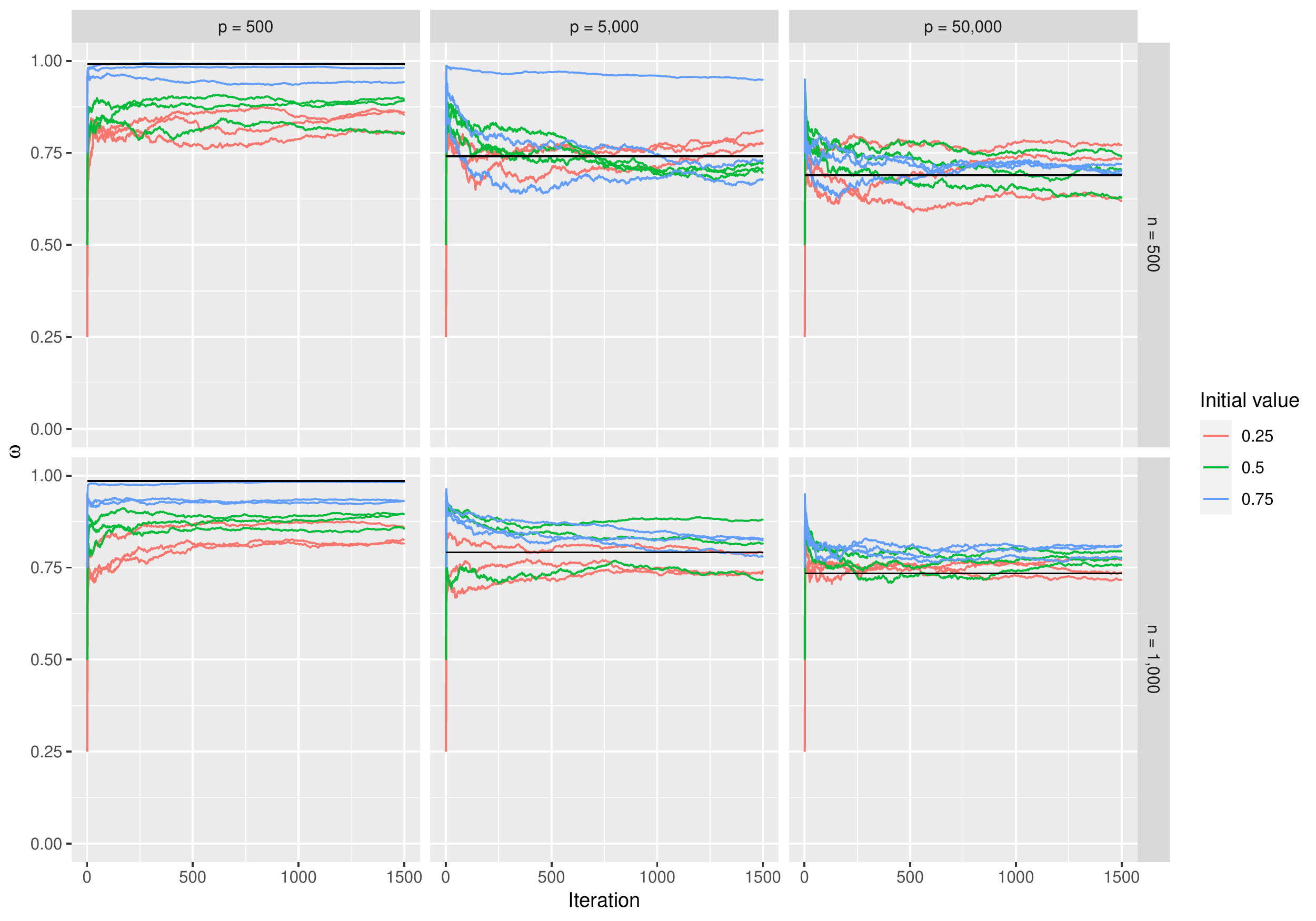}
\centering
\caption{Simulated data: trace plots of $\omega$ from the point-wise implementation of the Adaptive Random Neighbourhood Informed proposal with Kiefer-Wolfowitz update sampler for the first 1,500 iterations on simulated datasets with signal-to-noise ratio of 0.5 and three choices of initial values (0.25, 0.5 and 0.75). The black line indicates the optimal values of $\omega$ for each dataset.}
\label{fig:yang_omega_trace_0_5}
\end{figure}

\begin{figure}
\includegraphics[width=\columnwidth]{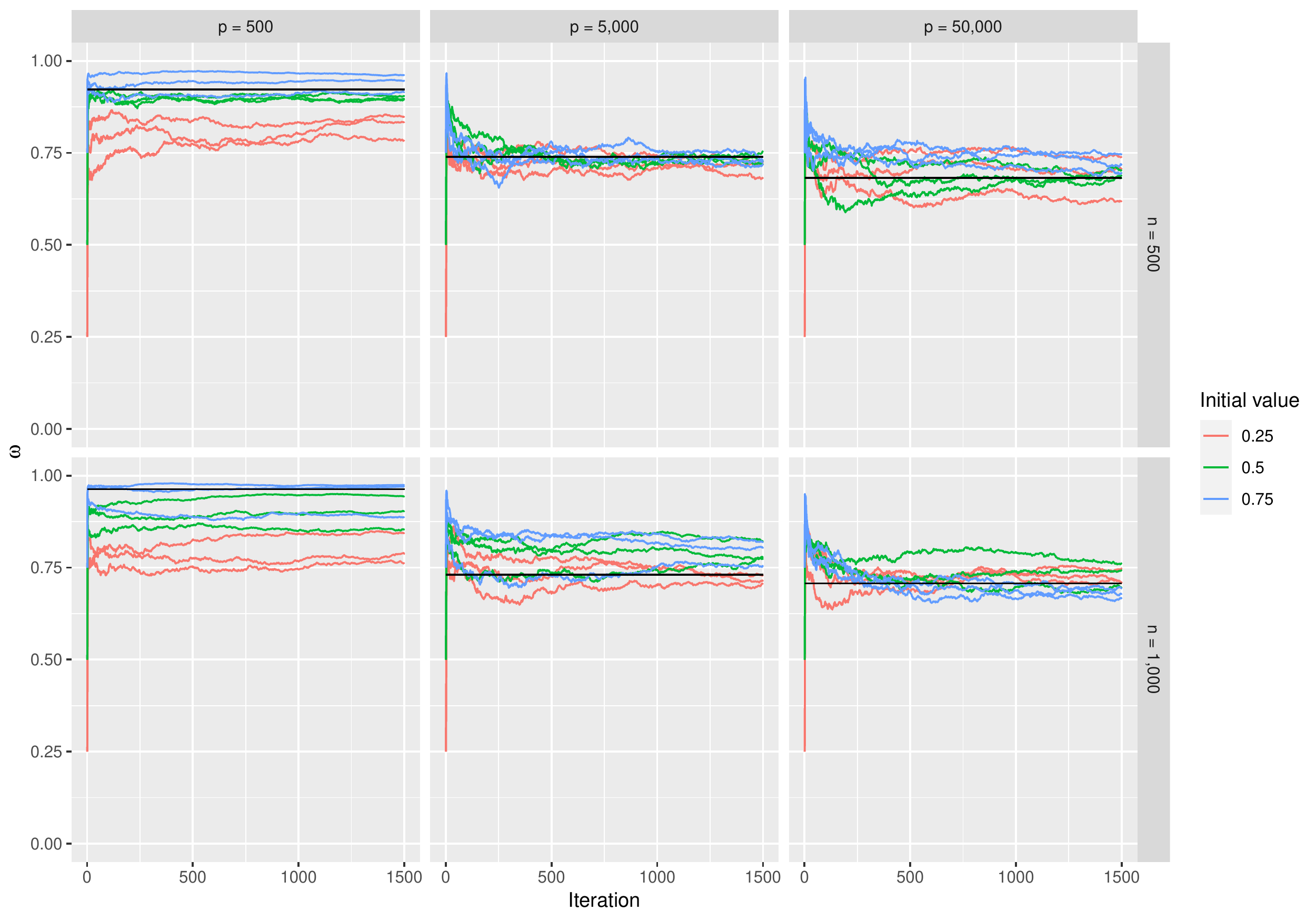}
\centering
\caption{Simulated data: trace plots of $\omega$ from the point-wise implementation of the Adaptive Random Neighbourhood Informed proposal with Kiefer-Wolfowitz update sampler for the first 1,500 iterations on simulated datasets with signal-to-noise ratio of 1 and three choices of initial values (0.25, 0.5 and 0.75). The black line indicates the optimal values of $\omega$ for each dataset.}
\label{fig:yang_omega_trace_1}
\end{figure}

\begin{figure}
\includegraphics[width=\columnwidth]{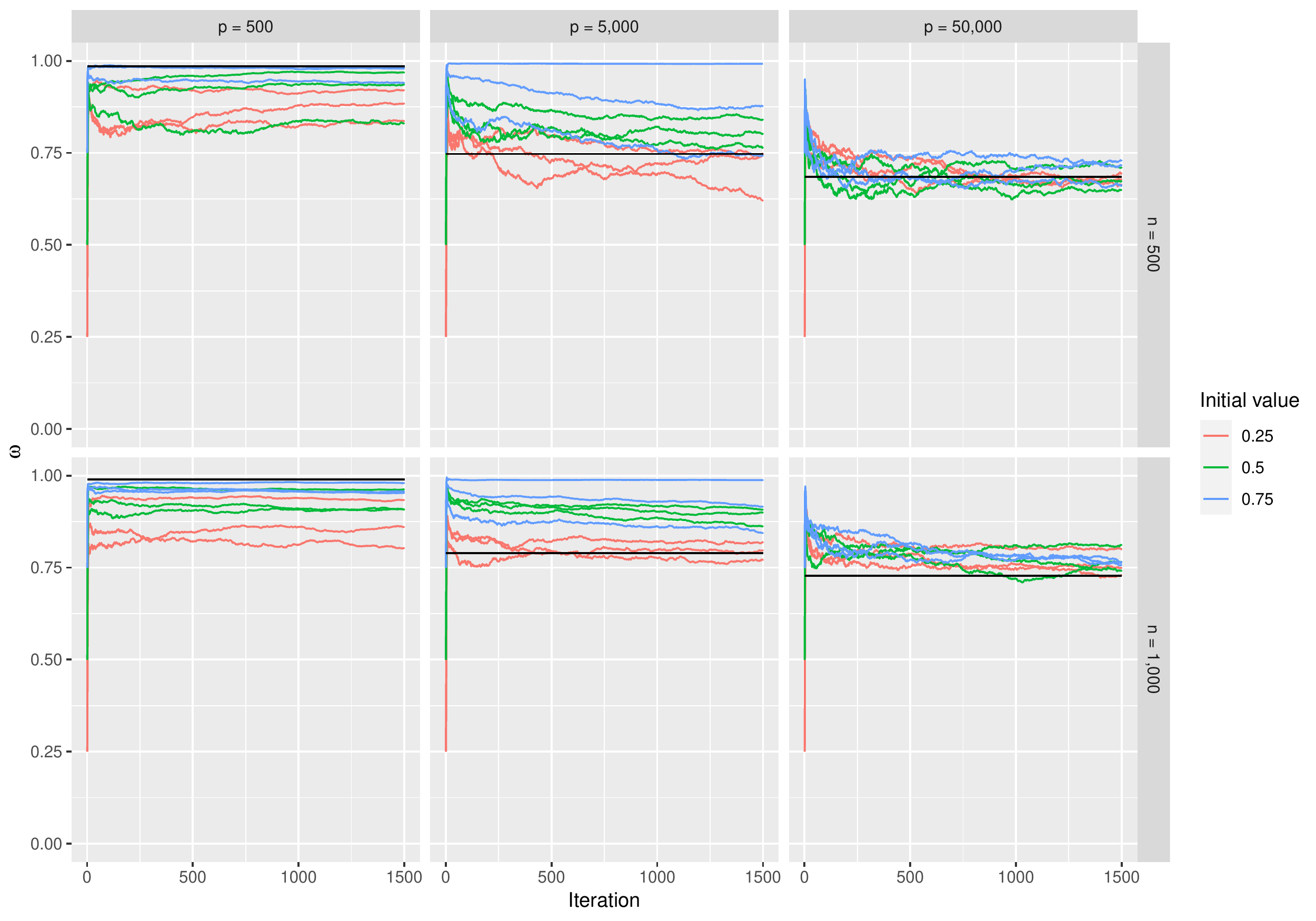}
\centering
\caption{Simulated data: trace plots of $\omega$ from the point-wise implementation of the Adaptive Random Neighbourhood Informed proposal with Kiefer-Wolfowitz update sampler for the first 1,500 iterations on simulated datasets with signal-to-noise ratio of 2 and three choices of initial values (0.25, 0.5 and 0.75). The black line indicates the optimal values of $\omega$ for each dataset.}
\label{fig:yang_omega_trace_2}
\end{figure}

\begin{figure}
\includegraphics[width=\columnwidth]{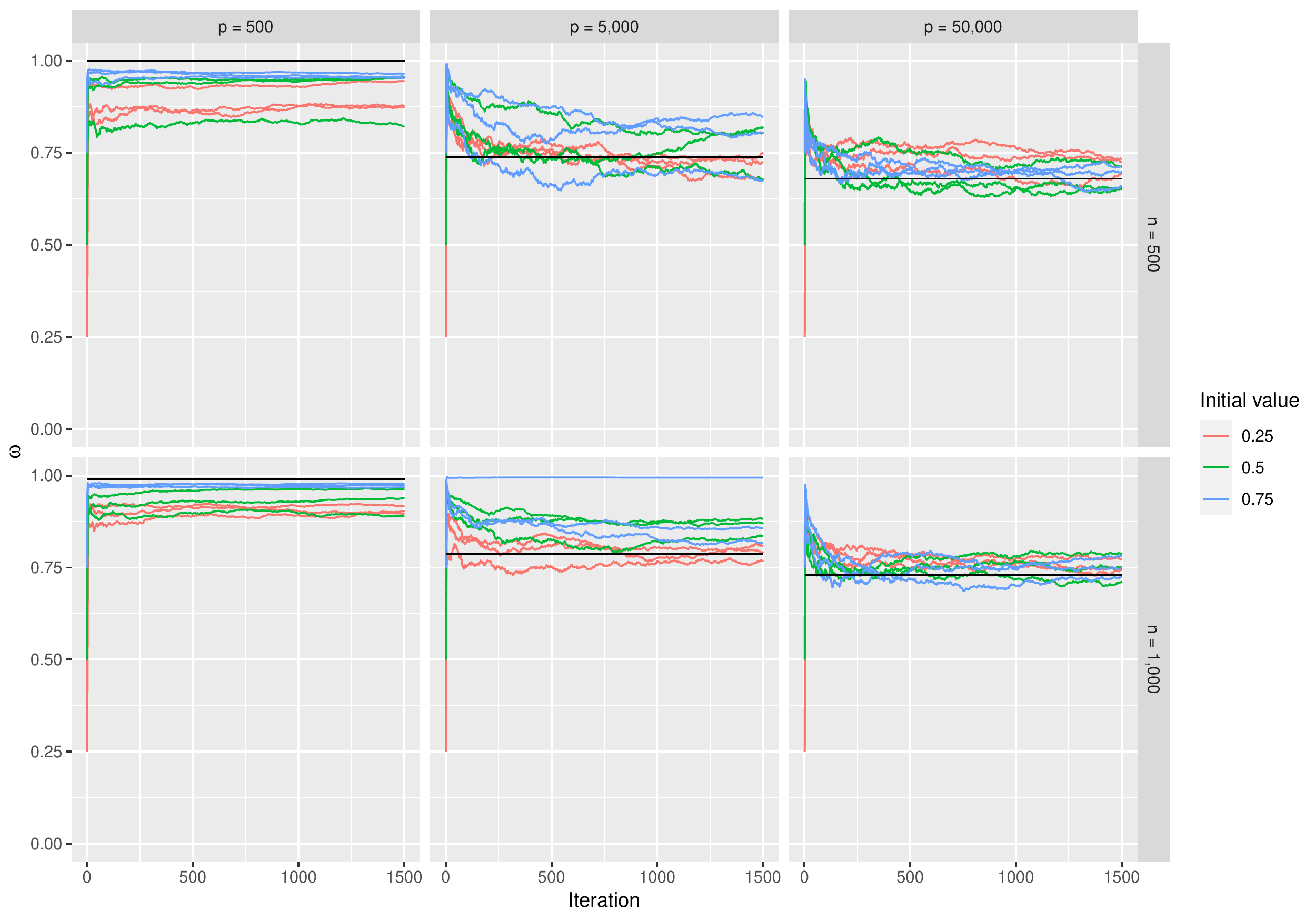}
\centering
\caption{Simulated data: trace plots of $\omega$ from the point-wise implementation of the Adaptive Random Neighbourhood Informed proposal with Kiefer-Wolfowitz update sampler for the first 1,500 iterations on simulated datasets with signal-to-noise ratio of 3 and three choices of initial values (0.25, 0.5 and 0.75). The black line indicates the optimal values of $\omega$ for each dataset.}
\label{fig:yang_omega_trace_3}
\end{figure}

\begin{figure}
\includegraphics[width=\columnwidth]{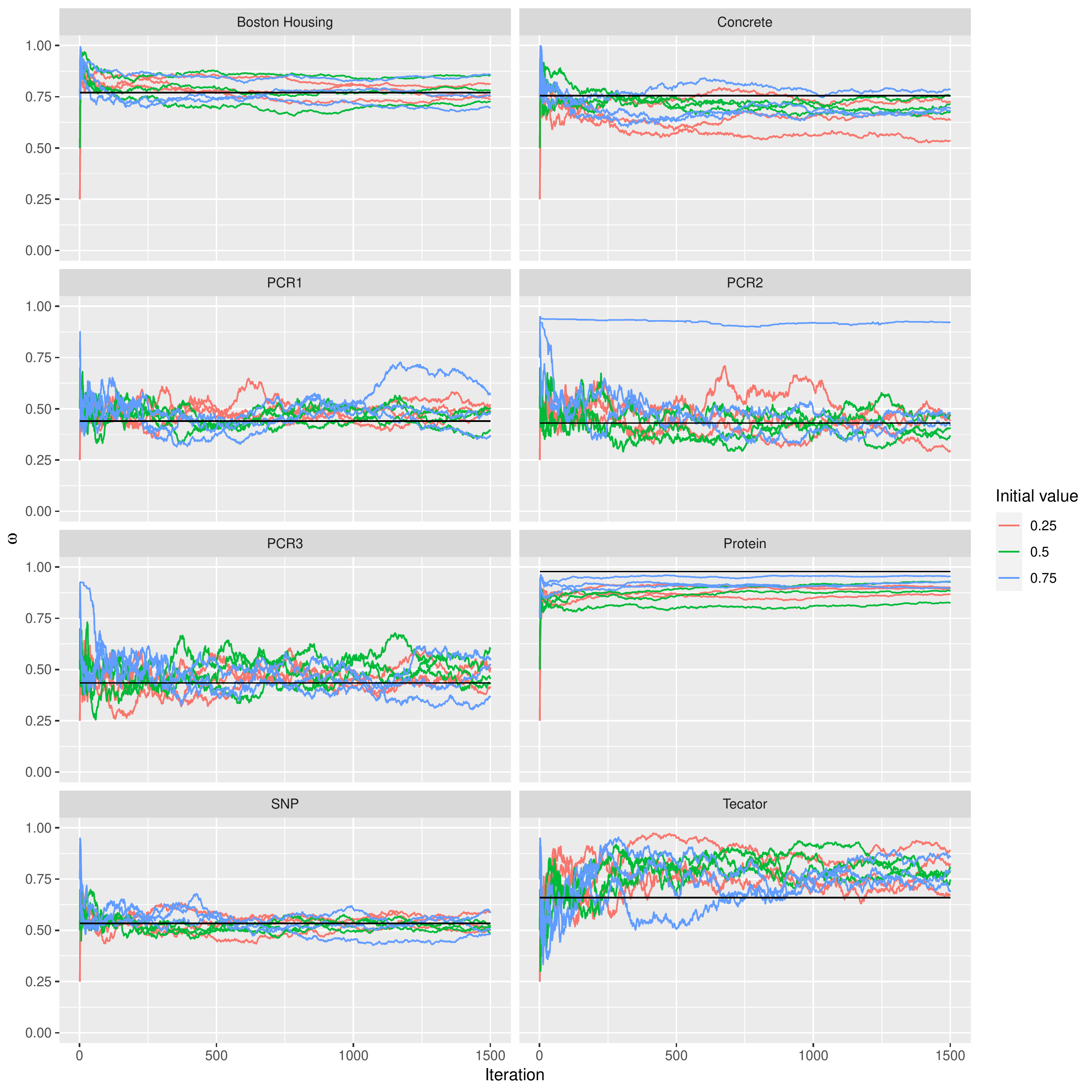}
\centering
\caption{Real data: trace plots of $\omega$ from the point-wise implementation of the Adaptive Random Neighbourhood Informed proposal with Kiefer-Wolfowitz update sampler for the first 1,500 iterations on real datasets with three choices of initial values (0.25, 0.5 and 0.75). The black line indicates the optimal values of $\omega$ for each dataset.}
\label{fig:real_omega_trace}
\end{figure}

\subsection{More results from simulated datasets}
\label{apx:yang_results}

In addition to Fig. \ref{fig:yang_trace}, Figs. \ref{fig:yang_trace_0_5}, \ref{fig:yang_trace_1} and \ref{fig:yang_trace_3} are trace plots of log posterior model probabilities from the Add-Delete-Swap, ASI, PARNI-RM and PARNI-KW schemes on the simulated datasets of Section \ref{subsec:simulated_dataset} when $\text{SNR} = 0.5, 1$ and $3$. Generally speaking, the PARNI-RM and PARNI-KW algorithms mix better than the Add-Delete-Swap and ASI schemes on all datasets. Except for the datasets for which the posterior distributions do not concentrate in a few models (when $\text{SNR} = 0.5$), the Add-Delete-Swap scheme always get struck on the empty model and struggles to include important variables and reach the high probability region within the first 1,500 iterations. The ASI algorithm mixes quite well when $p$ is relative small, but this algorithm is taking longer to converge and it is inefficient to jump between different models when $p$ reaches $50,000$. On the other hand, the PARNI-RM and PARNI-KW samplers only take dozens of iterations to converge properly in all settings. In conclusion, the plots suggest that both the PARNI schemes outperform Add-Delete-Swap and ASI in terms of the mixing time and convergence rate on the simulated datasets. They always propose models with high probability of being accepted and therefore sufficiently explore the sample space.

\begin{figure}
\includegraphics[width=\columnwidth]{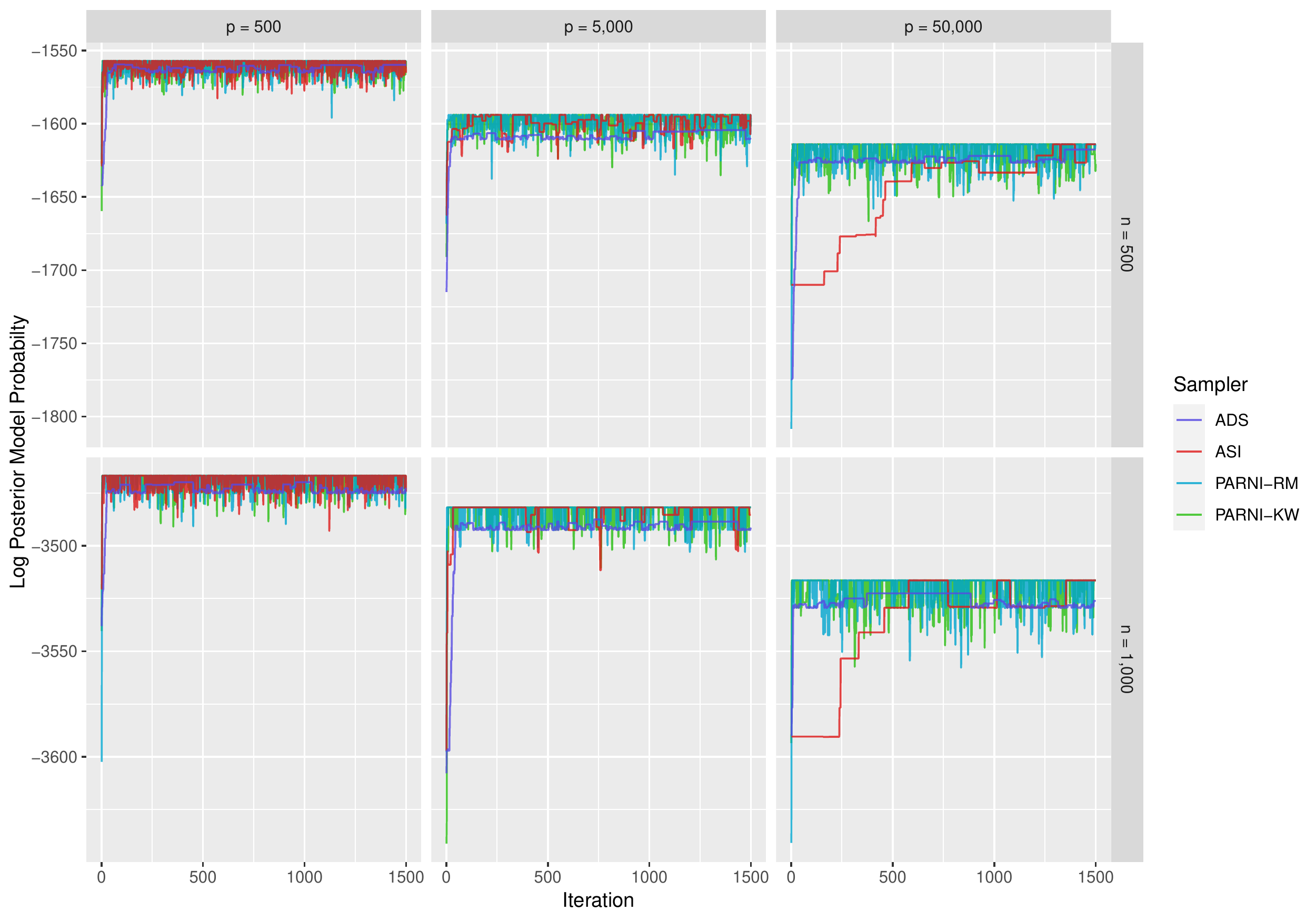}
\centering
\caption{Simulated data: trace plots of log posterior model probability from the Add-Delete-Swap (ADS), Adaptively Scaled Individual (ASI) adaptation, Pointwise implementation of Adaptive Random Neighbourhood Informed proposal with Robbins-Monro update (PARNI-RM) and Pointwise implementation of Adaptive Random Neighbourhood Informed proposal with Kiefer-Wolfowitz update (PARNI-KW) samplers for the first 1,500 iterations on simulated datasets with signal-to-noise ratio of 0.5.}
\label{fig:yang_trace_0_5}
\end{figure}

\begin{figure}
\includegraphics[width=\columnwidth]{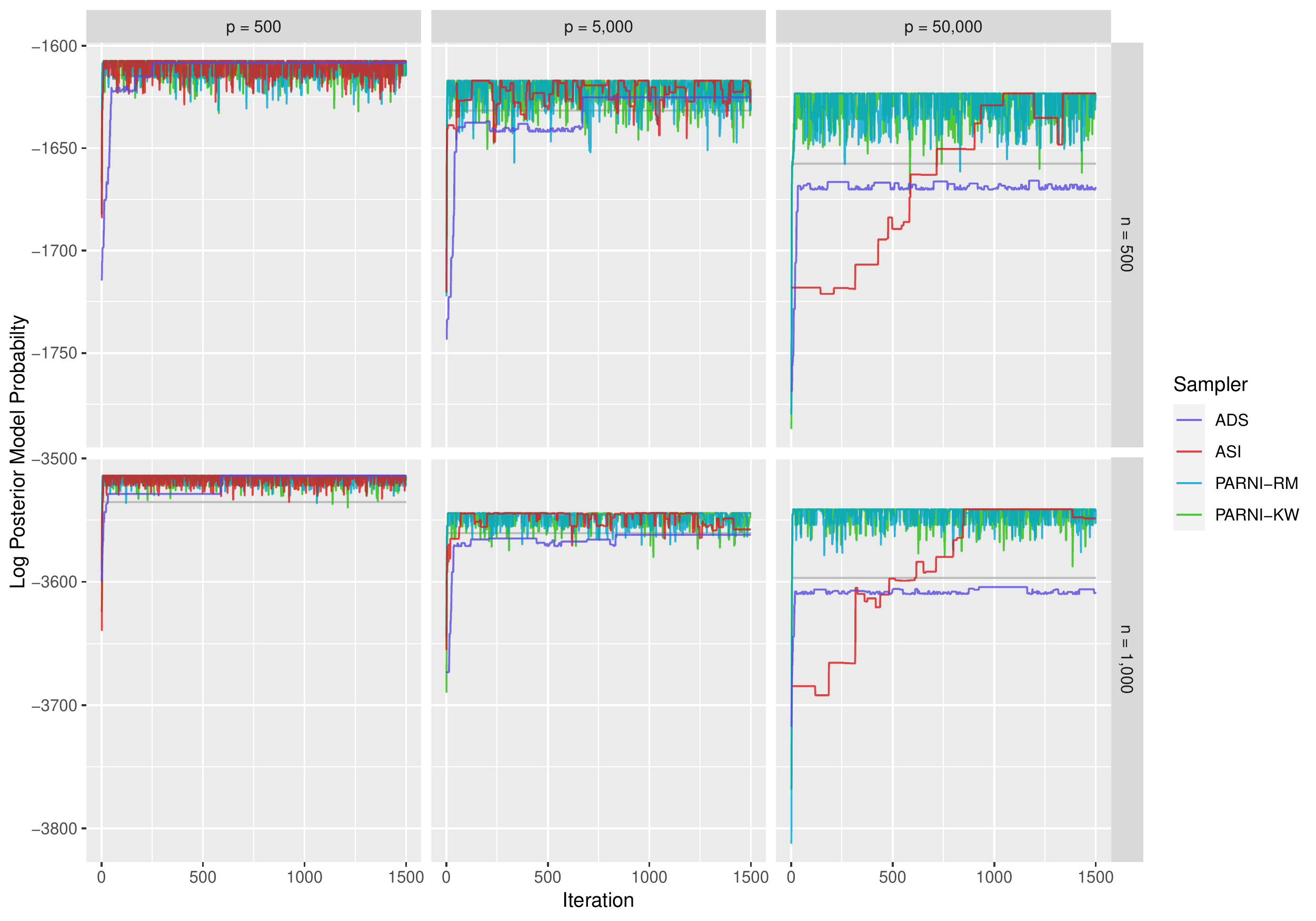}
\centering
\caption{Simulated data: trace plots of log posterior model probability from the Add-Delete-Swap (ADS), Adaptively Scaled Individual (ASI) adaptation, Pointwise implementation of Adaptive Random Neighbourhood Informed proposal with Robbins-Monro update (PARNI-RM) and Pointwise implementation of Adaptive Random Neighbourhood Informed proposal with Kiefer-Wolfowitz update (PARNI-KW) samplers for the first 1,500 iterations on simulated datasets with signal-to-noise ratio of 1.}
\label{fig:yang_trace_1}
\end{figure}

\begin{figure}
\includegraphics[width=\columnwidth]{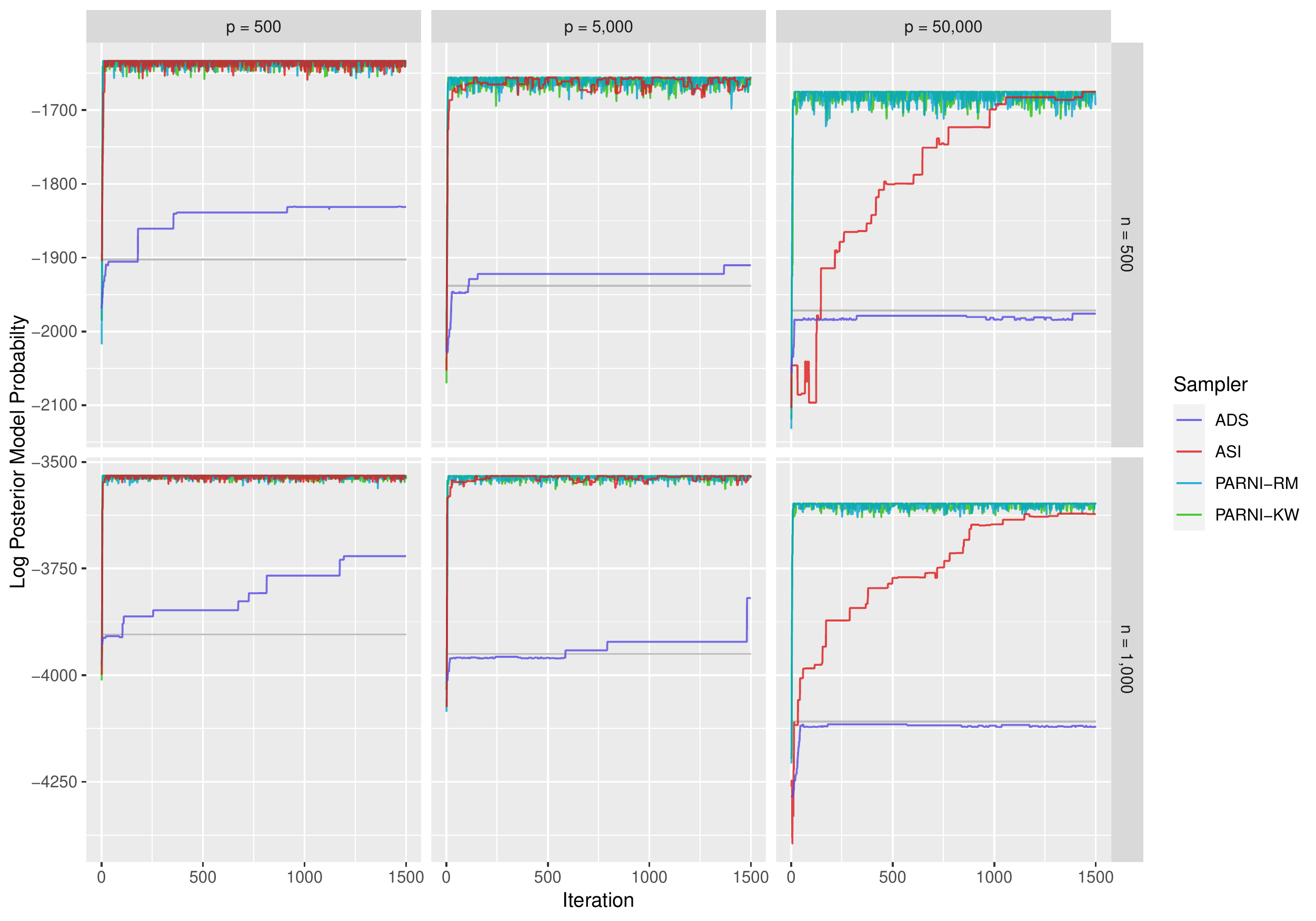}
\centering
\caption{Simulated data: trace plots of log posterior model probability from the Add-Delete-Swap (ADS), Adaptively Scaled Individual (ASI) adaptation, Pointwise implementation of Adaptive Random Neighbourhood Informed proposal with Robbins-Monro update (PARNI-RM) and Pointwise implementation of Adaptive Random Neighbourhood Informed proposal with Kiefer-Wolfowitz update (PARNI-KW) samplers for the first 1,500 iterations on simulated datasets with signal-to-noise ratio of 3.}
\label{fig:yang_trace_3}
\end{figure}

\end{document}